\documentclass[onefignum,onetabnum]{siamonline220329}
\usepackage{xcolor}

\usepackage{lipsum}
\usepackage{amsfonts}
\usepackage{graphicx}
\usepackage{epstopdf}
\usepackage{algorithmic}
\usepackage{mathtools}
\usepackage{amssymb}
\usepackage{mathrsfs}
\usepackage{multirow}
\usepackage{booktabs}
\usepackage{makecell}
\usepackage{tikz}
\usepackage{subfigure}
\usepackage{pgfplots}
\usepackage{scalefnt}
\usepgfplotslibrary{fillbetween}
\ifpdf
  \DeclareGraphicsExtensions{.eps,.pdf,.png,.jpg}
\else
  \DeclareGraphicsExtensions{.eps}
\fi

\definecolor{darkblue}{rgb}{0.0, 0.0, 0.55}
\definecolor{bordeaux}{rgb}{0.34, 0.01, 0.1}
\definecolor{color1}{RGB}{145,30,180}
\definecolor{color2}{RGB}{245,130,48}
\definecolor{color3}{RGB}{230,25,75}
\definecolor{lightgreen}{RGB}{144, 238, 144}

\usepackage{enumitem}
\setlist[enumerate]{leftmargin=.5in}
\setlist[itemize]{leftmargin=.5in}

\newsiamremark{remark}{Remark}
\newsiamremark{hypothesis}{Hypothesis}
\crefname{hypothesis}{Hypothesis}{Hypotheses}
\newsiamthm{claim}{Claim}

\def\bR{{\mathbb{R}}}

\def\C{{\mathbb{C}}}
\def\N{{\mathbb{N}}}

\def\x{{\mathbf{x}}}

\def\bu{{\mathbf{u}}}
\def\bv{{\mathbf{v}}}

\def\bc{{\mathbf{c}}}
\def\bd{{\mathbf{d}}}

\def\bH{{\mathbf{H}}}

\def\cB{{\mathcal{B}}}

\def\MM{{\mathbf{M}}}
\def\sC{{\mathscr{C}}}
\def\sM{{\mathscr{M}}}

\def\bi{\hbox{\bf{i}}}

\newcommand{\bra}[1]{\langle #1|}
\newcommand{\ket}[1]{|#1\rangle}

\def\NF{\hbox{\rm{NF}}}

\def\int{\hbox{\rm{int}}}

\def\sI{{\mathscr{I}}}
\def\sP{{\mathscr{P}}}

\headers{Scalable Ground-State Certification}{J. Wang et al.}

\title{Scalable Ground-State Certification of Quantum Spin Systems via Structured Noncommutative Polynomial Optimization\thanks{Submitted to the editors DATE.}}

\author{Jie Wang\thanks{State Key Laboratory of Mathematical Sciences, Academy of Mathematics and Systems Science, Chinese Academy of Sciences, Beijing, China (\email{wangjie212@amss.ac.cn}, \url{https://wangjie212.github.io/jiewang/}).}
\and David Jansen\thanks{ICFO-Institut de Ciencies Fotoniques, The Barcelona Institute of Science and Technology, Av. Carl Friedrich Gauss 3, 08860 Castelldefels (Barcelona), Spain (\email{djansenphysics@gmail.com})}
\and Ir\'en\'ee Fr\'erot\thanks{Laboratoire Kastler Brossel, Sorbonne Universit\'e, CNRS, ENS-PSL Research University, Coll\`ege de France, 4 Place Jussieu, 75005 Paris, France (\email{irenee.frerot@lkb.upmc.fr})}
\and Marc-Olivier Renou\thanks{Inria Paris-Saclay, Bâtiment Alan Turing, 1, rue Honoré d’Estienne d’Orves – 91120 Palaiseau, France (\email{marc-olivier.renou@inria.fr})}
\and Victor Magron\thanks{LAAS-CNRS \& Institute of Mathematics from Toulouse, France (\email{vmagron@laas.fr})}
\and Antonio Ac\'{\i}n\thanks{ICFO-Institut de Ciencies Fotoniques, The Barcelona Institute of Science and Technology, Av. Carl Friedrich Gauss 3, 08860 Castelldefels (Barcelona), Spain, and
ICREA - Instituci\'o Catalana de Recerca i Estudis Avan\c{c}ats, 08010 Barcelona, Spain (\email{antonio.acin@icfo.eu})}}

\usepackage{amsopn}
\DeclareMathOperator{\diag}{diag}

\begin{document}

\maketitle

\begin{abstract}
A fundamental challenge in quantum physics is determining the ground-state properties of many-body systems. Whereas standard variational approaches posit a wave-function ansatz and minimize over the possible states expressible by that ansatz, the problem can alternatively be formulated as a noncommutative polynomial optimization problem and treated through a hierarchy of semidefinite programming relaxations. In contrast to variational calculations, these relaxations provide lower bounds on ground-state energies and both lower and upper bounds on observable expectation values. However, this approach typically suffers from severe scalability issues, limiting its applicability to small-to-medium-scale systems. In this article, we demonstrate that systematically leveraging the inherent structures of the system can substantially mitigate these scalability challenges and thus permits computing meaningful bounds for quantum spin systems on square lattices of size up to $16\times16$.
\end{abstract}

\begin{keywords}
noncommutative polynomial optimization, structured semidefinite relaxation, sum of Hermitian squares, quantum spin system, ground-state energy
\end{keywords}

\begin{MSCcodes}
90C23, 81-08, 47N10
\end{MSCcodes}

\section{Introduction}
Determining the ground-state properties of quantum many-body systems is one of the most fundamental problems in condensed matter physics. Ground states are essential in understanding the low-temperature behavior of the system and can contain a variety of rich structures such as topological order~\cite{wen2017}, charge density waves~\cite{Gruner_88}, or superconductivity~\cite{bardeen_57}. Furthermore, when tuning the interaction strength between different degrees of freedom, various phases can emerge~\cite{sachdev2011quantum}, including ordered states or quantum spin liquids~\cite{savary_17}.
Several numerical methods have been established to compute ground states of quantum many-body systems in the past few decades.
Exact diagonalization of a Hamiltonian~\cite{sandvik_10} yields its exact ground state but is limited to small systems. Variational methods, including the density matrix renormalization group (DMRG)~\cite{WhiteDMRG}, projected entangled-pair states (PEPS)~\cite{Verstraete_04}, variational Monte Carlo (VMC), and neural-network quantum states (NNQS)~\cite{chooetal2019,Hibat-Allah_20}, use a parameterized wave function and optimize its parameters to minimize the energy. Quantum variational methods, such as the variational quantum eigensolver, also play an important role in determining ground states on quantum hardware~\cite{Peruzzo2014}. Although variational methods are highly effective in many important cases, their accuracy is essentially limited by the expressiveness of the ansatz~\cite{wu2023variational}. Quantum Monte Carlo (QMC) methods~\cite{sandvik2026high}, which simulate imaginary-time evolution, often encounter the sign problem for frustrated or fermionic systems~\cite{henelius2000sign,troyer2005computational}. Strictly speaking, all of the above methods except exact diagonalization essentially provide only upper bounds or estimates on the ground-state energy. Furthermore, after a candidate ground state has been obtained, it can be difficult to determine how close its energy is to the true ground-state energy and whether its properties, such as magnetization, accurately reflect those of the true ground state.

The semidefinite programming (SDP) relaxation method of~\cite{navascues2008convergent,pironio2010convergent}, also known as the Navascu\'es--Pironio--Ac\'\i n (NPA) hierarchy, through the lens of noncommutative polynomial
optimization stands out as a complementary approach by providing certified lower bounds on the
ground-state energy. At some relaxation order $d$, it retains expectation values of operator words of length at most $2d$, arranges them in a positive semidefinite (PSD) moment matrix, and imposes the algebraic relations among the operators as linear constraints. Every physical quantum state produces a feasible moment matrix, whereas a finite-order relaxation may also admit truncated moment data that do not arise from a physical state; minimizing over this enlarged set therefore gives a rigorous lower bound. Increasing $d$ adds longer words and tighter consistency constraints.

The framework was extended in~\cite{wang2024certifying} to provide lower and upper bounds on the ground-state expectation of any observable expressible as a polynomial in a family of basic observables. In \cite{mortimer_25}, it was then
demonstrated how the framework can be applied to bound observables acting on steady states
of open-system dynamics, and in \cite{jansen_25} it was used to study ground-state phase diagrams. See
also a series of other relevant development \cite{cho2024coarse,gao2025bootstrapping,han2020quantum,conformal,nancarrow2023bootstrapping,scheer2024hamiltonian,zhang2025bootstrapping} under the name of quantum
bootstrap methods. However, the relaxation method typically suffers from severe scalability
issues due to the rapidly growing size of SDPs, but \cite{wang2024certifying} also showed that the situation could
be greatly improved if various structures of the system are taken into account when building
SDP relaxations.
See \cite{baumgratz2012lower} for a pioneering work on the use of translational invariance to reduce SDP sizes. 

This work continues the line of research opened by \cite{wang2024certifying}. We show that by more carefully
exploiting the inherent structures of the considered system, both the scalability and the accuracy of the relaxation method could be further enhanced, reaching tighter bounds for the same
system sizes and treating larger systems. We demonstrate these improvements on four Heisenberg models:
the Heisenberg chain, the $J_1$--$J_2$ Heisenberg chain, the square-lattice Heisenberg model, and the square-lattice $J_1$--$J_2$ Heisenberg model. Our contributions are summarized as follows:

$\bullet$ In addition to the structures that were already exploited in~\cite{wang2024certifying}, we more thoroughly exploit
the sign symmetry, the conjugate symmetry, the permutation symmetry, and the dihedral
symmetry to further reduce the SDP size. We explain in detail how a dramatic size reduction can be achieved by fully exploiting those algebraic structures. Particularly in
the 2D case, we are able to perform a second round of block-diagonalization by exploiting the
translation symmetry.

$\bullet$ In \cite{wang2024certifying}, the positivity constraint on reduced density matrices is employed to strengthen
the SDP relaxations. We show that this positivity constraint has a block-diagonal structure due to the U(1)-symmetry. Moreover, we strengthen the SDP relaxations further by
imposing the state optimality conditions of~\cite{araujo2023first,fawzi2024certified}.

$\bullet$ Our ground-state energy bounds for the Heisenberg chain are notably more accurate than those in~\cite{wang2024certifying}. For square-lattice Heisenberg models, we can now treat lattices as large as $16\times16$, whereas \cite{wang2024certifying} was limited to the $10\times10$ lattice. The accuracy of the 2D results is also significantly improved.

$\bullet$ Complementing the physical perspective of \cite{wang2024certifying}, this article provides a rigorous exposition of all technical intricacies from a mathematical perspective, to lower the barrier to replication. Our long-term goal is to establish structured noncommutative polynomial optimization as one of the foundational tools for quantum many-body computation.

\section{Notation and preliminaries}
In this section, we collect some notation, definitions, and basic results that will be used in the rest of this article.

\subsection{Noncommutative polynomials}
Let $\N\coloneqq\{0,1,2,\cdots\}$. For $n\in\N\setminus\{0\}$, let $[n]\coloneqq\{1,2,\ldots,n\}$. We denote by $\bH_+^n$ the set of $n\times n$ PSD Hermitian matrices and write $A\succeq0$ to indicate that $A\in\bH_+^n$. For a matrix $A$ (resp. vector $\bv$), $A^*$ (resp. $\bv^*$) denotes its conjugate transpose. For a complex number $c$, $\bar{c}$ means its conjugate.
We use $I_n$ to stand for the $n\times n$ identity matrix. 
Let $\x=(x_1, \ldots, x_n)$ be a tuple of noncommuting variables. The set of all finite words, or monomials, in $\x$ is denoted by $\langle\x\rangle$. The empty word is
denoted by $1$. The degree of a monomial $w\in\langle\x\rangle$, denoted by $\deg(w)$, is the length of $w$ as a word. We denote by $\C\langle\x\rangle$ the ring of complex polynomials in the noncommuting variables $\x$. An element $f\in\C\langle\x\rangle$ can be written as $f=\sum_{w\in\langle\x\rangle}f_ww$ with $f_w\in\C$, which is called a {\em noncommutative polynomial}, and the degree of $f$ is defined as $\deg(f)\coloneqq\max\{\deg(w):f_w\ne0\}$. The ring $\C\langle\x\rangle$ is equipped with the involution $\star$ that fixes $\{x_1, \ldots, x_n\}$ pointwise, maps coefficients to their complex conjugates, and reverses words, so that $\C\langle\x\rangle$ is the $\star$-algebra freely generated by $n$ symmetric letters $x_1, \ldots, x_n$. For $d\in\N$, let $W_d$ be the column vector consisting of all words of length at most $d$ arranged with respect to the lexicographic order.
A noncommutative polynomial $f$ is called a {\em sum of Hermitian squares (SOHS)} if there exist $g_1, \ldots, g_t\in\C\langle\x\rangle$ such that $f = g_1^{\star}g_1+g_2^{\star}g_2+\cdots+g_t^{\star}g_t$. Deciding whether a given
noncommutative polynomial $f$ is a SOHS can be cast as an SDP due to the following lemma.
\begin{lemma}[\cite{helton2002positive}, Lemma 2.1]
Let $f\in\C\langle\x\rangle$ with $\deg(f)=2d$. Then $f$ is a SOHS if and only if there exists $G\in\bH_+^{|W_d|}$ ($|W_d|$ stands for the dimension of $W_d$) satisfying
\begin{equation}\label{sec2-eq1}
f=W_d^{\star}GW_d,
\end{equation}
where $W_d^{\star}$ is the row vector consisting of $u^{\star}, u\in W_d$.
\end{lemma}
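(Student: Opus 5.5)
The plan is to prove the two implications separately, using the standard Gram-matrix correspondence between Hermitian squares and PSD quadratic forms in the word vector $W_d$. The one genuinely nontrivial ingredient is a degree bound on the $g_i$.

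For the direction ($\Leftarrow$), suppose $f=W_d^\star G W_d$ with $G\succeq 0$. I would factor $G=\sum_{i=1}^t \bc_i\bc_i^*$, for instance via a spectral decomposition $G=\sum_i\lambda_i\bu_i\bu_i^*$ with $\lambda_i\ge 0$ and $\bc_i=\sqrt{\lambda_i}\,\bu_i$. Then I define $g_i\coloneqq \bc_i^* W_d=\sum_{u\in W_d}\overline{(\bc_i)_u}\,u$. Since the involution conjugates coefficients and sends $u\mapsto u^\star$, we get $g_i^\star=W_d^\star\bc_i$. Hence $g_i^\star g_i=W_d^\star \bc_i\bc_i^* W_d$, and summing over $i$ gives $f=\sum_i g_i^\star g_i$, so $f$ is a SOHS. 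The only thing to check here is the bookkeeping of conjugation under $\star$.

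For the direction ($\Rightarrow$), suppose $f=\sum_{i=1}^t g_i^\star g_i$. The key step is to show that every $g_i$ has degree at most $d$. Let $m=\max_i\deg(g_i)$ and suppose $m>d$. For each $i$ with $\deg g_i=m$, write the top-degree part as $\sum_{|w|=m} a_{i,w}\,w$. In $\sum_i g_i^\star g_i$, the coefficient of the word $w^\star w$ with $|w|=m$ comes only from products $u^\star v$ with $|u|=|v|=m$ and $u^\star v=w^\star w$. Because the words are free and $|u|=|v|=m$, this forces $u=v=w$. The coefficient is therefore $\sum_i |a_{i,w}|^2$, which is positive for some $w$. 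This term has degree $2m>2d=\deg f$, a contradiction. So $\deg g_i\le d$ for all $i$, and we may write $g_i=\bc_i^*W_d$ for some vector $\bc_i\in\C^{|W_d|}$. Setting $G=\sum_i\bc_i\bc_i^*\succeq0$ then gives $f=W_d^\star G W_d$.

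I expect the main obstacle to be this degree-cancellation argument, which ensures no top-degree terms cancel. It relies on the unique factorization of a word of length $2m$ into two halves of length $m$, which holds in the free monoid; this is exactly why the free setting works. Everything else is linear algebra.
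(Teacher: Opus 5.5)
Your proof is correct. The paper gives no proof of its own and simply cites Lemma~2.1 of \cite{helton2002positive}; your argument is the standard one behind that citation: the leading terms $\sum_i |a_{i,w}|^2\, w^\star w$ cannot cancel in the free algebra, which forces $\deg g_i\le d$, and then the Gram-matrix factorization $G=\sum_i \bc_i\bc_i^*$ with $g_i=\bc_i^*W_d$ finishes it. Your remark that the argument depends on freeness is apt. It explains why the paper's Pauli-algebra analogue only asserts a Gram representation for \emph{some} $d$: there, relations such as $(\sigma_i^a)^2=1$ destroy the non-cancellation of leading terms, so no degree bound on the $g_i$ follows.
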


\subsection{The Pauli algebra for quantum $\frac{1}{2}$-spin systems}
The quantum state of a spin-$\frac{1}{2}$ particle is captured by the Pauli matrices $\sigma^x,\sigma^y,\sigma^z$ which are defined as
\begin{equation}
    \sigma^x=\begin{bmatrix}0&1\\1&0\end{bmatrix},\quad
     \sigma^y=\begin{bmatrix}0&-\bi\\\bi&0\end{bmatrix},\quad
      \sigma^z=\begin{bmatrix}1&0\\0&-1\end{bmatrix}.
\end{equation}
Fix an $N\in\N\setminus\{0\}$. We consider a quantum spin system consisting of $N$ spin-$\frac{1}{2}$ particles that are located on some lattice. The $N$-site Pauli algebra $\sP_N$ associated with this quantum spin system is the algebra generated by Pauli matrices $\{\sigma_i^a\}_{i\in[N],a\in\{x,y,z\}}$ (where each $\sigma_i^a$ is the Pauli matrix $\sigma^a$ acting on the $i$-th site) over $\C$. The algebraic relations satisfied by Pauli matrices define a two-sided ideal in the noncommutative polynomial ring $\C\left\langle\{\sigma_i^a\}_{i\in[N],a\in\{x,y,z\}}\right\rangle$\footnote{We slightly abuse notation by regarding $\{\sigma_i^a\}_{i\in[N],a\in\{x,y,z\}}$ as a tuple of noncommuting variables here.}:
\begin{equation}
    \begin{split}
    \sI_N\coloneqq\Big\langle&(\sigma_i^a)^2-1,\sigma_i^x\sigma_i^y-\bi\sigma_i^z,\sigma_i^y\sigma_i^z-\bi\sigma_i^x,\sigma_i^z\sigma_i^x-\bi\sigma_i^y,\sigma_i^a\sigma_i^b+\sigma_i^b\sigma_i^a,\sigma_i^b\sigma_j^c-\sigma_j^c\sigma_i^b\\
    &:\,1\le i\ne j\le N,a\ne b\in\{x,y,z\},c\in\{x,y,z\}\Big\rangle.
    \end{split}
\end{equation}
These are the equalities resulting from the standard Pauli relations at each site $i$ and the commutation between operators at different sites, $i$ and $j$.
Because of these equalities, it holds that $\sP_N=\C\left\langle\{\sigma_i^a\}_{i\in[N],a\in\{x,y,z\}}\right\rangle/\sI_N$.

The algebraic relations also give rise to the following replacement rules on monomials in $\sP_N$:
\begin{subequations}\label{rules}
\begin{align}
    (\sigma_i^a)^2\quad&\longrightarrow\quad1,\quad i=1,\ldots,N,a\in\{x,y,z\},\\
    \sigma_i^x\sigma_i^y\quad&\longrightarrow\quad\bi\sigma_i^z,\quad i=1,\ldots,N,\label{rule1}\\
    \sigma_i^y\sigma_i^x\quad&\longrightarrow\quad-\bi\sigma_i^z,\quad i=1,\ldots,N,\\
    \sigma_i^y\sigma_i^z\quad&\longrightarrow\quad\bi\sigma_i^x,\quad i=1,\ldots,N,\\
    \sigma_i^z\sigma_i^y\quad&\longrightarrow\quad-\bi\sigma_i^x,\quad i=1,\ldots,N,\\
    \sigma_i^z\sigma_i^x\quad&\longrightarrow\quad\bi\sigma_i^y,\quad i=1,\ldots,N,\\
    \sigma_i^x\sigma_i^z\quad&\longrightarrow\quad-\bi\sigma_i^y,\quad i=1,\ldots,N,\label{rule6}\\
    \sigma_i^a\sigma_j^b\quad&\longrightarrow\quad\sigma_j^b\sigma_i^a,\quad 1\le i\ne j\le N,a,b\in\{x,y,z\}.
\end{align}
\end{subequations}
By iteratively applying the above replacement rules, one can reduce any monomial $w\in\sP_N$ to the unique {\em normal form} $\NF(w)\coloneqq c\sigma_{i_1}^{a_1}\sigma_{i_2}^{a_2}\cdots\sigma_{i_r}^{a_r}$ with $c\in\{1,-1,\bi,-\bi\}$, $1\le i_1<i_2<\cdots<i_r\le N$. Accordingly, we denote the coefficient $c$ by $\sC(w)$ and the monomial $\sigma_{i_1}^{a_1}\sigma_{i_2}^{a_2}\cdots\sigma_{i_r}^{a_r}$ by $\sM(w)$. Let 
\begin{equation}
    \tilde{W}_d\coloneqq\bigcup_{r=0}^d\left\{\sigma_{i_1}^{a_1}\sigma_{i_2}^{a_2}\cdots\sigma_{i_r}^{a_r}
\,\middle|\,1\le i_1<i_2<\cdots<i_r\le N,a_k\in\{x,y,z\},k\in[N]\right\}
\end{equation}
for any $d\in\N$. Then, any noncommutative polynomial $f\in\sP_N$ of degree $d$ can be written as $f=\sum_{w\in\tilde{W}_d}c_ww$ with $c_w\in\C$.

There is a $\star$-isomorphism $\pi$ from $\sP_N$ to the matrix algebra $\C^{2^N\times 2^N}$, defined by
\begin{equation}\label{pauli-representation}
\pi(\sigma_{i}^{a})=\underbrace{I_2\otimes\cdots\otimes I_2}_{i-1}\otimes\sigma^a\otimes \underbrace{I_2\otimes\cdots\otimes I_2}_{N-i}.
\end{equation}
It can be shown that $\{\pi(\sigma_{i_1}^{a_1}\sigma_{i_2}^{a_2}\cdots\sigma_{i_r}^{a_r})\}_{1\le i_1<i_2<\cdots<i_r\le N, a_k\in\{x,y,z\}, k\in[N]}$ form a linear basis of $\C^{2^N\times 2^N}$.
We say that an element $f\in\sP_N$ is \emph{nonnegative}, denoted by $f\succeq0$, if $\pi(f)\succeq0$.
Furthermore, we say that $f$ is a SOHS if $f\equiv\sum_{i=1}^tg_i^{\star}g_i$ for some $g_1,\ldots,g_t\in\sP_N$.

\begin{proposition}
Let $f\in\sP_N$. Then $f$ is a SOHS if and only if there exists some $d\in\N$ and $G\in\bH_+^{|\tilde{W}_d|}$ such that $f\equiv \tilde{W}_d^{\star}G\tilde{W}_d$, where $\tilde{W}_d^{\star}$ is the row vector consisting of $u^{\star}, u\in \tilde{W}_d$.
\end{proposition}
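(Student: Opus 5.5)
The argument is the Gram-matrix construction behind the Lemma of \cite{helton2002positive}, carried out in the quotient $\sP_N=\C\langle\{\sigma_i^a\}\rangle/\sI_N$ and using the normal-form basis $\tilde{W}_d$ in place of $W_d$. Throughout, ``$\equiv$'' means equality modulo $\sI_N$, i.e.\ equality in $\sP_N$. We prove the two implications separately.

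\emph{($\Leftarrow$).} Suppose $f\equiv\tilde{W}_d^{\star}G\tilde{W}_d$ with $G\succeq0$.
\begin{itemize}
\item Factor the Gram matrix as $G=\sum_{k=1}^{t}\bv_k\bv_k^*$, using a spectral decomposition or a Cholesky factorization.
\item Set $g_k\coloneqq\bv_k^*\tilde{W}_d=\sum_{u\in\tilde{W}_d}\overline{(\bv_k)_u}\,u$.
\item Since $\star$ conjugates coefficients and reverses words, $g_k^{\star}=\tilde{W}_d^{\star}\bv_k$.
\item Hence $\sum_k g_k^{\star}g_k=\tilde{W}_d^{\star}\bigl(\sum_k\bv_k\bv_k^*\bigr)\tilde{W}_d=\tilde{W}_d^{\star}G\tilde{W}_d\equiv f$, so $f$ is a SOHS.
\end{itemize}

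\emph{($\Rightarrow$).} Suppose $f\equiv\sum_{k=1}^t g_k^{\star}g_k$ with $g_k\in\sP_N$.
\begin{itemize}
\item Each class $g_k$ has a representative obtained by reducing every monomial to its normal form with the replacement rules \eqref{rules}.
\item Such a representative is a linear combination of elements of $\tilde{W}_{d_k}$, where $d_k$ is the largest length of a normal-form word appearing.
\item Moreover $d_k\le N$, since a normal-form word uses each site at most once.
\item Take $d\coloneqq\max_k d_k$, or simply $d=N$. Because $\tilde{W}_{d_k}\subseteq\tilde{W}_d$, we can write $g_k\equiv\bc_k^*\tilde{W}_d$ for a coefficient vector $\bc_k\in\C^{|\tilde{W}_d|}$, padding with zeros.
\item Set $G\coloneqq\sum_k\bc_k\bc_k^*\in\bH_+^{|\tilde{W}_d|}$.
\item Then $\tilde{W}_d^{\star}G\tilde{W}_d=\sum_k(\bc_k^*\tilde{W}_d)^{\star}(\bc_k^*\tilde{W}_d)\equiv\sum_k g_k^{\star}g_k\equiv f$.
\end{itemize}

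The only step needing care is that working modulo $\sI_N$ is compatible with $\star$ and with multiplication. Concretely, if $g\equiv g'$ then $g^{\star}g\equiv g'^{\star}g'$. This holds because $\sI_N$ is a two-sided ideal that is $\star$-invariant:
\begin{itemize}
\item the generators $(\sigma_i^a)^2-1$, $\sigma_i^a\sigma_i^b+\sigma_i^b\sigma_i^a$ and $\sigma_i^b\sigma_j^c-\sigma_j^c\sigma_i^b$ are mapped by $\star$ to themselves up to sign;
\item $(\sigma_i^x\sigma_i^y-\bi\sigma_i^z)^{\star}=\sigma_i^y\sigma_i^x+\bi\sigma_i^z$, which lies in $\sI_N$ by combining the anticommutation generator with $\sigma_i^x\sigma_i^y-\bi\sigma_i^z$;
\item the other two cyclic product relations are handled identically.
\end{itemize}
So $\star$ descends to $\sP_N$, and both directions go through.

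It also follows that $\tilde{W}_N$ (with $d=N$) always suffices in the forward direction.
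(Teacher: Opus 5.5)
Your proof is correct. It is the Gram-matrix argument the paper implicitly relies on: the proposition is stated there without proof, as the $\sP_N$-analogue of the cited lemma of \cite{helton2002positive}. Your explicit check that $\sI_N^{\star}=\sI_N$ (so that $\star$ descends to $\sP_N$), together with the observation that $d=N$ always suffices because $\tilde{W}_N$ spans $\sP_N$, supplies exactly the details the paper leaves implicit.
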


It turns out that for elements in $\sP_N$, being nonnegative is equivalent to being a SOHS.
\begin{theorem}\label{thm1}
Let $f\in\sP_N$. Then $f$ is nonnegative if and only if $f$ is a SOHS.
\end{theorem}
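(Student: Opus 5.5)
The plan is to transfer the question to the matrix algebra through the $\star$-isomorphism $\pi$ of \eqref{pauli-representation}, where nonnegativity means positive semidefiniteness. There every PSD matrix has a Hermitian square root. Pulling that square root back through $\pi^{-1}$ should give the desired SOHS decomposition in $\sP_N$.

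For the easy direction, suppose $f\equiv\sum_{i=1}^t g_i^{\star}g_i$ with $g_i\in\sP_N$. Since $\pi$ is a $\star$-homomorphism, we get $\pi(f)=\sum_i\pi(g_i)^*\pi(g_i)$. Each term $\pi(g_i)^*\pi(g_i)$ is PSD, so their sum is PSD, and hence $f\succeq0$.

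For the converse, suppose $\pi(f)\succeq0$. By the spectral theorem we can write $\pi(f)=U^*DU$ with $U$ unitary and $D$ diagonal with nonnegative entries. Set $B\coloneqq D^{1/2}U$, so that $\pi(f)=B^*B$. The key point is that $\pi$ is surjective onto $\C^{2^N\times2^N}$. This holds because the images $\pi(\sigma_{i_1}^{a_1}\cdots\sigma_{i_r}^{a_r})$ of the normal-form monomials form a linear basis of the matrix algebra, as stated before the proposition. We can therefore choose $g\in\sP_N$ with $\pi(g)=B$, and concretely $g$ is a linear combination of monomials in $\tilde{W}_N$. Then $\pi(g^{\star}g)=B^*B=\pi(f)$. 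Since $\pi$ is injective (it is an isomorphism), it follows that $f\equiv g^{\star}g$, a single Hermitian square. If we also want the Gram form of the preceding proposition, we expand $g=\sum_{u\in\tilde{W}_N}c_uu$, which gives $f\equiv\tilde{W}_N^{\star}(\bar{\bc}\bc^{T})\tilde{W}_N$, where $\bc=(c_u)_{u\in\tilde W_N}$ is the coefficient vector of $g$. The Gram matrix $\bar{\bc}\bc^{T}$ is rank one and PSD.

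The main obstacle is justifying that $\pi$ is really a $\star$-isomorphism. It must be well defined on the quotient by $\sI_N$, respect the involution, and be injective. Well-definedness follows because the Pauli matrices satisfy every generator of $\sI_N$ and are Hermitian. For injectivity, note that every element of $\sP_N$ reduces via the rules \eqref{rules} to a combination of normal-form monomials. There are $4^N$ such monomials, and their images are linearly independent because they are orthogonal under the trace inner product. The paper takes this isomorphism as given, so the argument needs only to cite it.
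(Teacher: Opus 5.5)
Your proposal is correct and takes the same approach as the paper: factor $\pi(f)$ as $B^*B$, then pull $B$ back through the bijective $\star$-map $\pi$ to write $f$ as a single Hermitian square. The paper leaves the easy direction and the bijectivity of $\pi$ implicit; your proposal spells out both and also gives the rank-one Gram form.
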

\begin{proof}
We prove only the ``only if'' direction, since the converse is immediate.
Assume that $f\succeq0$. Because $\pi(f)\in\bH_+^{2^N}$, there exists a matrix $A\in\C^{2^N\times 2^N}$ such that $\pi(f)=A^*A$. Hence, $f=\pi^{-1}(A)^{\star}\pi^{-1}(A)$.
\end{proof}

\section{Ground-state certification via noncommutative polynomial optimization}\label{sec:ground-certification}

Suppose that $H\in\sP_N$ represents the Hamiltonian of a quantum spin system via the $\star$-isomorphism $\pi$ defined in Eq.~\eqref{pauli-representation}. Equivalently, $\pi(H)$ is the usual $2^N\times2^N$ matrix of the Hamiltonian. The ground-state energy $E_{\mathrm{GS}}$ of the system is the smallest eigenvalue of $\pi(H)$ and can therefore be computed through the following noncommutative polynomial optimization problem:
\begin{equation}\label{ncpop}
    \begin{aligned}E_{\rm GS} \coloneqq\min\limits_{\{\ket{\psi},\sigma_i^a\}}&\quad \bra\psi H \ket\psi\\
    \text{subject to:}&\quad\langle\psi|\psi\rangle=1,\\
    &\quad(\sigma_i^a)^2=1,\quad i=1,\ldots,N, a\in\{x,y,z\},\\
    &\quad\sigma_i^x\sigma_i^y=\bi\sigma_i^z,\quad\sigma_i^y\sigma_i^x=-\bi\sigma_i^z,\quad i=1,\ldots,N,\\
    &\quad\sigma_i^y\sigma_i^z=\bi\sigma_i^x,\quad\sigma_i^z\sigma_i^y=-\bi\sigma_i^x,\quad i=1,\ldots,N,\\
    &\quad\sigma_i^z\sigma_i^x=\bi\sigma_i^y,\quad\sigma_i^x\sigma_i^z=-\bi\sigma_i^y,\quad i=1,\ldots,N,\\
    &\quad\sigma_i^a\sigma_j^b=\sigma_j^b\sigma_i^a,\quad 1\le i\ne j\le N,a,b\in\{x,y,z\}.
    \end{aligned}
\end{equation}
In Eq.~\eqref{ncpop}, both the state $\ket{\psi}$ and the operators $\{\sigma_i^a\}$ are decision variables: the latter range over operator representations satisfying the displayed Pauli relations. This formulation is equivalent to the standard physics formulation with fixed tensor-product Pauli matrices. Indeed, the $N$-site Pauli algebra is $\star$-isomorphic to $\C^{2^N\times2^N}$ and has, up to unitary equivalence, a unique irreducible representation; any other representation is a direct sum of copies of it and cannot change the minimum.
An equivalent reformulation of \eqref{ncpop} is
\begin{equation}\label{ncpop-d}
\begin{aligned}\max\limits_{\lambda}&\quad \lambda\\
    \text{subject to:}&\quad H-\lambda\succeq0.
    \end{aligned}
\end{equation}
By replacing the nonnegativity constraint with a SOHS decomposition, we obtain an SDP strengthening of \eqref{ncpop-d} for each $d\in\N\setminus\{0\}$:
\begin{equation}\label{ncsos}
\begin{aligned}\lambda_d\coloneqq\max\limits_{\lambda,G}&\quad \lambda\\
    \text{subject to:}&\quad H-\lambda\equiv \tilde{W}_d^{\star}G\tilde{W}_d,\\
    &\quad G\succeq0.
    \end{aligned}
\end{equation}
The optimum $\lambda_d$ of \eqref{ncsos} provides a lower bound on the ground-state energy $E_{\mathrm{GS}}$ and becomes tighter as $d$ grows:
\begin{equation}
\label{lbgs}
\cdots\le\lambda_d\le\lambda_{d+1}\le\cdots\le E_{\mathrm{GS}}.
\end{equation}
Moreover, Theorem \ref{thm1} implies that $\lambda_d=E_{\mathrm{GS}}$ for some $d\in\N$.

The reformulation \eqref{ncpop-d} as well as the SOHS strengthening
\eqref{ncsos} has a dual side. Let $\ell:\sP_N\rightarrow\C$ be a linear functional acting on the quotient ring $\sP_N$. Let $\Sigma_N$ denote the set of SOHS polynomials in $\sP_N$ and $\tilde{W}=\bigcup_{d=0}^{\infty}\tilde{W}_d$. Then the dual of \eqref{ncpop-d} reads as
\begin{equation}\label{state}
\begin{aligned}
    \min\limits_{\ell} &\quad \ell(H)\\
    \text{subject to:} &\quad \ell(f) \geq 0, \quad \forall f \in \Sigma_N,\\
    &\quad\ell(u^{\star})=\overline{\ell(u)},\quad \forall u\in\tilde{W},\\
    & \quad \ell(1)=1.
\end{aligned}
\end{equation}
For a linear functional $\ell:[\sP_N]_{2d} (\coloneqq\{f\in\sP_N\mid\deg(f)\le2d\})\rightarrow\C$, define the \emph{moment matrix} $\mathbf{M}_{d}(\ell)$ indexed by $\tilde{W}_d$ through $[\MM_{d}]_{v,w}=\ell(v^{\star}w)$. Then the $d$-th order moment relaxation for \eqref{ncpop} is given by
\begin{equation}\label{mom}
\begin{aligned}
\lambda_d^*\coloneqq\min\limits_{\{\ell(u)\}_{u\in\tilde{W}_{2d}}}&\quad\ell(H)\\
    \text{subject to:}&\quad\MM_{d}(\ell)\succeq0, \\
    &\quad\MM_{d}(\ell) \text{ obeys the monomial replacement rules \eqref{rules}},\\
    &\quad\ell(u^{\star})=\overline{\ell(u)},\quad \forall u\in\tilde{W}_{2d},\\
    &\quad\ell(1)=1.
\end{aligned}
\end{equation}
The program \eqref{mom} is the dual SDP of \eqref{ncsos}.
The primal--dual SDP pair \eqref{ncsos}--\eqref{mom} composes the well-known NPA hierarchy~\cite{navascues2008convergent,pironio2010convergent}. To see its meaning, take any polynomial $p=\sum_{u\in\tilde W_d}p_u u$. The constraint $\MM_d(\ell)\succeq0$ is equivalent to $\ell(p^\star p)\ge0$ for every such $p$, while the linear constraints identify entries that are equal under the Pauli relations. Every density matrix $\rho$ defines a feasible functional $\ell_\rho(f)=\operatorname{tr}(\rho\pi(f))$, but the converse need not hold at finite $d$; this enlargement of the feasible set explains why \eqref{mom} yields a lower bound. On the dual side, \eqref{ncsos} certifies the same bound by writing $H-\lambda$ as a SOHS modulo the Pauli relations. Because all diagonal entries of $\MM_d(\ell)$ equal one, there is no duality gap between \eqref{ncsos} and \eqref{mom}, so $\lambda_d=\lambda_d^*$~\cite{josz2016strong}. As $d$ increases, both sides use a larger word space and yield the monotone sequence in Eq.~\eqref{lbgs}. 

Apart from providing a lower bound $E_{\mathrm{lb}}$ on the ground-state energy $E_{\mathrm{GS}}$, the NPA hierarchy also allows one to go beyond
energies and derive lower and upper bounds on the ground-state expectation value $O_{\mathrm{GS}}$ of any observable $O$~\cite{wang2024certifying}. For that, one makes use also of an upper bound $E_{\mathrm{ub}}$ on the ground-state energy, obtained for example by a variational method, and solves the following two SDPs:
\begin{equation}\label{mom1}
\begin{aligned}
\min/\max&\quad\ell(O)\\
    \text{subject to:}&\quad\MM_{d}(\ell)\succeq0, \\
    &\quad\MM_{d}(\ell) \text{ obeys the monomial replacement rules \eqref{rules}},\\
    &\quad\ell(u^{\star})=\overline{\ell(u)},\quad \forall u\in\tilde{W}_{2d},\\
    &\quad\ell(1)=1,\\
 &\quad E_{\mathrm{lb}}\le\ell(H)\le E_{\mathrm{ub}}.\\
\end{aligned}
\end{equation}
The minima and maxima of these programs provide the desired lower and upper bounds on $O_{\mathrm{GS}}$, respectively.

\begin{remark}
If symmetries are exploited to simplify the SDP \eqref{mom1} (as will be done in the next section), the optimization becomes effectively restricted to the subspace of symmetric ground states. Nevertheless, for finite-size systems, this generally does not impose a genuine restriction, as the ground state is typically unique except at certain critical points.
\end{remark}

\section{Reducing SDP sizes by exploiting structures}\label{structure}
From now on, we consider the 1D and 2D Heisenberg models under periodic boundary conditions. We shall show how to dramatically reduce the size of \eqref{ncsos}-\eqref{mom} by exploiting rich algebraic structures of the Heisenberg models. The Hamiltonians of the four Heisenberg models under consideration are given below.

$\bullet$ The Heisenberg chain:
\begin{equation}\label{model1}
    H = \frac{1}{4}\sum_{i=1}^N \sum_{a\in\{x,y,z\}} \sigma_i^a \sigma_{i+1}^a;
\end{equation}

$\bullet$ The $J_1$--$J_2$ Heisenberg chain ($J_1=1$):
\begin{equation}\label{model2}
    H = \frac{1}{4}\sum_{i=1}^N \sum_{a\in\{x,y,z\}} \left(\sigma_i^a \sigma_{i+1}^a + J_2 \sigma_i^a \sigma_{i+2}^a\right);
\end{equation}

$\bullet$ The square-lattice Heisenberg model:
\begin{equation}\label{model3}
    H = \frac{1}{4}\sum_{i=1}^L \sum_{j=1}^L \sum_{a\in\{x,y,z\}} \sigma_{(i,j)}^a \left(\sigma_{(i+1,j)}^a + \sigma_{(i,j+1)}^a\right);
\end{equation}

$\bullet$ The square-lattice $J_1$--$J_2$ Heisenberg model ($J_1=1$):
\begin{equation}\label{model4}
    H = \frac{1}{4}\sum_{i=1}^L \sum_{j=1}^L \sum_{a\in\{x,y,z\}} \sigma_{(i,j)}^a \left(\sigma_{(i+1,j)}^a + \sigma_{(i,j+1)}^a + J_2\left(\sigma_{(i+1,j+1)}^a + \sigma_{(i+1,j-1)}^a\right)\right).
\end{equation}

In the 1D case, the number of spins is $N=L$; in the 2D case, the number of spins is $N=L^2$.

\subsection{Sparsity}
Because the Hamiltonians of the Heisenberg models contain only monomials supported on contiguous sites, we may build the moment-SOHS relaxation \eqref{ncsos}--\eqref{mom} by
using a sparse monomial basis which consists of monomials supported on contiguous sites. Specifically, for 1D Heisenberg models, we use the following sparse monomial basis:
\begin{equation}
    \cB_d = \bigcup_{j=0}^d\left\{\sigma_i^{a_1}\sigma_{i+1}^{a_2}\cdots\sigma_{i+j-1}^{a_j}\,\middle|\, i\in[L],a_k\in\{x,y,z\},k\in[j]\right\}
\end{equation}
instead of $\tilde{W}_d$. 
Moreover, to capture long-range correlations, we also include the monomials $\{\sigma_{i}^a\sigma_{i+j}^b\}_{a,b\in\{x,y,z\},j=2,\ldots,r}$ in $\cB_d$ for a given $r\ge2$ and $d\ge2$.
For 2D Heisenberg models, we use the following sparse monomial basis:
\begin{align*}
    \cB_1&=\{1\}\cup\left\{\sigma_{(i,j)}^a\,\middle|\,i,j \in [L],a\in\{x,y,z\}\right\},\\
    \cB_2&=\cB_1\cup\left\{\sigma_{(i,j)}^a\sigma_{(i+s,j+t)}^b\,\middle|\, i,j \in [L],-4\le s,t\le4,a,b\in\{x,y,z\}\right\},\\
    \cB_3&=\cB_2\cup\left\{\sigma_{(i,j)}^a \sigma_{(i+s_1,j+t_1)}^b \sigma_{(i+s_2,j+t_2)}^c\,\middle|\, i,j \in [L],(s_1,t_1,s_2,t_2)\in\mathcal{T},a,b,c\in\{x,y,z\}\right\},\\
    \cB_4&=\cB_3\cup\left\{\sigma_{(i,j)}^a \sigma_{(i+1,j)}^b \sigma_{(i,j+1)}^c \sigma_{(i+1,j+1)}^d\,\middle|\, i,j \in [L],a,b,c,d\in\{x,y,z\}\right\},
\end{align*}
where $\mathcal{T}=\{(1,0,2,0),(0,1,0,2),(0,-1,1,-1),(0,1,1,1),(1,0,1,-1),(1,0,1,1)\}$ (see Fig.~\ref{threebody}).

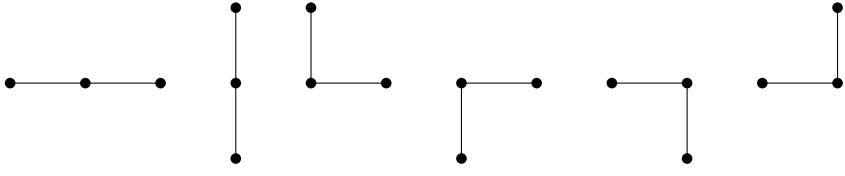
\begin{figure}\label{threebody}
\centering
\begin{tikzpicture}
\draw (0,0)--(1,0)--(2,0);
\fill (0,0) circle (2pt);
\fill (1,0) circle (2pt);
\fill (2,0) circle (2pt);
\draw (3,-1)--(3,0)--(3,1);
\fill (3,-1) circle (2pt);
\fill (3,0) circle (2pt);
\fill (3,1) circle (2pt);
\draw (10,0)--(11,0)--(11,1);
\fill (10,0) circle (2pt);
\fill (11,0) circle (2pt);
\fill (11,1) circle (2pt);
\draw (8,0)--(9,0)--(9,-1);
\fill (8,0) circle (2pt);
\fill (9,0) circle (2pt);
\fill (9,-1) circle (2pt);
\draw (6,-1)--(6,0)--(7,0);
\fill (6,-1) circle (2pt);
\fill (6,0) circle (2pt);
\fill (7,0) circle (2pt);
\draw (4,1)--(4,0)--(5,0);
\fill (4,1) circle (2pt);
\fill (4,0) circle (2pt);
\fill (5,0) circle (2pt);
\end{tikzpicture}
\caption{Lattice sites of three-body monomials appearing in the monomial basis for the 2D case.}
\end{figure}

\subsection{Symmetry}
We continue to reduce the size of \eqref{ncsos}-\eqref{mom} by exploiting various inherent symmetries of the Heisenberg models.
\subsubsection{Sign symmetry of the model}\label{ss1}
The Heisenberg models possess certain sign symmetries. Actually, they are invariant under the following sign flips:
\begin{subequations}\label{eq:ss1}
\begin{align}
    s_{xy} : & \, (\sigma_i^x,\sigma_i^y,\sigma_i^z)_{i=1}^N\longrightarrow (-\sigma_i^x,-\sigma_i^y,\sigma_i^z)_{i=1}^N,\\
    s_{yz} : & \, (\sigma_i^x,\sigma_i^y,\sigma_i^z)_{i=1}^N\longrightarrow (\sigma_i^x,-\sigma_i^y,-\sigma_i^z)_{i=1}^N,\\
    s_{zx} : & \, (\sigma_i^x,\sigma_i^y,\sigma_i^z)_{i=1}^N\longrightarrow (-\sigma_i^x,\sigma_i^y,-\sigma_i^z)_{i=1}^N.
\end{align}
\end{subequations}
These sign symmetries induce a block-diagonal structure in the moment matrix $\MM_d(\ell)$, as we now explain.
For a monomial $u\in\sP_N$, we define its \emph{signature} with respect to the sign flips~\eqref{eq:ss1} as 
\begin{equation}
    \xi(u)\coloneqq(s_{xy}(u)/u, s_{yz}(u)/u, s_{zx}(u)/u) \in \{-1, 1\}^3.
\end{equation}
Since $s_{zx}$ is the composition of $s_{xy}$ and $s_{yz}$, there are four possible signatures $\xi(u)$ for $u\in\sP_N$: $(1,1,1),(1,-1,-1),(-1,1,-1),(-1,-1,1)$. The monomial basis $\cB_d$ can therefore be partitioned into four subbases $\cB_d^{(1)}, \cB_d^{(2)}, \cB_d^{(3)}, \cB_d^{(4)}$, corresponding to these four signatures, respectively.

\begin{proposition}\label{prop1}
In the moment relaxation \eqref{mom} for a Heisenberg model, there is no loss of generality in assuming that $\ell(u)=0$ whenever $\xi(u)\ne(1,1,1)$. Consequently, after appropriate row and column permutations, the moment matrix $\MM_d(\ell)$ is block diagonal, with four nonzero blocks indexed by $\cB_d^{(1)}, \cB_d^{(2)}, \cB_d^{(3)}, \cB_d^{(4)}$, respectively.
\end{proposition}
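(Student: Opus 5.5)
The plan is the standard symmetrization (Reynolds averaging) argument over the sign-flip group $G=\{\mathrm{id},s_{xy},s_{yz},s_{zx}\}\cong\mathbb{Z}_2\times\mathbb{Z}_2$.

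First I check that each $s\in G$ extends to a $\star$-automorphism of $\sP_N$. The map is defined on generators and extended multiplicatively and $\C$-linearly. It preserves the ideal $\sI_N$: every relation is either homogeneous in signs, such as $(\sigma_i^a)^2-1$, the anticommutators and the cross-site commutators, or, like $\sigma_i^x\sigma_i^y-\bi\sigma_i^z$, has both sides multiplied by the same sign. For instance, under $s_{xy}$ we get $(-\sigma^x)(-\sigma^y)=\sigma^x\sigma^y$, which matches $\sigma^z\mapsto\sigma^z$. Hence $s$ descends to $\sP_N$, commutes with $\star$, preserves normal forms up to the sign $\xi$, and maps $\cB_d$ to $\pm\cB_d$. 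In particular, for any monomial $u$ we have $s(u)=\pm u$, so $\xi$ is well defined. Moreover $\xi$ is multiplicative, $\xi(uv)=\xi(u)\xi(v)$ componentwise, and $\xi(u^\star)=\xi(u)$. Finally, $s(H)=H$, since each term $\sigma_i^a\sigma_j^a$ acquires the sign $(\pm1)^2$.

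Next, given any feasible $\ell$ for \eqref{mom}, I define $\tilde\ell\coloneqq\frac14\sum_{s\in G}\ell\circ s$. Each $\ell\circ s$ is again feasible, and this is the step I check most carefully.
\begin{itemize}
\item \emph{Positivity.} We have $\MM_d(\ell\circ s)=D\,\MM_d(\ell)\,D$ with $D=\diag(\text{sign of }s(u))_{u\in\cB_d}$ a diagonal $\pm1$ matrix, so positive semidefiniteness is preserved.
\item \emph{Replacement rules.} These are preserved because $s$ respects $\sI_N$.
\item \emph{Hermiticity and normalization.} Hermiticity is preserved because $s$ commutes with $\star$, and $s(1)=1$ gives normalization.
\item \emph{Objective.} The value is unchanged: $\ell\circ s(H)=\ell(H)$.
\end{itemize}
By convexity, $\tilde\ell$ is feasible with the same objective value, and the same holds for \eqref{mom1} since the energy window is also preserved. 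For a monomial $u$,
\[
\tilde\ell(u)=\ell(u)\cdot\tfrac14\textstyle\sum_{s}\chi_s(u),
\]
where $\chi_s(u)\in\{\pm1\}$ is the corresponding component of $\xi(u)$ (with $\chi_{\mathrm{id}}=1$). This is the character sum of the group $G$. It equals $1$ when $\xi(u)=(1,1,1)$. Otherwise exactly two of the three nontrivial components are $-1$, as noted before the statement, so the sum is $(1-1-1+1)/4=0$. Hence $\tilde\ell(u)=0$ whenever $\xi(u)\neq(1,1,1)$, which is the claimed WLOG.

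For the block structure, take $v\in\cB_d^{(p)}$ and $w\in\cB_d^{(q)}$. Then $\xi(v^\star w)=\xi(v)\xi(w)$, which equals $(1,1,1)$ if and only if $\xi(v)=\xi(w)$, because the signatures form a group in which each element is its own inverse. So $[\MM_d(\tilde\ell)]_{v,w}=0$ for $p\ne q$. Permuting rows and columns by the partition $\cB_d^{(1)},\dots,\cB_d^{(4)}$ then gives a block-diagonal moment matrix with four diagonal blocks.

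The main obstacle is only bookkeeping: confirming that $\xi$ is well defined on normal forms, since reductions such as $\sigma^x\sigma^y\to\bi\sigma^z$ must respect the signature. This follows from $s$ preserving $\sI_N$.
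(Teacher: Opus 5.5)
Your proposal is correct and uses the same Reynolds-averaging argument as the paper: the same $\tilde\ell=\frac14\sum_{s}\ell\circ s$, and the same observation that $\xi(v^\star w)=(1,1,1)$ iff $\xi(v)=\xi(w)$. You also write out the steps the paper calls ``readily verified'': that each $s$ preserves $\sI_N$, that $\MM_d(\ell\circ s)=D\,\MM_d(\ell)\,D$, and that the character sum vanishes. The only caveat is your aside on \eqref{mom1}: feasibility is preserved there, but the objective $\ell(O)$ is unchanged only when $O$ is itself invariant under the sign flips (as the correlators studied in the paper are).
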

\begin{proof}
For any feasible solution $\ell$ to \eqref{mom}, we define a linear functional $\tilde{\ell}:[\sP_N]_{2d}\rightarrow\C$ by
\begin{equation}
    \tilde{\ell}(u)=\frac{1}{4}\left(\ell(u)+\ell(s_{xy}(u))+\ell(s_{yz}(u))+\ell(s_{zx}(u))\right), \quad \forall u\in\tilde{W}_{2d}.
\end{equation}
It is readily verified that (i) $\tilde{\ell}(H)=\ell(H)$, (ii) $\MM_d(\tilde{\ell})\succeq0$, and (iii) $\tilde{\ell}(u)=0$ whenever $\xi(u)\ne(1,1,1)$. Thus, $\tilde{\ell}$ is feasible for \eqref{mom} and has the same objective value as $\ell$, proving the first claim. For $v,w\in\cB_d$, we have $\xi(v^{\star}w)=(1,1,1)$ if and only if $\xi(v)=\xi(w)$. Appropriate row and column permutations therefore expose the four claimed nonzero blocks of $\MM_d(\ell)$.
\end{proof}

For the 1D case and $d=4$, the partition of $\cB_d$ is given in \cref{tab:sign2}. The 2D case is similar.

\begin{table}[!ht]
\caption{Subbases indexing the sign symmetry blocks for $d = 4$.}
\label{tab:sign2}
\centering
\begin{tabular}{c|c}
\Xhline{1pt}
Subbasis& Monomials\\
\Xhline{1pt}
\multirow{3}{*}{$\cB_d^{(1)}$}& $1,\sigma_i^a\sigma_{i+j}^a,\sigma_i^a\sigma_{i+1}^b\sigma_{i+2}^c,\sigma_i^a\sigma_{i+1}^a\sigma_{i+2}^a\sigma_{i+3}^a$,\\
&$\sigma_i^a\sigma_{i+1}^a\sigma_{i+2}^b\sigma_{i+3}^b,\sigma_i^a\sigma_{i+1}^b\sigma_{i+2}^a\sigma_{i+3}^b$,$\sigma_i^a\sigma_{i+1}^b\sigma_{i+2}^b\sigma_{i+3}^a$,\\
&$a\ne b\ne c\in\{x,y,z\},i\in[L],j\in[r]$\\
\hline
\multirow{6}{*}{$\cB_d^{(2)}$}& $\sigma_i^z,\sigma_i^a\sigma_{i+j}^b,\sigma_i^z\sigma_{i+1}^z\sigma_{i+2}^z,\sigma_i^z\sigma_{i+1}^a\sigma_{i+2}^a$,\\
&$\sigma_i^a\sigma_{i+1}^z\sigma_{i+2}^a,\sigma_i^a\sigma_{i+1}^a\sigma_{i+2}^z,\sigma_i^z\sigma_{i+1}^z\sigma_{i+2}^a\sigma_{i+3}^b$,\\
&$\sigma_i^z\sigma_{i+1}^a\sigma_{i+2}^z\sigma_{i+3}^b,\sigma_i^z\sigma_{i+1}^a\sigma_{i+2}^b\sigma_{i+3}^z$,$\sigma_i^a\sigma_{i+1}^z\sigma_{i+2}^z\sigma_{i+3}^b$,\\
&$\sigma_i^a\sigma_{i+1}^z\sigma_{i+2}^b\sigma_{i+3}^z,\sigma_i^a\sigma_{i+1}^b\sigma_{i+2}^z\sigma_{i+3}^z$,$\sigma_i^a\sigma_{i+1}^b\sigma_{i+2}^b\sigma_{i+3}^b$,\\
&$\sigma_i^b\sigma_{i+1}^a\sigma_{i+2}^b\sigma_{i+3}^b,\sigma_i^b\sigma_{i+1}^b\sigma_{i+2}^a\sigma_{i+3}^b$,$\sigma_i^b\sigma_{i+1}^b\sigma_{i+2}^b\sigma_{i+3}^a$,\\
&$a\ne b\in\{x,y\},i\in[L],j\in[r]$\\
\hline
\multirow{6}{*}{$\cB_d^{(3)}$}& $\sigma_i^x,\sigma_i^a\sigma_{i+j}^b,\sigma_i^x\sigma_{i+1}^x\sigma_{i+2}^x,\sigma_i^x\sigma_{i+1}^a\sigma_{i+2}^a$,\\
&$\sigma_i^a\sigma_{i+1}^x\sigma_{i+2}^a,\sigma_i^a\sigma_{i+1}^a\sigma_{i+2}^x,\sigma_i^x\sigma_{i+1}^x\sigma_{i+2}^a\sigma_{i+3}^b$,\\
&$\sigma_i^x\sigma_{i+1}^a\sigma_{i+2}^x\sigma_{i+3}^b,\sigma_i^x\sigma_{i+1}^a\sigma_{i+2}^b\sigma_{i+3}^x$,$\sigma_i^a\sigma_{i+1}^x\sigma_{i+2}^x\sigma_{i+3}^b$,\\
&$\sigma_i^a\sigma_{i+1}^x\sigma_{i+2}^b\sigma_{i+3}^x,\sigma_i^a\sigma_{i+1}^b\sigma_{i+2}^x\sigma_{i+3}^x$,$\sigma_i^a\sigma_{i+1}^b\sigma_{i+2}^b\sigma_{i+3}^b$,\\
&$\sigma_i^b\sigma_{i+1}^a\sigma_{i+2}^b\sigma_{i+3}^b,\sigma_i^b\sigma_{i+1}^b\sigma_{i+2}^a\sigma_{i+3}^b$,$\sigma_i^b\sigma_{i+1}^b\sigma_{i+2}^b\sigma_{i+3}^a$,\\
&$a\ne b\in\{y,z\},i\in[L],j\in[r]$\\
\hline
\multirow{6}{*}{$\cB_d^{(4)}$}& $\sigma_i^y,\sigma_i^a\sigma_{i+j}^b,\sigma_i^y\sigma_{i+1}^y\sigma_{i+2}^y,\sigma_i^y\sigma_{i+1}^a\sigma_{i+2}^a$,\\
&$\sigma_i^a\sigma_{i+1}^y\sigma_{i+2}^a,\sigma_i^a\sigma_{i+1}^a\sigma_{i+2}^y,\sigma_i^y\sigma_{i+1}^y\sigma_{i+2}^a\sigma_{i+3}^b$,\\
&$\sigma_i^y\sigma_{i+1}^a\sigma_{i+2}^y\sigma_{i+3}^b,\sigma_i^y\sigma_{i+1}^a\sigma_{i+2}^b\sigma_{i+3}^y$,$\sigma_i^a\sigma_{i+1}^y\sigma_{i+2}^y\sigma_{i+3}^b$,\\
&$\sigma_i^a\sigma_{i+1}^y\sigma_{i+2}^b\sigma_{i+3}^y,\sigma_i^a\sigma_{i+1}^b\sigma_{i+2}^y\sigma_{i+3}^y$,$\sigma_i^a\sigma_{i+1}^b\sigma_{i+2}^b\sigma_{i+3}^b$,\\
&$\sigma_i^b\sigma_{i+1}^a\sigma_{i+2}^b\sigma_{i+3}^b,\sigma_i^b\sigma_{i+1}^b\sigma_{i+2}^a\sigma_{i+3}^b$,$\sigma_i^b\sigma_{i+1}^b\sigma_{i+2}^b\sigma_{i+3}^a$,\\
&$a\ne b\in\{x,z\},i\in[L],j\in[r]$\\
\Xhline{1pt}
\end{tabular}
\end{table}

From now on, we assume that $\ell(u)=0$ whenever $\xi(u)\ne(1,1,1)$ in the moment relaxation~\eqref{mom}.

\subsubsection{Sign symmetry of the Hamiltonian}\label{ss2}
In addition to the sign symmetries of the whole model, the Hamiltonians of the Heisenberg models possess extra sign symmetries:
\begin{subequations}\label{eq:ss2}
\begin{align}
t_{x} : & \,(\sigma_i^x,\sigma_i^y,\sigma_i^z)_{i=1}^N\longrightarrow (-\sigma_i^x,\sigma_i^y,\sigma_i^z)_{i=1}^N,\\
t_{y} : & \,(\sigma_i^x,\sigma_i^y,\sigma_i^z)_{i=1}^N\longrightarrow (\sigma_i^x,-\sigma_i^y,\sigma_i^z)_{i=1}^N,\\
t_{z} : & \,(\sigma_i^x,\sigma_i^y,\sigma_i^z)_{i=1}^N\longrightarrow (\sigma_i^x,\sigma_i^y,-\sigma_i^z)_{i=1}^N.
\end{align}
\end{subequations}
Notice that the replacement rules \eqref{rule1}--\eqref{rule6} are not invariant under any of the sign flips~\eqref{eq:ss2}. Nevertheless, these extra sign symmetries of the Hamiltonian can still help to reduce the SDP
size.
For a monomial $u\in\sP_N$, we define its \emph{signature} with respect to the sign flips~\eqref{eq:ss2} as 
\begin{equation}
    \eta(u)\coloneqq(t_{x}(u)/u, t_{y}(u)/u, t_{z}(u)/u) \in \{-1, 1\}^3.
\end{equation}
According to the parity of the monomial degree, each subbasis $\cB_d^{(i)}$ can be partitioned further into two groups: $\cB_d^{(i,1)}=\left\{u\in\cB_d^{(i)}\,\middle|\, 2\mid\deg(u)\right\}$ and $\cB_d^{(i,2)}=\left\{u\in\cB_d^{(i)}\,\middle|\, 2\nmid\deg(u)\right\}$. Here, $2\mid\deg(u)$ and $2\nmid\deg(u)$ mean that the degree of the monomial $u$ is even and odd, respectively. Monomials in each $\cB_d^{(i,j)}$ share the same signature with respect to the sign flips~\eqref{eq:ss2}. We arrange the monomials in each $\cB_d^{(i)}$ so that $\cB_d^{(i,1)}$ precedes $\cB_d^{(i,2)}$, i.e., $\cB_d^{(i)}=\left[\cB_d^{(i,1)},\cB_d^{(i,2)}\right]$.

We now prove a few lemmas for later use.
\begin{lemma}\label{lm0}
Let $i\in\{1,2,3,4\}$ and $j\in\{1,2\}$. For any $v,w\in\cB_d^{(i,j)}$, one has 
\begin{equation}
\sC(v^{\star}w)\in\begin{cases}
\{-1,1\},&\text{if }2\mid\deg(\sM(v^{\star}w)),\\
\{-\bi,\bi\},&\text{otherwise}.
\end{cases}
\end{equation}
For any $v\in\cB_d^{(i,1)}$ and $w\in\cB_d^{(i,2)}$, one has
\begin{equation}
\sC(v^{\star}w)\in\begin{cases}
\{-1,1\},&\text{if }2\nmid\deg(\sM(v^{\star}w)),\\
\{-\bi,\bi\},&\text{otherwise}.
\end{cases}
\end{equation}
\end{lemma}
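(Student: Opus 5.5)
We will track the complex phase through the sign flips $t_x,t_y,t_z$ of \eqref{eq:ss2}. These are $\star$-automorphisms of the free algebra, but they do not preserve the ideal $\sI_N$. Concretely, applying $t_x$ to the relation $\sigma_i^y\sigma_i^z=\bi\sigma_i^x$ gives $\sigma_i^y\sigma_i^z=-\bi\sigma_i^x$. We will exploit exactly this failure: flipping all three letters at a site conjugates the phase produced by each reduction rule.

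The first step is to define the global map $t=t_xt_yt_z$, which sends every $\sigma_i^a$ to $-\sigma_i^a$. We then compare the reductions of an arbitrary word $u$ and of $t(u)$. Since $t(u)=(-1)^{\deg(u)}u$ as words, the factor is simply $(-1)^{\deg(u)}$. This route is too crude, however, because it does not see the factor $\bi$. So we argue directly with the replacement rules instead.

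Take a word $u$ of length $m$ and reduce it to $\NF(u)$. Each application of \eqref{rule1}--\eqref{rule6} lowers the length by one and multiplies by $\pm\bi$. Each application of $(\sigma_i^a)^2\to1$ lowers the length by two and contributes the factor $1$. Commutations between different sites contribute nothing. Hence if $k$ rules of the first kind are used, $\sC(u)\in\{\pm1\}$ when $k$ is even and $\sC(u)\in\{\pm\bi\}$ when $k$ is odd. We also have $m-\deg(\sM(u))\equiv k \pmod 2$. Therefore $\sC(u)$ is real exactly when $\deg(u)\equiv\deg(\sM(u))\pmod 2$. This parity lemma is the key step, and it holds for any word.

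Next we apply the parity lemma to $u=v^{\star}w$, where $\deg(v^\star w)=\deg(v)+\deg(w)$. Since $v,w\in\cB_d$ are normal-form monomials, the coefficient of $v^\star w$ comes only from the reduction of the concatenated word. If $v,w\in\cB_d^{(i,j)}$, then $\deg v\equiv\deg w$, so $\deg(v^\star w)$ is even. The coefficient is then real iff $\deg\sM(v^\star w)$ is even, which is the first claim. If $v\in\cB_d^{(i,1)}$ and $w\in\cB_d^{(i,2)}$, then $\deg(v^\star w)$ is odd, so the coefficient is real iff $\deg\sM(v^\star w)$ is odd, which is the second claim.

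The one point needing care is that the reduction yields a unique normal form, so that $\sC$ is well defined. We need the phase class, real versus imaginary, to be independent of the reduction path. This follows because the parity of $k$ is determined by $m-\deg(\sM(u))$. Uniqueness of the normal form is already asserted in the paper, and the membership in $\{\pm1\}$ or $\{\pm\bi\}$ is fixed by the parity of $k$. No further obstacle is expected. The sign-symmetry classes $i$ are not actually needed; only the degree parities matter.
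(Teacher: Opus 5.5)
Your argument is correct and is essentially the paper's proof. Both count the applications of the rules \eqref{rule1}--\eqref{rule6}, each of which shortens the word by one and contributes a factor $\pm\bi$, and relate their parity to $\deg(v^{\star}w)-\deg(\sM(v^{\star}w))$; you additionally make explicit the bookkeeping for the length-two square rule, which the paper leaves implicit, and your opening detour through $t_xt_yt_z$ is unnecessary and can simply be dropped.
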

\begin{proof}
Suppose that $v,w\in\cB_d^{(i,j)}$. Then $2\mid\deg(v^{\star}w)$. If $2\mid\deg(\sM(v^{\star}w))$, then the replacement rules \eqref{rule1}-\eqref{rule6} are performed even times to obtain the normal form of $v^{\star}w$, from which we deduce that $\sC(v^{\star}w)\in\{-1,1\}$. If $2\nmid\deg(\sM(v^{\star}w))$, then the replacement rules \eqref{rule1}-\eqref{rule6} are performed odd times to obtain the normal form of $v^{\star}w$, from which we deduce that $\sC(v^{\star}w)\in\{-\bi,\bi\}$.

Suppose that $v\in\cB_d^{(i,1)}$ and $w\in\cB_d^{(i,2)}$. Then $2\nmid\deg(v^{\star}w)$. If $2\nmid\deg(\sM(v^{\star}w))$, then the replacement rules \eqref{rule1}-\eqref{rule6} are performed even times to obtain the normal form of $v^{\star}w$, from which we deduce that $\sC(v^{\star}w)\in\{-1,1\}$. If $2\mid\deg(\sM(v^{\star}w))$, then the replacement rules \eqref{rule1}-\eqref{rule6} are performed odd times to obtain the normal form of $v^{\star}w$, from which we deduce that $\sC(v^{\star}w)\in\{-\bi,\bi\}$.
\end{proof}

\begin{lemma}\label{lm1}
Let $A,B\in\C^{n\times n}$ such that $A^*=A$ and $B^*=-B$. Then
\begin{equation}
A+B\bi\succeq0\iff\begin{bmatrix}
    A&B\\-B&A
\end{bmatrix}\succeq0.
\end{equation}
\end{lemma}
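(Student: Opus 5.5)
The plan is to exhibit an explicit unitary congruence that block-diagonalizes the real-form matrix
\[
M\coloneqq\begin{bmatrix}A&B\\-B&A\end{bmatrix}.
\]
First I will check that $M$ is Hermitian. Since $A^*=A$ and $B^*=-B$, we get
\[
M^*=\begin{bmatrix}A^*&-B^*\\B^*&A^*\end{bmatrix}=M.
\]
Similarly $(B\bi)^*=-\bi B^*=\bi B$, so $A+B\bi$ is Hermitian as well. Hence both sides of the claimed equivalence are statements about Hermitian matrices, and it suffices to compare spectra.

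Next I will take the unitary
\[
U=\tfrac{1}{\sqrt2}\begin{bmatrix}I_n&I_n\\ \bi I_n&-\bi I_n\end{bmatrix}.
\]
A direct block multiplication, which is the only computation in the argument, gives
\[
U^*MU=\begin{bmatrix}A+\bi B&0\\0&A-\bi B\end{bmatrix}.
\]
Therefore $M\succeq0$ if and only if both $A+\bi B\succeq0$ and $A-\bi B\succeq0$.

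The remaining step, and the only real subtlety, is to show that $A-\bi B\succeq0$ follows from $A+\bi B\succeq0$. For general complex $A,B$ it does not. For example, with $n=1$, $A=a\in\bR$ and $B=\bi b$ with $b\in\bR$, we have $A+B\bi=a-b$, while $M\succeq0$ requires $a\ge|b|$. So I will read the lemma in the setting where it is used: after Lemma~\ref{lm0}, the moment matrix splits into a real part and an $\bi$-multiple of a real part. Accordingly, I take $A$ real symmetric and $B$ real skew-symmetric.

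In that setting $A-\bi B=\overline{A+\bi B}$. Complex conjugation maps an eigenpair $(\lambda,\bv)$ of the Hermitian matrix $A+\bi B$ to the eigenpair $(\lambda,\bar{\bv})$ of $A-\bi B$, since the eigenvalues $\lambda$ are real. Thus the two matrices have the same spectrum, and one is PSD exactly when the other is. Combined with the block-diagonalization above, this proves the equivalence.

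I will state the realness assumption explicitly in the proof, since it is needed and is exactly what the subsequent application to the blocks $\cB_d^{(i,j)}$ provides.
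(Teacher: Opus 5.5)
Your proof is correct, and your caveat is justified. Taking $a=1$, $b=-2$ in your example gives $A+B\bi=3\succeq0$, while $\begin{bmatrix}A&B\\-B&A\end{bmatrix}$ has eigenvalues $3$ and $-1$. So for genuinely complex $A,B$ only $M\succeq0\Rightarrow A+B\bi\succeq0$ survives, exactly as your identity $U^*MU=\diag(A+\bi B,A-\bi B)$ predicts.

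The paper argues differently. It substitutes $\bu-\bv\bi$ into the Hermitian form of $A+B\bi$ and asserts that the result equals the quadratic form of $M$ at $(\bu,\bv)$. A full expansion produces the extra term $\bi\bigl(\bu^*B\bu+\bv^*B\bv+\bv^*A\bu-\bu^*A\bv\bigr)$. This term vanishes when $A$ is real symmetric, $B$ is real skew-symmetric and $\bu,\bv\in\bR^n$, but not in general. It does not vanish even for real $A,B$ if $\bu,\bv$ are complex, as the paper writes them. So the paper's proof is really the realification argument, and it tacitly needs the same hypothesis you impose. For real $A,B$, every $x\in\C^n$ is $\bu-\bi\bv$ with $\bu,\bv$ real, and $x^*(A+\bi B)x$ is the real quadratic form of the real symmetric matrix $M$ at $(\bu,\bv)$. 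Nonnegativity of that form on $\bR^{2n}$ is then exactly $M\succeq0$.

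Your route buys generality and diagnosis. The unitary congruence holds for arbitrary complex $A,B$ and gives the sharp statement $M\succeq0\iff A+\bi B\succeq0$ and $A-\bi B\succeq0$. It also isolates realness as needed only to get $A-\bi B=\overline{A+\bi B}$. The paper's corrected route is a one-line identity, but it only makes sense once everything is real.

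One refinement of your justification for the hypothesis. Lemma~\ref{lm0} only fixes the phases $\sC(v^\star w)$. The matrices $A^{(i)},\dots,F^{(i)}$ in the proof of Proposition~\ref{prop2} are real because, in addition, each moment $\ell(\sM(v^\star w))$ is real. This follows from the constraint $\ell(u^\star)=\overline{\ell(u)}$ in \eqref{mom}, since every normal-form monomial satisfies $u^\star\equiv u$ in $\sP_N$. This matters because Proposition~\ref{prop2} uses precisely the direction $A+B\bi\succeq0\Rightarrow M\succeq0$, the one that fails without realness.
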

\begin{proof}
Let $\bu,\bv\in\C^n$ be arbitrary. Then,
\begin{align*}
A+B\bi\succeq0&\iff(\bu-\bv\cdot\bi)^*(A+B\bi)(\bu-\bv\cdot\bi)=\bu^*A\bu+\bv^*A\bv-\bv^*B\bu+\bu^*B\bv\ge0\\
&\iff[\bu^* \,\,\bv^*]\begin{bmatrix}A&B\\-B&A\end{bmatrix}\begin{bmatrix}\bu\\\bv\end{bmatrix}\ge0\\
&\iff\begin{bmatrix}A&B\\-B&A\end{bmatrix}\succeq0.
\end{align*}
\end{proof}

\begin{lemma}\label{lm2}
Let $A\in\C^{n\times n}, B\in\C^{m\times m}, C\in\C^{m\times n}$ such that $A^*=A$ and $B^*=B$. Then,
\begin{equation}
\begin{bmatrix}
A&C\bi\\
-C^{*}\bi&B
\end{bmatrix}\succeq0\iff
\begin{bmatrix}
A&C\\
C^{*}&B
\end{bmatrix}\succeq0.
\end{equation}
\end{lemma}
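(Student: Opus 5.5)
The two block matrices are related by a congruence with a block-diagonal unitary, and congruence by an invertible matrix preserves positive semidefiniteness; I would write the proof this way. Set
\begin{equation*}
D\coloneqq\begin{bmatrix} I_n & 0\\ 0 & -\bi I_m\end{bmatrix},\qquad D^*=\begin{bmatrix} I_n & 0\\ 0 & \bi I_m\end{bmatrix},
\end{equation*}
which is unitary, so $D^*D=I_{n+m}$.

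The key step is the block computation
\begin{equation*}
D^*\begin{bmatrix}A&C\\ C^*&B\end{bmatrix}D=\begin{bmatrix}A&-\bi C\\ \bi C^*&B\end{bmatrix}.
\end{equation*}
This follows because the $(1,2)$ block picks up the factor $1\cdot(-\bi)$, the $(2,1)$ block picks up $\bi\cdot 1$, and the $(2,2)$ block picks up $\bi\cdot(-\bi)=1$. The result differs from the matrix in the statement only by the sign of $\bi$. To remove that sign I would use $D^*$ in place of $D$, i.e.\ take $\tilde D\coloneqq\diag(I_n,\bi I_m)$. Then $\tilde D^*\begin{bmatrix}A&C\\ C^*&B\end{bmatrix}\tilde D$ has off-diagonal blocks $C\bi$ and $-\bi C^*$, which is exactly the left-hand matrix of the lemma. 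The only care needed is this choice of sign in the scaling; it is the one point worth checking, and I would verify it entry by entry.

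With the congruence established, both directions follow at once. For any $\bu\in\C^{n+m}$ we have $\bu^*(\tilde D^*M\tilde D)\bu=(\tilde D\bu)^*M(\tilde D\bu)$, and $\tilde D$ is a bijection of $\C^{n+m}$. Hence nonnegativity of the quadratic form for one matrix is equivalent to nonnegativity for the other. Hermiticity is also preserved by the congruence, and it holds on both sides because $A^*=A$ and $B^*=B$. I expect no real obstacle: the whole argument reduces to the sign check above.
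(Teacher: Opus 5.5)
Your proof is correct, and it takes a more direct route than the paper's. With $\tilde D=\diag(I_n,\bi I_m)$ one indeed has $\tilde D^*\begin{bmatrix}A&C\\C^*&B\end{bmatrix}\tilde D=\begin{bmatrix}A&C\bi\\-C^*\bi&B\end{bmatrix}$. Since $\tilde D$ is unitary, the two matrices are in fact unitarily similar and so have the same spectrum, which is slightly more than the stated equivalence. In a write-up, present only $\tilde D$ and drop the first computation with the wrong sign. Note also that $C$ must be $n\times m$ for the blocks to fit, not $m\times n$ as written; your computation tacitly uses this.

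The paper instead writes the left-hand matrix as $\diag(A,B)+\begin{bmatrix}0&C\\-C^*&0\end{bmatrix}\bi$ and applies the realification Lemma~\ref{lm1} to obtain a $2(n+m)$-dimensional block matrix. It then permutes rows and columns to split this into $\begin{bmatrix}A&C\\C^*&B\end{bmatrix}$ and $\begin{bmatrix}B&-C^*\\-C&A\end{bmatrix}$, which are PSD simultaneously. That route reuses the doubling device from the proof of Proposition~\ref{prop2}. However, it inherits the hypotheses under which Lemma~\ref{lm1} is actually valid, namely real symmetric $A$ and real skew-symmetric $B$. Over $\C$ that lemma already fails for $n=1$, $A=1$, $B=-5\bi$: here $A+B\bi=6\ge0$, while $\begin{bmatrix}1&-5\bi\\5\bi&1\end{bmatrix}$ has eigenvalue $-4$. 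Hence the paper's chain of equivalences is fully justified only for real $A,B,C$. Your congruence argument proves the lemma exactly as stated, for arbitrary complex Hermitian $A,B$ and complex $C$, with no dimension doubling.
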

\begin{proof}
By Lemma \ref{lm1}, we have
\begin{align*}
\begin{bmatrix}
A&C\bi\\
-C^{*}\bi&B
\end{bmatrix}\succeq0&\iff
\begin{bmatrix}
A&0&0&C\\
0&B&-C^*&0\\
0&-C&A&0\\
C^*&0&0&B
\end{bmatrix}\succeq0\\
&\iff\begin{bmatrix}
A&C\\
C^*&B
\end{bmatrix}\succeq0,\quad\begin{bmatrix}
B&-C^*\\
-C&A
\end{bmatrix}\succeq0\\
&\iff\begin{bmatrix}
A&C\\
C^*&B
\end{bmatrix}\succeq0.
\end{align*}
\end{proof}

\begin{proposition}\label{prop2}
In the moment relaxation \eqref{mom} for Heisenberg models, there is no loss of generality in assuming that $\ell(u)=0$ whenever $\eta(u)\ne(1,1,1)$.
\end{proposition}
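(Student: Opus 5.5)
The plan is to realize the Hamiltonian sign flips as \emph{antilinear} $\star$-automorphisms of $\sP_N$ and then symmetrize, as in the proof of Proposition~\ref{prop1}. Although $t_y$ alone does not respect the rules \eqref{rule1}--\eqref{rule6}, the map $\tau$ that applies $t_y$ and simultaneously conjugates all complex coefficients does. Indeed, $\sigma_i^x\sigma_i^y=\bi\sigma_i^z$ is sent to $-\sigma_i^x\sigma_i^y=-\bi\sigma_i^z$, and the other rules check the same way. Concretely, in the representation $\pi$ of \eqref{pauli-representation}, $\tau$ is entrywise complex conjugation, since $\sigma^y$ is the only imaginary Pauli matrix. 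Thus $\tau$ is an antilinear, involutive $\star$-automorphism of $\sP_N$ that preserves degrees and sparsity supports. For a normal-form monomial $u$ it gives $\tau(u)=\eta_y(u)\,u$, where $\eta_y(u)=t_y(u)/u\in\{\pm1\}$.

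Given a feasible $\ell$ for \eqref{mom} which, by Proposition~\ref{prop1}, already satisfies $\ell(u)=0$ whenever $\xi(u)\ne(1,1,1)$, I will define $\ell'(f)\coloneqq\overline{\ell(\tau(f))}$. This is again linear. The first steps are the following checks.
\begin{enumerate}
\item $\ell'(1)=1$, and $\ell'(u^\star)=\overline{\ell'(u)}$ holds because $\tau$ commutes with $\star$.
\item $\ell'(p^\star p)=\overline{\ell(\tau(p)^\star\tau(p))}\ge0$, so $\MM_d(\ell')\succeq0$. Concretely, $\MM_d(\ell')=D\,\overline{\MM_d(\ell)}\,D$ with $D=\diag(\eta_y(u))_{u\in\cB_d}$.
\item $\ell'$ respects the replacement rules, since $\tau$ is an automorphism of the quotient.
\item $\tau(H)=H$, because $H$ has real coefficients and each term $\sigma_i^a\sigma_j^a$ contains an even number of $y$'s. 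Hence $\ell'(H)=\overline{\ell(H)}=\ell(H)$, which is real.
\end{enumerate}
I will then set $\tilde\ell=\tfrac12(\ell+\ell')$. This functional is feasible, has the same objective value, and still has $\ell'(u)=0$ whenever $\xi(u)\ne(1,1,1)$, because $\tau$ commutes with the $s$-flips.

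The key step is to show that $\tilde\ell$ actually vanishes on monomials with $\eta_y(u)=-1$, rather than merely being real or imaginary there. Every normal-form monomial $u=\sigma_{i_1}^{a_1}\cdots\sigma_{i_r}^{a_r}$ with distinct sites is Hermitian, so $\ell(u)=\ell(u^\star)=\overline{\ell(u)}$ is real. Hence $\ell'(u)=\overline{\eta_y(u)\ell(u)}=\eta_y(u)\ell(u)$, and therefore $\tilde\ell(u)=\tfrac{1+\eta_y(u)}{2}\ell(u)$. This vanishes when $\eta_y(u)=-1$. For non-normal words, the value is determined by linearity through $\NF$, so it suffices to treat normal forms.

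Finally, I will combine this with the $s$-symmetry. Since $t_x=s_{xy}\circ t_y$ and $t_z=s_{yz}\circ t_y$ as sign changes on monomials, we have $\eta_x(u)=\xi_{xy}(u)\eta_y(u)$ and $\eta_z(u)=\xi_{yz}(u)\eta_y(u)$. So if $\xi(u)=(1,1,1)$ and $\eta_y(u)=1$, then $\eta(u)=(1,1,1)$. Contrapositively, any $u$ with $\eta(u)\ne(1,1,1)$ has either $\xi(u)\ne(1,1,1)$ or $\eta_y(u)=-1$, and in both cases $\tilde\ell(u)=0$.

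The main obstacle I expect is the first point: recognizing that the flips $t_a$ are not symmetries of the algebra, but become so only after coupling them with complex conjugation. The subsequent reality argument is what turns the resulting ``conjugate symmetry'' into actual vanishing of the moments.
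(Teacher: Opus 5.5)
Your proof is correct, and it takes a genuinely different route from the paper's.

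\textbf{The paper's route.} The paper never introduces an antilinear map. It defines $\tilde\ell$ directly by zeroing every moment with $\eta(u)\ne(1,1,1)$. It then splits each sign block by degree parity into real blocks, writing $\MM_d^{(i)}(\ell)=\left[\begin{smallmatrix}A+B\bi & C+D\bi\\ C^*-D^*\bi & E+F\bi\end{smallmatrix}\right]$, with Lemma~\ref{lm0} controlling the phases $\sC(v^\star w)$. It realifies via Lemma~\ref{lm1} and extracts the principal submatrix $\left[\begin{smallmatrix}A&D\\ D^*&E\end{smallmatrix}\right]\succeq0$. Finally it returns to $\MM_d^{(i)}(\tilde\ell)$ by Lemma~\ref{lm2}.

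\textbf{The two routes produce the same functional.} When $\xi(u)=(1,1,1)$, one has $\eta_y(u)=-1$ exactly when $\eta(u)=(-1,-1,-1)$. So your $\tfrac{1+\eta_y(u)}{2}\ell(u)$ is precisely the paper's truncation. Moreover, on each sign block your diagonal sign matrix $\diag(\eta_y(u))$ equals $\pm\diag(I,-I)$ in the even/odd ordering. Hence your average of $\MM_d(\ell)$ with its conjugated, sign-twisted copy is exactly the paper's matrix with blocks $A$, $D\bi$, $-D^*\bi$, $E$. Both arguments therefore rest on one fact: the real part of a PSD Hermitian matrix, taken in a suitable diagonal phase frame, is PSD.

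\textbf{What your formulation buys.}
\begin{itemize}
\item It explains conceptually why the flips $t_a$ can still be exploited even though they break \eqref{rule1}--\eqref{rule6}: paired with complex conjugation, they become an antiunitary symmetry, namely entrywise conjugation under $\pi$.
\item It reduces feasibility to the same averaging template as Proposition~\ref{prop1}, with no need for Lemmas~\ref{lm0}--\ref{lm2}.
\item It carries over verbatim to any added constraint preserved by $\tau$, e.g.\ \eqref{positivity1}, \eqref{lso1} and \eqref{pso1}, since $\tau(H)=H$.
\end{itemize}

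\textbf{What the paper's argument buys.} Its matrix-level computation delivers the explicit real block form \eqref{sec4:eq4}. The paper reuses this immediately in Proposition~\ref{prop3} to pose the SDP over the reals.
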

\begin{proof}
For $i\in\{1,2,3,4\}$, let $\MM_{d}^{(i)}(\ell)$ be the block of the moment matrix $\MM_{d}(\ell)$ indexed by the subbasis $\cB_d^{(i)}$. For each $i$, we define the matrices $A^{(i)},B^{(i)}\in\C^{|\cB_d^{(i,1)}|\times|\cB_d^{(i,1)}|},C^{(i)},D^{(i)}\in\C^{|\cB_d^{(i,1)}|\times|\cB_d^{(i,2)}|},E^{(i)},F^{(i)}\in\C^{|\cB_d^{(i,2)}|\times|\cB_d^{(i,2)}|}$ as follows:
\begin{subequations}\label{sec:eq2}
\begin{align}
    A^{(i)}_{v,w}&=\begin{cases}
    \sC(v^{\star}w)\ell(\sM(v^{\star}w)), \quad&\text{if }2\mid\deg(\sM(v^{\star}w)),\\
    0, \quad&\text{otherwise},
    \end{cases}\quad \text{where } v,w\in\cB_d^{(i,1)};\\
    B^{(i)}_{v,w}&=\begin{cases}
    -\bi\sC(v^{\star}w)\ell(\sM(v^{\star}w)), &\text{if }2\nmid\deg(\sM(v^{\star}w)),\\
    0, &\text{otherwise},
    \end{cases}\quad \text{where } v,w\in\cB_d^{(i,1)};\\
    C^{(i)}_{v,w}&=\begin{cases}
    \sC(v^{\star}w)\ell(\sM(v^{\star}w)), \quad&\text{if }2\nmid\deg(\sM(v^{\star}w)),\\
    0, \quad&\text{otherwise},
    \end{cases}\quad \text{where } v\in\cB_d^{(i,1)},w\in\cB_d^{(i,2)};\\
    D^{(i)}_{v,w}&=\begin{cases}
    -\bi\sC(v^{\star}w)\ell(\sM(v^{\star}w)), &\text{if }2\mid\deg(\sM(v^{\star}w)),\\
    0, &\text{otherwise},
    \end{cases}\quad \text{where } v\in\cB_d^{(i,1)},w\in\cB_d^{(i,2)};\\
    E^{(i)}_{v,w}&=\begin{cases}
    \sC(v^{\star}w)\ell(\sM(v^{\star}w)), \quad&\text{if }2\mid\deg(\sM(v^{\star}w)),\\
    0, \quad&\text{otherwise},\end{cases}\quad \text{where } v,w\in\cB_d^{(i,2)};
    \\
    F^{(i)}_{v,w}&=\begin{cases}
    -\bi\sC(v^{\star}w)\ell(\sM(v^{\star}w)), &\text{if }2\nmid\deg(\sM(v^{\star}w)),\\
    0, &\text{otherwise},
    \end{cases}\quad \text{where } v,w\in\cB_d^{(i,2)}.
    \end{align}
\end{subequations}
For brevity, we will omit the superscript $(i)$ of $A,B,C,D,E,F$ in the rest of this proof.
Then by construction, we have
\begin{equation}\label{sec4:eq1}
    \MM_{d}^{(i)}(\ell)=\begin{bmatrix}
        A+B\bi&C+D\bi\\
        C^{*}-D^{*}\bi&E+F\bi
    \end{bmatrix}.
\end{equation}
For any feasible solution $\ell$ to \eqref{mom}, we define a linear functional $\tilde{\ell}:[\sP_N]_{2d}\rightarrow\C$ by
\begin{equation}
    \tilde{\ell}(u)\coloneqq\begin{cases}\ell(u),&\text{if }\eta(u)=(1,1,1),\\
    0, &\text{otherwise}.\end{cases}
\end{equation}
We next show that $\tilde{\ell}$ is again a feasible solution to \eqref{mom}.
First note that for any $v,w\in\cB_d^{(i)}$, either $\eta(\sM(v^{\star}w))=(1,1,1)$ if $2\mid\deg(\sM(v^{\star}w))$ or $\eta(\sM(v^{\star}w))=(-1,-1,-1)$ if $2\nmid\deg(\sM(v^{\star}w))$, which gives that
\begin{equation}
\MM_{d}^{(i)}(\tilde{\ell})=\begin{bmatrix}
A&D\bi\\
-D^{*}\bi&E
\end{bmatrix}.
\end{equation}
The construction \eqref{sec:eq2} gives $A^*=A$, $E^*=E$, $B^*=-B$, and $F^*=-F$. Applying Lemma~\ref{lm1}, we conclude that \eqref{sec4:eq1} is equivalent to
\begin{equation}
\begin{bmatrix}
A&C&B&D\\
C^*&E&-D^*&F\\
-B&-D&A&C\\
D^*&-F&C^*&E
\end{bmatrix}\succeq0,
\end{equation}
which implies
\begin{equation}
\begin{bmatrix}
A&D\\
D^{*}&E
\end{bmatrix}\succeq0.
\end{equation}
Thus by Lemma \ref{lm2}, we have
\begin{equation}
\MM_{d}^{(i)}(\tilde{\ell})=\begin{bmatrix}
A&D\bi\\
-D^{*}\bi&E
\end{bmatrix}\succeq0.
\end{equation}
Therefore, $\tilde{\ell}$ is feasible for \eqref{mom}.
Moreover, the monomials involved in the Hamiltonian $H$ are all of even degree and we thus have $\tilde{\ell}(H)=\ell(H)$.
This completes the proof.
\end{proof}

From now on, we assume that $\ell(u)=0$ whenever $\eta(u)\ne(1,1,1)$ in the moment relaxation~\eqref{mom}.

\subsubsection{Conjugate symmetry of the Hamiltonian}
The Hamiltonians of the Heisenberg models are invariant under complex conjugation. Consequently, the SDPs can be formulated over the real numbers.
\begin{proposition}\label{prop3}
In the moment relaxation \eqref{mom} for Heisenberg models, there is no loss of generality in assuming that $\ell(u)\in\bR$ for $u\in\tilde{W}_{2d}$ with $\eta(u)=(1,1,1)$.
\end{proposition}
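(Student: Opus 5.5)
The approach is the same averaging device used in Propositions~\ref{prop1} and~\ref{prop2}, now with complex conjugation. Given a feasible $\ell$ for \eqref{mom}, which by the standing assumptions already vanishes outside signatures $\xi(u)=\eta(u)=(1,1,1)$, define the conjugate functional $\bar\ell(u)\coloneqq\overline{\ell(u)}$ on normal-form monomials $u\in\tilde W_{2d}$, extended linearly. Then set $\tilde\ell\coloneqq\frac12(\ell+\bar\ell)$, so that $\tilde\ell(u)=\mathrm{Re}\,\ell(u)\in\bR$. It remains to show that $\tilde\ell$ is feasible and that $\tilde\ell(H)=\ell(H)$.

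The objective is the easy part. $H$ has real coefficients on the monomials $\sigma_i^a\sigma_j^a$, and these are Hermitian, so $\ell(\sigma_i^a\sigma_j^a)\in\bR$ by the constraint $\ell(u^\star)=\overline{\ell(u)}$. Hence $\bar\ell(H)=\ell(H)$, and therefore $\tilde\ell(H)=\ell(H)$. The constraints $\ell(1)=1$, the Hermiticity condition and the vanishing of non-trivial signatures are all preserved under conjugation and averaging. This uses that each normal-form monomial is Hermitian, $u^\star=u$, so the Hermiticity condition simply says $\ell(u)$ is real; this is in fact already the case before averaging.

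The main step is positivity of $\MM_d(\bar\ell)$, since then $\MM_d(\tilde\ell)\succeq0$ follows by convexity. I would work blockwise using the decomposition \eqref{sec4:eq1}. After Proposition~\ref{prop2}, $\ell$ is supported on monomials of even degree with $\eta=(1,1,1)$. So within each block, by Lemma~\ref{lm0}, the entries are $\sC(v^\star w)\ell(\sM(v^\star w))$, where the coefficient $\sC$ is $\pm1$ in the $A$ and $E$ parts and $\pm\bi$ in the $D\bi$ part. The block of $\MM_d(\bar\ell)$ is then obtained from $\MM_d(\ell)$ by conjugating only the $\ell$-values while keeping the coefficients $\sC$ fixed. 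This gives
\[
\MM^{(i)}_d(\bar\ell)=\begin{bmatrix}\bar A&\bar D\bi\\-\bar D^{*}\bi&\bar E\end{bmatrix},
\]
which is the entrywise complex conjugate of $\begin{bmatrix}A&-D\bi\\ D^{*}\bi&E\end{bmatrix}$.

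I expect the real obstacle to be noticing that this conjugation is not simply $\overline{\MM_d(\ell)}$: the $\bi$ factors flip sign. Two facts resolve it. First, the entrywise conjugate of a PSD matrix is PSD. Second, flipping the sign of the off-diagonal blocks is the congruence by $\diag(I,-I)$, which also preserves positive semidefiniteness. Combining these, $\MM_d(\bar\ell)\succeq0$.

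An even quicker route is available. Since every $\ell(u)$ is already real by the Hermiticity constraint on Hermitian normal-form monomials, we would have $\tilde\ell=\ell$ outright. If that is the intended reading, the proof collapses to checking that $\sM(u)^\star=\sM(u)$ for products of Pauli operators on distinct sites. Even so, I would present the averaging argument above as the robust proof.
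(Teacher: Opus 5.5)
Your proof is correct, and it shares the paper's skeleton: the same averaged functional $\tilde\ell=\frac12(\ell+\bar\ell)$ and the same block form \eqref{sec4:eq1} inherited from Proposition~\ref{prop2}. The difference is in the positivity step.

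The paper applies Lemma~\ref{lm2} to trade each block $\begin{bmatrix}A&D\bi\\-D^{*}\bi&E\end{bmatrix}$ for $\begin{bmatrix}A&D\\D^{*}&E\end{bmatrix}$. It then uses Lemma~\ref{lm0} to see that passing to $\tilde\ell$ replaces this matrix by its real part \eqref{sec4:eq5}, and returns via Lemma~\ref{lm2}. You stay with the complex block instead. You obtain $\MM_d(\bar\ell)\succeq0$ from entrywise conjugation and the congruence by $\diag(I,-I)$, and then conclude by convexity.

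Your route is more elementary and bypasses Lemmas~\ref{lm1} and~\ref{lm2}. The paper's route delivers the real reformulation \eqref{sec4:eq4} as a by-product. That reformulation is what is used afterwards to set up real SDPs and to pair the Fourier blocks $\lambda_k$ and $\lambda_{L-k}$.

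Your closing remark is also right, and worth stating up front. Normal-form monomials are Hermitian in $\sP_N$, so the constraint $\ell(u^\star)=\overline{\ell(u)}$ already gives $\ell(u)\in\bR$, hence $\tilde\ell=\ell$. Both the paper's claims $A^*=A$, $B^*=-B$ in the proof of Proposition~\ref{prop2} and your formula for the lower-left block of $\MM_d^{(i)}(\bar\ell)$ tacitly rely on these real $\ell$-values.

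The cleanest way to dissolve your ``obstacle'' is the identity $\MM_d(\bar\ell)=\MM_d(\ell)^{*}$. It holds for every $\ell$ because $\sM(w^\star v)=\sM(v^\star w)$ and $\sC(w^\star v)=\overline{\sC(v^\star w)}$. It gives $\MM_d(\bar\ell)\succeq0$ with no block bookkeeping.
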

\begin{proof}
From the proof of Proposition \ref{prop2}, we see that the PSD constraints of \eqref{mom} can be expressed as
\begin{equation}\label{sec4:eq3}
\MM_{d}^{(i)}(\ell)=\begin{bmatrix}
A^{(i)}&D^{(i)}\bi\\
-(D^{(i)})^{*}\bi&E^{(i)}
\end{bmatrix}\succeq0, \quad i = 1,2,3,4.
\end{equation}
By Lemma \ref{lm2}, \eqref{sec4:eq3} is equivalent to
\begin{equation}\label{sec4:eq4}
\begin{bmatrix}
A^{(i)}&D^{(i)}\\
(D^{(i)})^{*}&E^{(i)}
\end{bmatrix}\succeq0, \quad i = 1,2,3,4.
\end{equation}
For any feasible solution $\ell$ to \eqref{mom}, we define a linear functional $\tilde{\ell}:[\sP_N]_{2d}\rightarrow\C$ by 
\begin{equation}
 \tilde{\ell}(u)\coloneqq\frac{1}{2}\left(\ell(u)+\overline{\ell(u)}\right), \quad\forall u\in\tilde{W}_{2d}.
\end{equation}
Clearly, $\tilde{\ell}$ satisfies that $\tilde{\ell}(u)\in\bR$ for $u\in\tilde{W}_{2d}$.
By Lemma \ref{lm0}, the PSD constraints \eqref{sec4:eq4} for $\tilde{\ell}$ become
\begin{equation}\label{sec4:eq5}
\begin{bmatrix}
\Re(A^{(i)})&\Re(D^{(i)})\\
\Re((D^{(i)})^{*})&\Re(E^{(i)})
\end{bmatrix}\succeq0, \quad i = 1,2,3,4,
\end{equation}
where we use $\Re(\cdot)$ to denote the real part of a matrix.
So $\tilde{\ell}$ is again a feasible solution to \eqref{mom}. Moreover, we have $\tilde{\ell}(H)=\ell(H)$. We thus complete the proof.
\end{proof}

\subsubsection{Translation symmetry of the lattices}\label{ts}
Because the Heisenberg Hamiltonians are invariant under lattice translations, we obtain the following moment replacement rule.
\begin{proposition}\label{prop4}
In the moment relaxation \eqref{mom} for Heisenberg models, there is no loss of generality in assuming that 
\begin{equation}\label{sec4:eq6}
    \ell(\upsilon_k(u))=\ell(u), \quad \forall u\in\tilde{W}_{2d},
\end{equation}
for any translation of sites $\upsilon_k\colon i\rightarrow i+k$, $k\in[L]$.
\end{proposition}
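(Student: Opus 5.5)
We prove the statement by averaging over the translation group, following the same pattern as Propositions~\ref{prop1}--\ref{prop3}. Let $G$ denote the group of lattice translations: the cyclic group $\{\upsilon_k\}_{k\in[L]}$ in 1D, and $\{\upsilon_{(k_1,k_2)}\}_{k_1,k_2\in[L]}$ acting by $(i,j)\mapsto(i+k_1,j+k_2)$ in 2D. Each $\upsilon\in G$ permutes site indices. It therefore extends to a $\star$-automorphism of $\C\langle\{\sigma_i^a\}\rangle$ that maps the ideal $\sI_N$ onto itself, because the relations defining $\sI_N$ are permuted among themselves. Hence $\upsilon$ descends to a $\star$-automorphism of $\sP_N$ that preserves degree. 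Because of periodic boundary conditions, $\upsilon(H)=H$ for each of the Hamiltonians \eqref{model1}--\eqref{model4}.

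Given a feasible $\ell$ for \eqref{mom}, define
\[
\tilde\ell(u)\coloneqq\frac{1}{|G|}\sum_{\upsilon\in G}\ell(\upsilon(u)),\qquad u\in\tilde W_{2d}.
\]
Then $\tilde\ell(\upsilon_k(u))=\tilde\ell(u)$ holds automatically, since $G$ is a group and reindexing the sum does not change it. It remains to check four things.
\begin{enumerate}
\item $\tilde\ell(1)=1$, which is immediate.
\item $\tilde\ell(u^\star)=\overline{\tilde\ell(u)}$. This follows because $\upsilon(u^\star)=\upsilon(u)^\star$.
\item Compatibility with the replacement rules. Since $\upsilon$ is an automorphism of $\sP_N$, we have $\upsilon(\NF(w))=\sC(w)\,\upsilon(\sM(w))$, and $\upsilon(\sM(w))$ agrees with a normal-form monomial up to reordering of commuting factors on distinct sites, with no change of coefficient. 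So each $\ell\circ\upsilon$ respects the rules, and so does their average.
\item $\tilde\ell(H)=\ell(H)$, which follows from $\upsilon(H)=H$.
\end{enumerate}
If the earlier normalizations are in force (sign signatures $\xi,\eta$ trivial and $\ell$ real), they are preserved, because translations commute with the sign flips and do not change degrees.

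The main point is positivity. For $p=\sum_{u\in\cB_d}p_uu$, we have $\ell(\upsilon(p)^\star\upsilon(p))=\ell(\upsilon(p^\star p))$. For this to be nonnegative by $\MM_d(\ell)\succeq0$, we need $\upsilon(p)$ to lie in the span of the basis used in the moment matrix. So we must verify that $\cB_d$ is translation invariant. This holds for the sparse bases in use: the contiguous blocks $\sigma_i^{a_1}\cdots\sigma_{i+j-1}^{a_j}$, the long-range pairs $\sigma_i^a\sigma_{i+j}^b$, and the 2D bases $\cB_1,\dots,\cB_4$ are all defined as unions over all base points $i$ (resp.\ $(i,j)$) with indices taken mod $L$. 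It also holds trivially for $\tilde W_d$.

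With this in hand, $\upsilon$ acts on $\cB_d$ as a permutation up to normal-form reordering, so $\MM_d(\ell\circ\upsilon)=P_\upsilon^{\top}\MM_d(\ell)P_\upsilon\succeq0$ for a permutation matrix $P_\upsilon$. Then $\MM_d(\tilde\ell)$ is an average of PSD matrices and is therefore PSD. Thus $\tilde\ell$ is feasible for \eqref{mom}, has the same objective value as $\ell$, and satisfies \eqref{sec4:eq6}, which proves the claim.
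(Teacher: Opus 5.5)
Your proof is correct and takes the same approach as the paper, which also averages $\ell$ over the translations $\upsilon_k$ and asserts that feasibility and the objective value are preserved. You go further by verifying the steps the paper calls straightforward, most importantly that the sparse basis $\cB_d$ is translation-invariant, so each $\ell\circ\upsilon$ has a permuted and hence PSD moment matrix; your average $\frac{1}{|G|}\sum_{\upsilon\in G}\ell(\upsilon(u))$ is also the intended formula, since the paper's displayed $\frac{1}{L}\sum_{k=1}^L\ell(u)$ drops the $\upsilon_k$.
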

\begin{proof}
For any feasible solution $\ell$ to \eqref{mom}, consider the linear functional $\tilde{\ell}$ defined by
\begin{equation}
    \tilde{\ell}(u)=\frac{1}{L}\sum_{k=1}^L\ell(u), \quad \forall u\in\tilde{W}_{2d}.
\end{equation}
It is straightforward to verify that $\tilde{\ell}$ is feasible for \eqref{mom} and that $\tilde{\ell}(H)=\ell(H)$. This proves the claim.
\end{proof}

Under periodic boundary conditions, the relation \eqref{sec4:eq6} induces a block structure in each $\MM_d^{(i)}(\ell)$, $i=1,2,3,4$, after the corresponding monomial subbasis is ordered appropriately. Every block is then an $L\times L$ circulant matrix, as shown in~\cite{wang2024certifying}.
For instance, in the 1D case, consider the block $T$ indexed by $\{\sigma_i^x\}_{i=1}^L$. We have $T_{i,j}=T_{j,i}=\ell(\sigma_i^x\sigma_j^x)=\ell(\sigma_1^x\sigma_{j-i+1}^x)$ which implies that $T$ is a symmetric circulant matrix.
Any circulant matrix of size $L$ can be diagonalized by a discrete Fourier transform $P\in\C^{L\times L}$ with
\begin{equation}
    P_{i,j}=\frac{1}{\sqrt{L}}\mathrm{e}^{-2\pi\bi(i-1)(j-1)/L},\quad i,j=1,\ldots,L,
\end{equation}
and the resulting diagonal elements are
\begin{equation}\label{sec4:eq7}
    \lambda_k=c_0+c_1\omega^{-k}+c_2\omega^{-2k}+\cdots+c_{L-1}\omega^{-(L-1)k},\quad k=0,\ldots,L-1,
\end{equation}
where $\omega=\mathrm{e}^{\frac{2\pi\bi}{L}}$ is a primitive $L$-th root of unity.
This fact enables us to block-diagonalise $\MM_{d}^{(i)}(\ell), i=1,2,3,4$ as follows. 

Let $\cB_d^{(i)}$ be arranged such that monomials of the same type with varying site label $i$ appear contiguously for $i=1,\ldots,L$ (e.g., $\sigma_1^x\sigma_{2}^x,\sigma_2^x\sigma_{3}^x,\ldots,\sigma_L^x\sigma_{1}^x$). Then, $\mathbf{M}^{(1)}_{d}(\ell)$ has the following block form ($t\coloneqq(|\cB_d^{(1)}|-1)/L$):
\begin{equation}
    G\coloneqq\begin{bmatrix}
    1&\bc_1^{\intercal}&\bc_2^{\intercal}&\cdots&\bc_t^{\intercal}\\
    \bc_1&G_{1,1}&G_{1,2}&\cdots&G_{1,t}\\
    \bc_2&G_{2,1}&G_{2,2}&\cdots&G_{2,t}\\
    \vdots&\vdots&\vdots&\ddots&\vdots\\
    \bc_t&G_{t,1}&G_{t,2}&\cdots&G_{t,t}\\
    \end{bmatrix}\,,
\end{equation}
where $\bc_j\coloneqq(c_j,\ldots,c_j)\in\bR^L$ and each $G_{j,k}$ is a circulant matrix. Let $U_G=\diag(1,P,\ldots,P)\in\C^{(1+Lt)\times(1+Lt)}$. We have $G=U_GD_GU_G^{*}$ where 
\begin{equation}
    D_G\coloneqq\begin{bmatrix}
    1&\bd_1^{\intercal}&\bd_2^{\intercal}&\cdots&\bd_t^{\intercal}\\
    \bd_1&D_G^{1,1}&D_G^{1,2}&\cdots&D_G^{1,t}\\
    \bd_2&D_G^{2,1}&D_G^{2,2}&\cdots&D_G^{2,t}\\
    \vdots&\vdots&\vdots&\ddots&\vdots\\
    \bd_t&D_G^{t,1}&D_G^{t,2}&\cdots&D_G^{t,t}\\
    \end{bmatrix}\,,
\end{equation}
with $\bd_j\coloneqq(c_j\sqrt{L},0,\ldots,0)\in\bR^L$ and diagonal matrices $D_G^{j,k}$.
Moreover, for $i=2,3,4$, $\mathbf{M}^{(i)}_{d}(\ell)$ has the following block form ($s\coloneqq(|\cB_d^{(i)}|-1)/L$):
\begin{equation}
    H\coloneqq\begin{bmatrix}
    H_{1,1}&H_{1,2}&\cdots&H_{1,s}\\
    H_{2,1}&H_{2,2}&\cdots&H_{2,s}\\
    \vdots&\vdots&\ddots&\vdots\\
    H_{s,1}&H_{s,2}&\cdots&H_{s,s}\\
    \end{bmatrix}\,,
\end{equation}
where each $H_{j,k}$ is a circulant matrix. Let $U_H=\diag(P,\ldots,P)\in\C^{(Ls)\times(Ls)}$. We have $H=U_HD_HU_H^{*}$ where 
\begin{equation}
    D_H\coloneqq\begin{bmatrix}
    D_H^{1,1}&D_H^{1,2}&\cdots&D_H^{1,s}\\
    D_H^{2,1}&D_H^{2,2}&\cdots&D_H^{2,s}\\
    \vdots&\vdots&\ddots&\vdots\\
    D_H^{s,1}&D_H^{s,2}&\cdots&D_H^{s,s}\\
    \end{bmatrix}\,,
\end{equation}
with diagonal matrices $D_H^{j,k}$. After reordering the rows and columns, both $D_G$ and $D_H$ are block diagonal. Hence, PSDness of $G$ and $H$ can be imposed separately on each diagonal block of $D_G$ and $D_H$.

Proposition~\ref{prop3} provides a further reduction on the SDP size by removing redundant PSD constraints. Specifically, it permits us to replace the complex PSD constraints \eqref{sec4:eq3} with the real PSD constraints \eqref{sec4:eq4} in the moment relaxation \eqref{mom}. As a result, the diagonal elements \eqref{sec4:eq7} satisfy
\begin{equation}
    \lambda_0\in\bR,\lambda_1=\overline{\lambda}_{L-1},\ldots,\lambda_{\frac{L}{2}-1}=\overline{\lambda}_{\frac{L}{2}+1},\lambda_{\frac{L}{2}}\in\bR.
\end{equation}
Therefore, for each $i\in\{1,\ldots,\frac{L}{2}-1\}$, the diagonal block consisting of the $\lambda_i$'s is conjugate to the diagonal block consisting of the $\lambda_{L-i}$'s, and so it suffices to impose PSD constraints on the former only.

In the 2D case, we may perform two rounds of block-diagonalization, first along the horizontal direction and second along the vertical direction. For example, consider the block $T$ indexed by $\{\sigma_{(i,j)}^x\}_{i,j=1}^L$. We have $T_{(i_1,i_2),(j_1,j_2)}=T_{(j_1,j_2),(i_1,i_2)}=\ell(\sigma_{(i_1,j_1)}^x\sigma_{(i_2,j_2)}^x)=\ell(\sigma_{1,1}^x\sigma_{(j_1-i_1+1,j_2-i_2+1)}^x)$. For each pair $(j_1,j_2)\in[L]\times[L]$, we perform the block-diagonalization procedure as above for the block with rows indexed by $\{\sigma_{(i,j_1)}^x\}_{i=1}^L$ and columns indexed by $\{\sigma_{(i,j_1)}^x\}_{i=1}^L$. Consequently, we obtain a block-diagonal matrix where each diagonal block is again a circulant matrix of size $L$. Now we can apply the block-diagonalization procedure once again to each diagonal block so that the block size is further reduced.

\subsubsection{Permutation symmetry on $x,y,z$}\label{ps}
The fact that the Hamiltonian $H$ of Heisenberg models is invariant under any permutation of $x,y,z$ yields the following moment replacement rule.
\begin{proposition}\label{prop5}
In the moment relaxation \eqref{mom} for Heisenberg models, there is no loss of generality in assuming that 
\begin{equation}\label{perms}
\ell(\tau(u))=\ell(u), \quad \forall u\in\tilde{W}_{2d},
\end{equation}
for any permutation $\tau$ acting on $x,y,z$.
\end{proposition}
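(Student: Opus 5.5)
The plan is a group-averaging argument like those for Propositions~\ref{prop1} and \ref{prop4}. The difficulty is that a transposition of letters is not an automorphism of $\sP_N$, so I would first replace each permutation $\tau$ by a genuine $\star$-automorphism $\phi_\tau$ of $\sP_N$ that agrees with $\tau$ on every monomial that still matters.

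\emph{Even permutations.} For a cyclic $\tau$, such as $x\to y\to z\to x$ applied at every site, all the relations generating $\sI_N$ are mapped to relations of the same list. For example, $\sigma_i^x\sigma_i^y-\bi\sigma_i^z$ goes to $\sigma_i^y\sigma_i^z-\bi\sigma_i^x$. Hence $\tau$ descends to a $\star$-automorphism of $\sP_N$ and I take $\phi_\tau=\tau$.

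\emph{Transpositions.} A transposition, say $x\leftrightarrow y$, reverses the sign of $\bi$ in the rules \eqref{rule1}--\eqref{rule6}, so it is not an automorphism. I would instead use
\[
\phi:(\sigma_i^x,\sigma_i^y,\sigma_i^z)\mapsto(\sigma_i^y,\sigma_i^x,-\sigma_i^z)\quad\text{for all } i .
\]
This is conjugation by a site-wise unitary (a $\pi$-rotation about the $x{+}y$ axis), so it preserves $\sI_N$. For a normal-form monomial $u$ it gives $\phi(u)=(-1)^{n_z(u)}\tau(u)$, where $n_z(u)$ is the number of $z$ letters in $u$. By the standing assumption after Proposition~\ref{prop2}, $\ell(u)\neq0$ only if $\eta(u)=(1,1,1)$, which means $u$ contains each letter an even number of times. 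On all such monomials $\phi(u)=\tau(u)$. The other transpositions are handled in the same way, or obtained by composing with cyclic permutations.

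\emph{Invariance of the feasible set.} Next I would show that, for any $\star$-automorphism $\phi$ in the group $\Phi$ generated by these maps, $\ell\circ\phi$ is feasible for \eqref{mom} with $\ell(H)=\ell(\phi(H))$. The objective is preserved because each term $\sigma_i^a\sigma_j^a$ of $H$ contains two letters of the same type, so the sign $(-1)^{n_z}$ is $+1$, and $H$ is symmetric under permutations of $x,y,z$. For positivity, the basis $\cB_d$ is closed under letter permutations, so $\phi$ maps $\cB_d$ to itself up to signs, i.e.\ it acts on $\operatorname{span}\cB_d$ by a signed permutation matrix $Q$. Then $\MM_d(\ell\circ\phi)=Q^{*}\MM_d(\ell)Q\succeq0$. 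The linear constraints (the rules \eqref{rules}, Hermiticity, and $\ell(1)=1$) are preserved because $\phi$ is a $\star$-automorphism. I also need the earlier normalizations to survive: sign symmetry, $\eta$-vanishing, realness, and translation invariance. This holds because $\phi$ commutes with translations and with complex conjugation, and it permutes the sign flips \eqref{eq:ss1} and \eqref{eq:ss2} among themselves.

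\emph{Averaging.} Finally set $\tilde\ell=\frac{1}{|\Phi|}\sum_{\phi\in\Phi}\ell\circ\phi$. Since the feasible set is convex, $\tilde\ell$ is feasible, and $\tilde\ell(H)=\ell(H)$. By construction $\tilde\ell(\phi(u))=\tilde\ell(u)$ for every $\phi\in\Phi$. For $u$ with $\eta(u)=(1,1,1)$ this reads $\tilde\ell(\tau(u))=\tilde\ell(u)$. For the remaining $u$, both sides vanish, since $\eta(\tau(u))$ is just a permutation of $\eta(u)$ and hence is also $\neq(1,1,1)$. This gives \eqref{perms}.

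The main obstacle is the second step: recognizing that odd permutations need a compensating sign on the third letter, and verifying that this sign is invisible on exactly the monomials left nonzero by Proposition~\ref{prop2}. There is a small subtlety in that $\Phi$ may be larger than $S_3$ because sign flips appear, but this is harmless, since those extra elements already fix $\ell$ on the relevant monomials.
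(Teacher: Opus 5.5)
Your proposal is correct, but it argues differently from the paper. The paper never lifts the letter permutations to automorphisms. It works with the exact problem \eqref{ncpop-d}: since $H-\lambda$ is $\mathcal S$-invariant, Theorem~\ref{thm1} lets it replace nonnegativity by membership in the cone $\Sigma_N^{\mathcal S}$ of $\mathcal S$-invariant SOHS elements. It then dualizes to \eqref{momsym}, where positivity is only required on invariant $f$. For such $f$ the averaged functional $\tilde\ell(f)=\frac16\sum_{\tau}\ell(\tau(\NF(f)))$ equals $\ell(f)\ge0$, so feasibility is immediate. The fact that transpositions are not $\star$-automorphisms never has to be confronted.

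The price is that this argument is made for the dual of the invariant-SOHS problem, not for the order-$d$ relaxation \eqref{mom} itself. A bare transposition acting on normal forms does not satisfy $\tau(\NF(v^\star w))=\NF(\tau(v)^\star\tau(w))$; take $v=\sigma_1^x$, $w=\sigma_1^y$. So $\MM_d(\tilde\ell)\succeq0$ does not follow from it directly.

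Your construction supplies exactly this step. The lifts $\phi$ give $\MM_d(\ell\circ\phi)=Q^{*}\MM_d(\ell)Q$ with $Q$ a signed permutation. You also make explicit that $\ell(\tau(u))=\ell(u)$ for odd $\tau$ holds only because of the normalization of Proposition~\ref{prop2}. Indeed, a rotation-invariant functional satisfies $\ell(\tau(u))=-\ell(u)$ under $x\leftrightarrow y$ whenever $u$ has an odd number of $z$ letters.

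Under that normalization your averaged functional coincides with the paper's $\tilde\ell$. So your proof can be read as the finite-order justification the paper leaves implicit, at the cost of the extra sign bookkeeping.
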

\begin{proof}
Let $\mathcal{S}$ be the symmetric group acting on $x,y,z$, and let 
\begin{equation}
    \Sigma_N^{\mathcal{S}}\coloneqq\{f\in\sP_N\mid f\text{ is a SOHS such that } \tau(\NF(f)) = \NF(f),\, \forall\tau\in\mathcal{S}\}.
\end{equation}
By Theorem \ref{thm1}, any nonnegative element of $\sP_N$ that is invariant under $\mathcal{S}$ lies in $\Sigma_N^{\mathcal{S}}$.
The program \eqref{ncpop-d} thus becomes
\begin{equation}\label{ncpop-d1}
\begin{aligned}\max\limits_{\lambda}&\quad \lambda\\
    \text{subject to:}&\quad H-\lambda\in\Sigma_N^{\mathcal{S}},
    \end{aligned}
\end{equation}
whose dual reads as
\begin{equation}\label{momsym}
\begin{aligned}
    \min\limits_{\ell} &\quad \ell(H)\\
    \text{subject to:} &\quad \ell(f) \geq 0, \quad \forall f \in \Sigma_N^{\mathcal{S}},\\
    &\quad\ell(u^{\star})=\overline{\ell(u)},\quad \forall u\in\tilde{W},\\
    & \quad \ell(1)=1.
\end{aligned}
\end{equation}
For any feasible solution $\ell$ to \eqref{momsym}, we define a linear functional $\tilde{\ell} : \sP_N \to \C$ by $\tilde{\ell}(f) = \frac{1}{6}\sum_{\tau\in\mathcal{S}}\ell(\tau(\NF(f)))$. 
It is clear that $\tilde{\ell}(H) = \ell(H)$ since $H$ is invariant under the action of $\mathcal{S}$. 
In addition, $\tilde{\ell}(1)=1$, $\tilde{\ell}(u^{\star})=\overline{\tilde{\ell}(u)}$ for all $u\in\tilde{W}$, and $\tilde{\ell}(f)\geq0$ for all $f\in\Sigma_N^{\mathcal{S}}$.
Hence, $\tilde{\ell}$ is feasible for \eqref{momsym} and yields the same objective value as $\ell$. Moreover, $\tilde{\ell}(\tau(u))=\tilde{\ell}(u)$ for all $u\in\tilde{W}$ and $\tau\in\mathcal{S}$.
Therefore, there is no loss of generality in assuming the relations in \eqref{perms}.
\end{proof}

Proposition~\ref{prop5} implies a further reduction on the SDP size. That is, the PSD constraints $\MM_{d}^{(i)}(\ell)\succeq0$, $i=2,3,4$ are identical due to \eqref{perms}. Thus, it suffices to retain one of them in the moment relaxation \eqref{mom}.

\subsubsection{Mirror/dihedral symmetry of the lattices}\label{ms}
For 1D Heisenberg models, invariance of the Hamiltonian under the action of the mirror symmetry of the chain lattice yields the following moment replacement rule.
\begin{proposition}\label{prop60}
In the moment relaxation \eqref{mom} for 1D Heisenberg models, there is no loss of generality in assuming that 
\begin{equation}
\ell(\omega(u))=\ell(u), \quad \forall u\in\tilde{W}_{2d},
\end{equation}
where $\omega$ maps the subscript $i$ to $L-i$. 
\end{proposition}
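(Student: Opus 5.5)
The plan is to argue exactly as in Propositions~\ref{prop1}, \ref{prop4} and \ref{prop5}: symmetrize an arbitrary feasible functional over the two-element group $\{\mathrm{id},\omega\}$. Given a feasible $\ell$ for \eqref{mom}, define
\begin{equation*}
\tilde{\ell}(u)\coloneqq\frac{1}{2}\left(\ell(u)+\ell(\omega(u))\right),\quad\forall u\in\tilde{W}_{2d},
\end{equation*}
where $\omega$ acts on monomials by relabelling sites $i\mapsto L-i$ (indices mod $L$). We extend $\omega$ linearly to $\sP_N$. Since $\omega$ is an involution, $\tilde{\ell}(\omega(u))=\tilde{\ell}(u)$ holds automatically. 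It then suffices to check that $\tilde{\ell}$ is feasible and that $\tilde{\ell}(H)=\ell(H)$.

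First I will verify that $\omega$ is a well-defined $\star$-automorphism of $\sP_N$. It permutes the generators $\sigma_i^a\mapsto\sigma_{L-i}^a$ without changing the Pauli label, so it maps the ideal $\sI_N$ onto itself. The single-site relations are preserved, and the commutation relations between distinct sites are preserved because the site map is a bijection. The map also commutes with $\star$. Consequently $\omega$ preserves normal forms up to reordering sites: $\omega(\NF(w))=\NF(\omega(w))$, with the same coefficient $\sC(w)$, because reordering commuting factors on distinct sites introduces no sign. From this:
\begin{itemize}
\item $\tilde{\ell}(1)=1$;
\item $\tilde{\ell}(u^\star)=\overline{\tilde{\ell}(u)}$;
\item compatibility with the replacement rules \eqref{rules} is inherited from $\ell$.
\end{itemize}

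Second, I need the objective to be invariant. In the 1D models \eqref{model1} and \eqref{model2}, the term $\sigma_i^a\sigma_{i+1}^a$ maps to $\sigma_{L-i}^a\sigma_{L-i-1}^a$, which is again a nearest-neighbour term. The same holds for next-nearest-neighbour terms, and the sum over $i$ is merely reindexed by periodicity. Hence $\omega(H)=H$ in $\sP_N$, and therefore $\tilde{\ell}(H)=\ell(H)$.

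The main step is positivity of $\MM_d(\tilde{\ell})$. Here I would use the form ``$\ell(p^\star p)\ge0$ for all $p$ in the span of the basis'', applied to $\omega(p)$. This requires the monomial basis to be mapped into itself by $\omega$. For $\tilde{W}_d$ this is clear. For the sparse basis $\cB_d$, contiguous blocks $\sigma_i^{a_1}\cdots\sigma_{i+j-1}^{a_j}$ map to contiguous blocks with the Pauli labels reversed, and the long-range pairs $\sigma_i^a\sigma_{i+j}^b$ map to pairs at the same distance. Thus $\omega$ permutes $\cB_d$, up to the normal-form reordering, which carries coefficient $1$.

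Let $\Pi$ be the induced permutation matrix. Then $\MM_d(\ell\circ\omega)=\Pi^{\intercal}\MM_d(\ell)\Pi\succeq0$, and $\MM_d(\tilde{\ell})$ is the average of two PSD matrices, hence PSD. The delicate point I expect is this bookkeeping: checking that $\omega(v^\star w)=\omega(v)^\star\omega(w)$ lands on the correct entry after normal forms. It holds because $\omega$ is a $\star$-homomorphism.

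Finally, all the previous reductions are preserved, so they can be imposed simultaneously:
\begin{itemize}
\item sign signatures, since $\omega$ does not touch Pauli labels;
\item reality;
\item translation invariance, because $\omega\upsilon_k\omega=\upsilon_{-k}$.
\end{itemize}
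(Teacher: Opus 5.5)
Your proposal is correct and follows the paper's proof: symmetrize over $\{\mathrm{id},\omega\}$ via $\tilde{\ell}=\frac{1}{2}(\ell+\ell\circ\omega)$ and check feasibility and $\tilde{\ell}(H)=\ell(H)$, which the paper simply declares straightforward. Your verification fills in the details the paper omits. The key points are that $\omega$ is a $\star$-automorphism of $\sP_N$ and that it permutes the basis (both $\tilde{W}_d$ and the sparse $\cB_d$). It follows that $\MM_d(\ell\circ\omega)$ is a permutation conjugate of $\MM_d(\ell)$.
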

\begin{proof}
For any feasible solution $\ell$ to \eqref{mom}, consider the linear functional $\tilde{\ell}$ defined by
\begin{equation}
    \tilde{\ell}(u)=\frac{1}{2}\bigl(\ell(u)+\ell(\omega(u))\bigr), \quad \forall u\in\tilde{W}_{2d}.
\end{equation}
It is straightforward to verify that $\tilde{\ell}$ is feasible for \eqref{mom} and that $\tilde{\ell}(H)=\ell(H)$. This proves the claim.
\end{proof}

For 2D Heisenberg models, invariance of the Hamiltonian under the action of the symmetry group of the square lattice yields the following moment replacement rule.
\begin{proposition}\label{prop6}
Let $\mathcal{D}_4$ denote the symmetry group of the square lattice, i.e., the dihedral group. In the moment relaxation \eqref{mom} for a 2D Heisenberg model, there is no loss of generality in assuming that
\begin{equation}
\ell(\omega(u))=\ell(u), \quad \forall u\in\tilde{W}_{2d},
\end{equation}
for any $\omega\in\mathcal{D}_4$. 
\end{proposition}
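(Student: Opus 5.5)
The argument is a group-averaging (Reynolds operator) argument, parallel to the proofs of Propositions~\ref{prop4}, \ref{prop5} and~\ref{prop60}. Each $\omega\in\mathcal{D}_4$ is a permutation of the sites $\{(i,j)\}_{i,j\in[L]}$ preserving the periodic square lattice: rotation by $90^\circ$ about a site, reflections across the axes and the diagonals. It induces a $\star$-automorphism of $\sP_N$ via $\sigma_{(i,j)}^a\mapsto\sigma_{\omega(i,j)}^a$. Since only site labels are permuted and the Pauli relations $\sI_N$ are stated uniformly over sites, this map sends $\sI_N$ to itself. Hence it commutes with taking normal forms up to reordering commuting factors on distinct sites, so $\sC(\omega(u))=\sC(u)$ and $\omega(\tilde W_{2d})=\tilde W_{2d}$. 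It also preserves degrees.

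Given a feasible $\ell$ for \eqref{mom}, I would define
\[
\tilde\ell(u)\coloneqq\frac{1}{8}\sum_{\omega\in\mathcal{D}_4}\ell(\omega(u)),\qquad u\in\tilde W_{2d},
\]
and check feasibility of $\tilde\ell$ term by term.

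First, $\tilde\ell(1)=1$ and $\tilde\ell(u^\star)=\overline{\tilde\ell(u)}$, because $\omega(u^\star)=\omega(u)^\star$. The replacement rules \eqref{rules} are respected because $\omega$ is an algebra automorphism compatible with $\sI_N$.

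Second, the PSD condition. For every $p=\sum_{u}p_uu$, I would show
\[
\tilde\ell(p^\star p)=\frac18\sum_\omega\ell\bigl(\omega(p)^\star\omega(p)\bigr)\ge0.
\]
For this I need $\ell(\omega(p)^\star\omega(p))\ge0$, i.e., that the map $\ell\circ\omega$ has PSD moment matrix. The moment matrix of $\ell\circ\omega$ is $\MM_d(\ell)$ restricted to the basis $\omega(\cB_d)$.

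Third, the objective and the symmetry conclusion. $\tilde\ell(H)=\ell(H)$ because each of \eqref{model3}--\eqref{model4} is $\mathcal{D}_4$-invariant. Horizontal and vertical bonds are swapped by rotations and diagonal reflections, and the two diagonal $J_2$ bonds $(1,1)$ and $(1,-1)$ are swapped by axis reflections. Finally, $\tilde\ell(\omega(u))=\tilde\ell(u)$ holds because $\mathcal{D}_4$ is a group, so reindexing the sum gives the invariance.

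The main obstacle is the PSD step, since the relaxation uses the sparse basis $\cB_d$ rather than $\tilde W_d$. I would check that $\cB_d$ is closed under $\mathcal{D}_4$ up to translation and the identification of words with their normal forms. For $\cB_2$ this holds since the offset window $-4\le s,t\le4$ is symmetric. For $\cB_4$ it holds since plaquettes map to plaquettes. For $\cB_3$ it requires that the offset set $\mathcal{T}$ of Fig.~\ref{threebody}, consisting of two straight triples and four L-shapes, is a union of $\mathcal{D}_4$-orbits modulo translation. Given this closure, $\omega$ acts on $\cB_d$ as a permutation matrix $Q_\omega$, possibly combined with phases from reordering factors. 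Then $\MM_d(\ell\circ\omega)=Q_\omega^{*}\MM_d(\ell)Q_\omega\succeq0$, and averaging preserves PSDness.

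If closure should fail for some chosen basis, I would fall back to the argument of Proposition~\ref{prop5}. That is, I would pass to the full dual \eqref{momsym} with $\mathcal{S}$ replaced by $\mathcal{D}_4$, where positivity over all of $\Sigma_N$ is automatically $\omega$-invariant by Theorem~\ref{thm1}, and conclude in the same way.
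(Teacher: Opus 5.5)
Your proposal is correct and follows the paper's proof: the paper defines exactly $\tilde{\ell}(u)=\frac{1}{8}\sum_{\omega\in\mathcal{D}_4}\ell(\omega(u))$ and calls feasibility and $\tilde{\ell}(H)=\ell(H)$ straightforward, and your $\mathcal{D}_4$-closure check of the sparse bases (the symmetric offset window in $\cB_2$, plaquettes in $\cB_4$, and $\mathcal{T}$ consisting of the two straight triples plus all four L-orientations) is exactly the detail that makes the PSD step go through. So the fallback via \eqref{momsym} is never needed, which is just as well, since an argument on the full dual would not by itself establish the claim for a finite-order relaxation.
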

\begin{proof}
For any feasible solution $\ell$ to \eqref{mom}, consider the linear functional $\tilde{\ell}$ defined by
\begin{equation}
    \tilde{\ell}(u)=\frac{1}{8}\sum_{\omega\in\mathcal{D}_4}\ell(\omega(u)), \quad \forall u\in\tilde{W}_{2d}.
\end{equation}
It is straightforward to verify that $\tilde{\ell}$ is feasible for \eqref{mom} and that $\tilde{\ell}(H)=\ell(H)$. This proves the claim.
\end{proof}

\vspace{1em}
To end this section, we list the maximal block sizes of different SDP relaxations for 1D Heisenberg models in Table~\ref{blocksize}, from which we could see that a remarkable reduction on the
SDP size is achieved by fully exploiting algebraic structures.

\begin{table}[htbp]\label{blocksize}
\caption{Maximal block sizes of different SDP relaxations for 1D Heisenberg models ($r=1$).}
\centering
\renewcommand\arraystretch{1.5}
\begin{tabular}{|c|c|c|}
\hline
&General $N,d$&$N=100,d=4$\\
\hline
The original SDP&$\frac{(3N)^{d+1}-1}{3N-1}$&$8,127,090,301$\\
\hline
After exploiting equalities&$\sum_{i=0}^d\binom{N}{i}\cdot3^i$&$322,029,976$\\
\hline
After exploiting sparsity&$\frac{3N\cdot(3^d-1)}{2}+1$&$12,001$\\
\hline
After exploiting symmetry&$\frac{3^{d+1}-1}{8}$ if $2\nmid d$; $\frac{3^{d+1}+5}{8}$ if $2\mid d$&$31$\\
\hline
\end{tabular}
\end{table}

\section{Strengthening the SDP relaxations}
In parallel with reducing the SDP size, we can strengthen the moment relaxation \eqref{mom} by incorporating additional valid constraints.
\subsection{Positivity constraints on reduced density matrices}
For any integer $k\ge1$, the $k$-body reduced density matrix is defined as
\begin{equation}\label{positivity}
    \rho_{[k]} = \frac{1}{2^k}\sum_{a_1,\ldots,a_k}\langle\sigma_1^{a_1}\sigma_2^{a_2}\cdots\sigma_k^{a_k}\rangle\pi(\sigma_1^{a_1}\sigma_2^{a_2}\cdots\sigma_k^{a_k}),
\end{equation}
where $a_i\in\{0,x,y,z\}$, $i=1,\ldots,k$ (with $\sigma_i^0 \coloneqq I_2$), and $\langle\cdot\rangle$ means the expectation value at the ground state. Note that $\rho_{[k]}$ is a real PSD matrix of size $2^k \times 2^k$.
Therefore, we may strengthen the moment relaxation \eqref{mom} by requiring that
\begin{equation}\label{positivity1}
\tilde{\rho}_{[k]}(\ell)\coloneqq\sum_{a_1,\ldots,a_k}\ell(\sigma_1^{a_1}\sigma_2^{a_2}\cdots\sigma_k^{a_k})\pi(\sigma_1^{a_1}\sigma_2^{a_2}\cdots\sigma_k^{a_k})\succeq0.
\end{equation}

Note also that the Hamiltonians of the Heisenberg models are U(1)-invariant along the $z$ axis, implying that the reduced density matrix $\rho_{[k]}$ commutes with the total magnetization $\sum_{i=1}^k\sigma_i^z$. Consequently, $\rho_{[k]}$ splits into magnetization blocks: $\rho_{[k]}=\bigoplus_m\rho_{[k]}^{(m)}$, where $m=-\tfrac{k}{2},-\tfrac{k}{2}+1,\dots,\tfrac{k}{2}$ is the subsystem magnetization and each block $\rho_{[k]}^{(m)}$ is of size $\binom{k}{\tfrac{k}{2}+m}$. Therefore, instead of imposing $\tilde{\rho}_{[k]}(\ell)\succeq0$, we impose $\tilde{\rho}^{(m)}_{[k]}(\ell)\succeq0$ blockwise. It is sufficient to retain the blocks with $m\ge0$, because the constraint for $m=-i$ is equivalent to that for $m=i$.

\subsection{The state optimality conditions}
The state optimality conditions can also be utilized to strengthen moment relaxations of noncommutative polynomial optimization problems \cite{araujo2023first,fawzi2024certified}. 

\begin{theorem}[\cite{araujo2023first}, Theorem 3.7]
Let $\ell$ be any optimal solution of \eqref{state}. Then the following state optimality conditions hold:
\begin{equation}\label{lso}
\ell([H,f])=0,\quad\forall f\in\sP_N,
\end{equation}
and
\begin{equation}\label{pso}
    \ell\left(fHf^*-\frac{1}{2}(Hff^*+ff^*H)\right)\ge0,\quad\forall f\in\sP_N.
\end{equation}
\end{theorem}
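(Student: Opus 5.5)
The plan is to use that $\sP_N$ is finite dimensional and $\star$-isomorphic to $\C^{2^N\times 2^N}$ via $\pi$. This lets me represent any feasible $\ell$ of \eqref{state} as a genuine quantum state. Both conditions then follow from the fact that an optimal state is supported on the ground eigenspace of $\pi(H)$, so no perturbative or Lagrangian argument is needed.

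\emph{Step 1 (Riesz representation).} Since $\pi$ is a linear isomorphism onto $\C^{2^N\times2^N}$ and the trace pairing is nondegenerate, every linear $\ell:\sP_N\to\C$ has the form $\ell(f)=\operatorname{tr}(\rho\,\pi(f))$ for a unique matrix $\rho$. I would then read off the constraints of \eqref{state}:
\begin{itemize}
\item $\ell(u^\star)=\overline{\ell(u)}$ on the basis $\tilde W$ gives $\rho=\rho^*$;
\item $\ell(1)=1$ gives $\operatorname{tr}\rho=1$;
\item $\ell\ge0$ on $\Sigma_N$ gives $\operatorname{tr}(\rho A^*A)\ge0$ for every matrix $A$, since $\pi(g^\star g)=\pi(g)^*\pi(g)$ and $\pi$ is onto (equivalently, Theorem~\ref{thm1}). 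Hence $\rho\succeq0$.
\end{itemize}
So feasible $\ell$ are exactly density matrices. Because every density matrix satisfies $\operatorname{tr}(\rho\pi(H))\ge E_{\rm GS}$, with equality for a ground-state projector, the optimal value of \eqref{state} is $E_{\rm GS}$.

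\emph{Step 2 (support on the ground space).} Write $K=\pi(H)-E_{\rm GS}I\succeq0$. Optimality of $\ell$ gives $\operatorname{tr}(\rho K)=0$, that is, $\operatorname{tr}\bigl((K^{1/2}\rho^{1/2})^*(K^{1/2}\rho^{1/2})\bigr)=0$. Hence $K^{1/2}\rho^{1/2}=0$, so $K\rho=0$, and taking adjoints $\rho K=0$. Thus $\pi(H)\rho=\rho\,\pi(H)=E_{\rm GS}\rho$. This is the only step requiring any care, namely passing from a vanishing trace to a vanishing product. It is the main (mild) obstacle, and the square-root factorization above handles it.

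\emph{Step 3 (the two conditions).} For \eqref{lso}, use cyclicity of the trace:
\[
\ell([H,f])=\operatorname{tr}(\rho\pi(H)\pi(f))-\operatorname{tr}(\pi(H)\rho\,\pi(f))=E_{\rm GS}\ell(f)-E_{\rm GS}\ell(f)=0.
\]
For \eqref{pso}, Step 2 gives $\ell(Hff^\star)=E_{\rm GS}\ell(ff^\star)$ and $\ell(ff^\star H)=E_{\rm GS}\ell(ff^\star)$. The left-hand side of \eqref{pso} therefore equals $\ell\bigl(f(H-E_{\rm GS})f^\star\bigr)=\operatorname{tr}\bigl(\rho\,\pi(f)K\pi(f)^*\bigr)$. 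This is nonnegative because $\pi(f)K\pi(f)^*\succeq0$ and $\rho\succeq0$; equivalently, $f(H-E_{\rm GS})f^\star$ is a SOHS by Theorem~\ref{thm1}, so its value is nonnegative by the positivity constraint in \eqref{state}.
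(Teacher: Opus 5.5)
Your proof is correct, but it is not the paper's route. The paper gives no argument of its own and imports Theorem~3.7 of \cite{araujo2023first}. That result is formulated for general noncommutative polynomial optimization problems, where the operators range over arbitrary, possibly infinite-dimensional, representations subject to polynomial constraints. In that setting feasible functionals cannot be identified with density matrices on a fixed finite-dimensional space.

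You instead use the special structure of the Pauli algebra. Since $\pi$ is a $\star$-isomorphism onto $\C^{2^N\times 2^N}$, the feasible set of \eqref{state} is exactly the set of density matrices. The statement then reduces to the elementary fact that $\operatorname{tr}(\rho K)=0$ with $\rho,K\succeq0$ forces $K\rho=\rho K=0$. Each step holds up:
\begin{itemize}
\item your square-root factorization establishing $K\rho=\rho K=0$ is sound;
\item the trace-cyclicity computation for \eqref{lso} is right;
\item the reduction of \eqref{pso} to $\ell\bigl(f(H-E_{\rm GS})f^\star\bigr)\ge0$ is right, and it can be closed either by $\rho\succeq0$ or by Theorem~\ref{thm1}.
\end{itemize}

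Your route gives a short, self-contained proof that uses only $\pi$ and Theorem~\ref{thm1}, i.e., tools already in the paper. It also yields the stronger structural fact $\pi(H)\rho=\rho\,\pi(H)=E_{\rm GS}\rho$. This makes clear that \eqref{lso} and \eqref{pso} are consequences of that eigen-relation which are linear in $\ell$ and do not involve the unknown $E_{\rm GS}$, which is why they can be added to \eqref{mom}. The cited result buys generality beyond finite-dimensional matrix algebras, which this setting does not need.
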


Let $\tilde{\cB}$ be any collection of monomials. The condition \eqref{lso} yields the following valid linear constraints for the moment relaxation \eqref{mom}:
\begin{equation}\label{lso1}
\ell([H,u])=0,\quad\forall u\in\tilde{\cB}.
\end{equation}
Moreover, the condition \eqref{pso} yields the following valid PSD constraint for the moment relaxation~\eqref{mom}:
\begin{equation}\label{pso1}
\left[\ell\left(vHw^*-\frac{1}{2}(Hvw^*+vw^*H)\right)\right]_{v,w\in\tilde{\cB}}\succeq0.
\end{equation}

\begin{remark}
The techniques developed in Section~\ref{structure} can also reduce the size of the PSD constraint \eqref{pso1}.
\end{remark}

\section{Numerical results}
In this section, we report numerical results for four Heisenberg models. The calculations were performed on a workstation with 32 cores and 1~TB of RAM, using {\tt MOSEK 11.0} as the SDP solver.
For the 1D case, the DMRG data were obtained with the Julia packages {\tt ITensors 0.9.14} and {\tt ITensorMPS 0.3.23}.
The code for reproducing the results is available at \url{https://github.com/wangjie212/QMBCertify}. 

\subsection{The 1D case}
The SDP lower bounds on the ground-state energy per spin for the Heisenberg chain \eqref{model1} are displayed in Table~\ref{Tab:B1-Energies}. For comparison, we also report the DMRG upper bounds and the SDP lower bounds obtained in~\cite{wang2024certifying}. We define the relative gap between an SDP lower bound (lb) and a DMRG upper bound (ub) as
\begin{equation*}
    \text{gap}\coloneqq\frac{\text{ub}-\text{lb}}{|\text{ub}|}\times100\%.
\end{equation*}
Table~\ref{Tab:B1-Energies} demonstrates that the new SDP relaxation remains highly accurate as the system size increases. For all tested chains with up to $N=100$ spins, the absolute difference between the new SDP lower bound and the DMRG upper bound is below $10^{-5}$, and the relative gap never exceeds $0.0019\%$. At $N=100$, the new relaxation reduces the gap from $0.0820\%$ to $0.0019\%$, an improvement by a factor of more than $40$. See also Figure~\ref{fig:1}.

\begin{table}[htbp]\label{Tab:B1-Energies}
\caption{Bounds on the ground-state energy per spin of the Heisenberg chain with $N$ spins.}
\centering
\renewcommand\arraystretch{1.2}
\[
\begin{tabular}{c|c|cc|cc}
\Xhline{1pt}
\multirow{2}{*}{$N$}&\multirow{2}{*}{DMRG}&\multicolumn{2}{c|}{SDP Old}&\multicolumn{2}{c}{SDP New}\\
&&value&gap&value&gap\\
\Xhline{1pt}
10 & -0.4515446 &-0.4515446&0.0000\%& -0.4515446 &0.0000\%\\
14 & -0.4473964& -0.4474032&0.0015\%&-0.4473964&0.0000\%\\
18 & -0.4457083& -0.4457344&0.0059\%&-0.4457085 & 0.0000\%\\
20 & -0.4452193& -0.4452516&0.0073\%&-0.4452196 & 0.0000\%\\
22 & -0.4448582& -0.4448981&0.0090\%&-0.4448585 & 0.0000\%\\
26 & -0.4443707& -0.4444334&0.0141\%&-0.4443714 & 0.0001\%\\
30 & -0.4440654& -0.4441512&0.0193\%&-0.4440668& 0.0003\%\\
34 & -0.4438616& -0.4439644&0.0232\%&-0.4438632 & 0.0004\%\\
38 & -0.4437189& -0.4438331&0.0257\%&-0.4437212 & 0.0005\%\\
40 & -0.4436630& -0.4437820&0.0268\%&-0.4436649 & 0.0005\%\\
42 & -0.4436150& -0.4437371&0.0275\%&-0.4436176 & 0.0006\%\\
46 & -0.4435370& -0.4436656&0.0290\%&-0.4435397 & 0.0006\%\\
50 & -0.4434771& -0.4436101&0.0300\%&-0.4434798 & 0.0006\%\\
60 & -0.4433762& -0.4435169&0.0317\%&-0.4433804 & 0.0009\%\\
80 & -0.4432758& -0.4435377&0.0591\%&-0.4432808 & 0.0011\%\\
100 & -0.4432295& -0.4435928&0.0820\%&-0.4432378 &0.0019\%\\
\Xhline{1pt}
\end{tabular}
\]
\end{table}

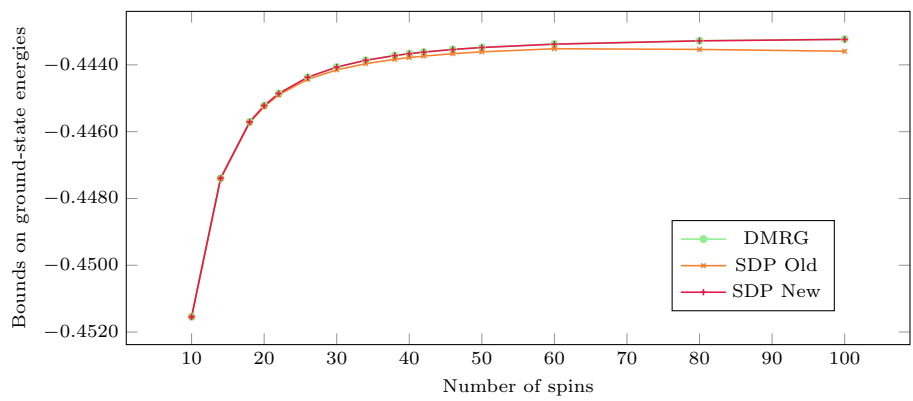
\begin{figure}
\centering
\begin{tikzpicture}
\footnotesize
\scalefont{0.8} 
\begin{axis}[
sharp plot, 
xmode=normal,
width=15cm, height=8cm,  
yticklabel ={\pgfmathprintnumber[fixed, fixed zerofill, precision=4]{\tick}},
xlabel = Number of spins,
ylabel = Bounds on ground-state energies,
xlabel near ticks,
ylabel near ticks,
legend style={at={(0.85,0.1)},anchor=south},
]

\addplot[semithick,mark=*,mark options={scale=0.6}, color=lightgreen] coordinates { 
     (10,     -0.4515446)
     (14,     -0.4473964)
     (18,     -0.4457083)
     (20,     -0.4452193)
     (22,     -0.4448582)
     (26,    -0.4443707)
     (30,    -0.4440654)
     (34,    -0.4438616)
     (38,    -0.4437189)
     (40,    -0.4436630)
     (42,     -0.4436150)
     (46,     -0.4435370)
     (50,     -0.4434771)
     (60,     -0.4433762)
     (80,     -0.4432758)
     (100,     -0.4432295)
  };
\addlegendentry{DMRG}

\addplot[semithick,mark=x,mark options={scale=0.6}, color=color2] coordinates { 
     (10,     -0.4515446)
     (14,    -0.4474032)
     (18,    -0.4457344)
     (20,    -0.4452516)
     (22,    -0.4448981)
     (26,    -0.4444334)
     (30,    -0.4441512)
     (34,    -0.4439644)
     (38,    -0.4438331)
     (40,    -0.4437820)
     (42,     -0.4437371)
     (46,     -0.4436656)
     (50,     -0.4436101)
     (60,     -0.4435169)
     (80,     -0.4435377)
     (100,     -0.4435928)
  };
\addlegendentry{SDP Old}

\addplot[semithick,mark=+,mark options={scale=0.6}, color=color3] coordinates { 
     (10,     -0.4515446)
     (14,    -0.4473964)
     (18,    -0.4457085)
     (20,    -0.4452196)
     (22,    -0.4448585)
     (26,    -0.4443714)
     (30,    -0.4440668)
     (34,    -0.4438632)
     (38,    -0.4437212)
     (40,    -0.4436649)
     (42,     -0.4436176)
     (46,     -0.4435397)
     (50,     -0.4434798)
     (60,     -0.4433804)
     (80,     -0.4432808)
     (100,     -0.4432378)
  };
\addlegendentry{SDP New}

\end{axis}
\end{tikzpicture}
\caption{Bounds on ground-state energies for the Heisenberg chain with $N$ spins.}
\label{fig:1}
\end{figure}  

We now investigate the $J_1$--$J_2$ Heisenberg chain \eqref{model2} with $N=40$ spins.
The SDP lower bounds on the ground-state energy per spin are displayed in Table~\ref{Tab:Heisenberg-2nd-Energies-L40}.\footnote{The value $J_2\simeq0.24117$ is included because it is the critical coupling separating the gapless Luttinger-liquid phase from the spontaneously dimerized phase in the thermodynamic limit. It therefore provides a physically important benchmark near the Berezinskii--Kosterlitz--Thouless transition.} For comparison, we also include the DMRG upper bounds and the SDP lower bounds obtained in \cite{wang2024certifying} in the table.
Table~\ref{Tab:Heisenberg-2nd-Energies-L40} shows that the new SDP lower bounds consistently improve upon those of~\cite{wang2024certifying}. For $J_2\leq0.4$, the relative gaps remain below $0.08\%$, including at the critical coupling $J_2\simeq0.24117$. At the Majumdar--Ghosh point $J_2=0.5$, the SDP relaxation exactly reproduces the ground-state energy per spin, $-0.375$. Although the bounds become less tight in the strongly frustrated regime, the new relative gaps remain below $1\%$ for all reported values; notably, at $J_2=1$, the gap decreases from $1.622\%$ to $0.737\%$. See also Figure~\ref{fig:2}.

\begin{table}[htb!]
\caption{Bounds on the ground-state energy per spin for the $J_1$--$J_2$ Heisenberg chain ($N=40$).} \label{Tab:Heisenberg-2nd-Energies-L40}
\centering
\begin{tabular}{c|c|cc|cc}
\Xhline{1pt}
\multirow{2}{*}{$J_2$}&\multirow{2}{*}{DMRG}&\multicolumn{2}{c|}{SDP Old}&\multicolumn{2}{c}{SDP New}\\
&&value&gap&value&gap\\
\Xhline{1pt}
        0.1 & $-0.4258079$ &-0.4258544&0.011\%& $-0.4258311$ &0.005\%\\
        0.2 & $-0.4089165$ &-0.4089265&0.002\%& $-0.4089219$ &0.001\%\\
        0.24117 & $-0.4023294$ &-0.4023393&0.002\%& $-0.4023344$ &0.001\%\\
        0.3 & $-0.3934175$ &-0.3934631&0.012\%& $-0.3934452$ &0.007\%\\
        0.4 & $-0.3805405$ &-0.3809208&0.100\%& $-0.3808371$ & 0.078\%\\
        0.5 & $-0.3750000$ & -0.3750000&0.000\%&$-0.3750000$ &0.000\%\\
        0.6 & $-0.3808099$ & -0.3816720&0.226\%&$-0.3813683$ &0.147\%\\
        0.7 & $-0.3971992$ & -0.3995204&0.584\%&$-0.3987825$ &0.399\%\\
        0.8 & $-0.4217288$ & -0.4261340&1.045\%&$-0.4246890$ &0.702\%\\
        0.9 & $-0.4520075$ &-0.4583928&1.413\%& $-0.4564972$ & 0.993\%\\
        1.0 & $-0.4865652$ &-0.4944564&1.622\%& $-0.4901516$ &0.737\%\\
        1.5 & $-0.6857166$ &-0.6957046&1.457\%& $-0.6919869$ &0.914\%\\
        2.0 & $-0.9024346$ &-0.9101027&0.850\%& $-0.9067115$ &0.474\%\\
\Xhline{1pt}
\end{tabular}
\end{table}

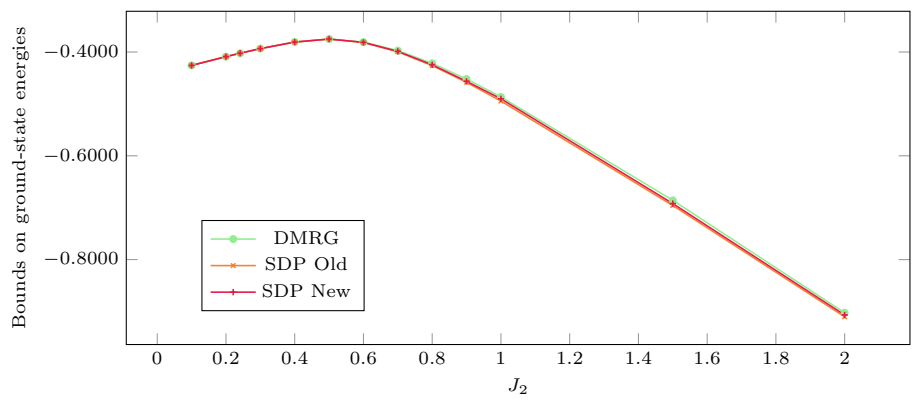
\begin{figure}
\centering
\subfigure[SDP lower
bounds and DMRG upper bounds]{
\begin{tikzpicture}
\footnotesize
\scalefont{0.8} 
\begin{axis}[
sharp plot, 
xmode=normal,
width=15cm, height=8cm,  
yticklabel ={\pgfmathprintnumber[fixed, fixed zerofill, precision=4]{\tick}},
xlabel = $J_2$,
ylabel = Bounds on ground-state energies,
xlabel near ticks,
ylabel near ticks,
legend style={at={(0.15,0.1)},anchor=south},
]

\addplot[semithick,mark=*,mark options={scale=0.6}, color=lightgreen] coordinates { 
     (0.1,     -0.4258079)
     (0.2,     -0.4089165)
     (0.24117, -0.4023294)
     (0.3,     -0.3934175)
     (0.4,     -0.380540)
     (0.5,     -0.3750000)
     (0.6,     -0.3808099)
     (0.7,     -0.3971992)
     (0.8,     -0.4217288)
     (0.9,     -0.4520075)
     (1.0,     -0.4865652)
     (1.5,     -0.6857166)
     (2.0,     -0.9024346)
  };
\addlegendentry{DMRG}

\addplot[semithick,mark=x,mark options={scale=0.6}, color=color2] coordinates { 
     (0.1,     -0.4258544)
     (0.2,     -0.4089265)
     (0.24117, -0.4023393)
     (0.3,     -0.3934631)
     (0.4,     -0.3809208)
     (0.5,     -0.3750000)
     (0.6,     -0.3816720)
     (0.7,     -0.3995204)
     (0.8,     -0.4261340)
     (0.9,     -0.4583928)
     (1.0,     -0.4944564)
     (1.5,     -0.6957046)
     (2.0,     -0.9101027)
  };
\addlegendentry{SDP Old}

\addplot[semithick,mark=+,mark options={scale=0.6}, color=color3] coordinates { 
     (0.1,     -0.4258311)
     (0.2,     -0.4089219)
     (0.24117, -0.4023344)
     (0.3,     -0.3934452)
     (0.4,     -0.3808371)
     (0.5,     -0.3750000)
     (0.6,     -0.3813683)
     (0.7,     -0.3987825)
     (0.8,     -0.4246890)
     (0.9,     -0.4564972)
     (1.0,     -0.4901516)
     (1.5,     -0.6919869)
     (2.0,     -0.9067115)
  };
\addlegendentry{SDP New}

\end{axis}
\end{tikzpicture}}\\
\subfigure[Differences from DMRG upper bounds]{
\begin{tikzpicture}
\footnotesize
\scalefont{0.8} 
\begin{axis}[
sharp plot, 
xmode=normal,
width=15cm, height=8cm,  
yticklabel ={\pgfmathprintnumber[fixed, fixed zerofill, precision=4]{\tick}},
xlabel = $J_2$,
ylabel = Differences,
xlabel near ticks,
ylabel near ticks,
legend style={at={(0.15,0.8)},anchor=south},
]

\addplot[semithick,mark=x,mark options={scale=0.6}, color=color2] coordinates 
{
(0.1, 0.0000465)
(0.2, 0.0000100)
(0.24117, 0.0000099)
(0.3, 0.0000456)
(0.4, 0.0003808)
(0.5, 0.0000000)
(0.6, 0.0008621)
(0.7, 0.0023212)
(0.8, 0.0044052)
(0.9, 0.0063853)
(1.0, 0.0078912)
(1.5, 0.0099880)
(2.0, 0.0076681)
};
\addlegendentry{SDP Old}

\addplot[semithick,mark=+,mark options={scale=0.6}, color=color3] coordinates 
{
(0.1, 0.0000232)
(0.2, 0.0000054)
(0.24117, 0.0000050)
(0.3, 0.0000277)
(0.4, 0.0002971)
(0.5, 0.0000000)
(0.6, 0.0005584)
(0.7, 0.0015833)
(0.8, 0.0029602)
(0.9, 0.0044897)
(1.0, 0.0035864)
(1.5, 0.0062703)
(2.0, 0.0042769)
};
\addlegendentry{SDP New}

\end{axis}
\end{tikzpicture}}
\caption{Bounds on ground-state energies for the $J_1$--$J_2$ Heisenberg chain ($N=40$).}
\label{fig:2}
\end{figure}  

The SDP lower and upper bounds on the first-neighbor correlation $C(1)=\frac{1}{4}\langle \sigma_1^x \sigma_{2}^x \rangle$ are displayed in Table~\ref{tab:B2-C1}, together with the DMRG values and the widths of certified SDP intervals.
The table shows that the new SDP relaxation substantially tightens the bounds on the first-neighbour correlation while consistently enclosing the DMRG values. For $J_2\leq0.3$, the new intervals are extremely narrow, with widths at most $2\times10^{-4}$. In the strongly frustrated regime $J_2\geq0.6$, their widths are reduced by approximately $37\%$--$52\%$ relative to the previous bounds. At the Majumdar--Ghosh point $J_2=0.5$, however, the old and new intervals have the same width. See also Figure~\ref{fig:3}.

\begin{table}[htb!]
\caption{Bounds on $C(1)$ for the $J_1$--$J_2$ Heisenberg chain ($N=40$).}
\label{tab:B2-C1}
\centering
\resizebox{\linewidth}{!}{
\begin{tabular}{c|c|c|c||c|c}
\Xhline{1pt}
$J_2$ & SDP Lower Bound & DMRG & SDP Upper Bound &Width Old&Width New\\
\Xhline{1pt}
0.1 & $-0.1477549$ & $-0.1477430$ & $-0.1477388$ &0.0002& 0.0000 \\
0.2 & $-0.1471812$ & $-0.1471695$ & $-0.1471620$ &0.0002& 0.0000 \\
0.241167 & $-0.1467381$ & $-0.1467200$ & $-0.1467050$ &0.0003& 0.0000 \\
0.3 & $-0.1457520$ & $-0.1456777$ & $-0.1455677$ &0.0010& 0.0002 \\
0.4 & $-0.1419388$ & $-0.1407434$ & $-0.1392751$ &0.0058& 0.0027 \\
0.5 & $-0.1258659$ & $-0.1250000$ & $-0.1241231$ &0.0017& 0.0017 \\
0.6 & $-0.1075884$ & $-0.1065750$ & $-0.0990974$ &0.0178& 0.0085 \\
0.7 & $-0.0894464$ & $-0.0837011$ & $-0.0738499$ &0.0296& 0.0156 \\
0.8 & $-0.0750487$ & $-0.0662572$ & $-0.0536994$ &0.0370& 0.0213 \\
0.9 & $-0.0633195$ & $-0.0539502$ & $-0.0386686$ &0.0399& 0.0247 \\
1.0 & $-0.0536750$ & $-0.0413538$ & $-0.0289607$ &0.0395& 0.0247 \\
1.5 & $-0.0295407$ & $-0.0151518$ & $-0.0119358$ &0.0292& 0.0176 \\
2.0 & $-0.0168607$ & $-0.0091622$ & $-0.0073162$ &0.0198& 0.0095 \\
\Xhline{1pt}
\end{tabular}}
\end{table}

\begin{figure}
\centering
\begin{tikzpicture}
\footnotesize
\scalefont{0.8} 
\begin{axis}[
sharp plot, 
xmode=normal,
width=15cm, height=8cm,  
yticklabel ={\pgfmathprintnumber[fixed, fixed zerofill, precision=4]{\tick}},
xlabel = $J_2$,
ylabel = Bounds on $C(1)$,
xlabel near ticks,
ylabel near ticks,
legend style={at={(0.8,0.05)},anchor=south},
]

\addplot[semithick,mark=*,mark options={scale=0.6}, color=lightgreen] coordinates { 
     (0.1,     -0.1477430)
     (0.2,     -0.1471695)
     (0.24117, -0.1467200)
     (0.3,     -0.1456777)
     (0.4,     -0.1407434)
     (0.5,     -0.1250000)
     (0.6,     -0.1065750)
     (0.7,     -0.0837011)
     (0.8,     -0.0662572)
     (0.9,     -0.0539502)
     (1.0,     -0.0413538)
     (1.5,     -0.0151518)
     (2.0,     -0.0091622)
  };
\addlegendentry{DMRG}

\addplot[semithick,mark=x,mark options={scale=0.6}, color=color2] coordinates { 
     (0.1,     -0.1477549)
     (0.2,     -0.1471812)
     (0.24117, -0.1467381)
     (0.3,     -0.1457520)
     (0.4,     -0.1419388)
     (0.5,     -0.1258659)
     (0.6,     -0.1075884)
     (0.7,     -0.0894464)
     (0.8,     -0.0750487)
     (0.9,     -0.0633195)
     (1.0,     -0.0536750)
     (1.5,     -0.0295407)
     (2.0,     -0.0168607)
  };
\addlegendentry{SDP Lower Bound (New)}

\addplot[semithick,mark=x,mark options={scale=0.6}, color=color1] coordinates { 
     (0.1,     -0.1478607)
     (0.2,     -0.1472921)
     (0.24117, -0.1468619)
     (0.3,     -0.1460336)
     (0.4,     -0.1430408)
     (0.5,     -0.1258632)
     (0.6,     -0.1110265)
     (0.7,     -0.0947528)
     (0.8,     -0.0812174)
     (0.9,     -0.0704769)
     (1.0,     -0.0623007)
     (1.5,     -0.0368734)
     (2.0,     -0.0245622)
  };
\addlegendentry{SDP Lower Bound (Old)}

\addplot[semithick,mark=+,mark options={scale=0.6}, color=color3] coordinates { 
     (0.1,     -0.1477388)
     (0.2,     -0.1471620)
     (0.24117, -0.1467050)
     (0.3,     -0.1455677)
     (0.4,     -0.1392751)
     (0.5,     -0.1241231)
     (0.6,     -0.0990974)
     (0.7,     -0.0738499)
     (0.8,     -0.0536994)
     (0.9,     -0.0386686)
     (1.0,     -0.0289607)
     (1.5,     -0.0119358)
     (2.0,     -0.0073162)
  };
\addlegendentry{SDP Upper Bound (New)}

\addplot[semithick,mark=+,mark options={scale=0.6}, color=bordeaux] coordinates { 
     (0.1,     -0.1476572)
     (0.2,     -0.1470603)
     (0.24117, -0.1465536)
     (0.3,     -0.1450314)
     (0.4,     -0.1371950)
     (0.5,     -0.1241231)
     (0.6,     -0.0932175)
     (0.7,     -0.0651186)
     (0.8,     -0.0442571)
     (0.9,     -0.0305553)
     (1.0,     -0.0228391)
     (1.5,     -0.0076266)
     (2.0,     -0.0047742)
  };
\addlegendentry{SDP Upper Bound (Old)}

\end{axis}
\end{tikzpicture}
\caption{Bounds on $C(1)$ for the $J_1$--$J_2$ Heisenberg chain ($N=40$).}
\label{fig:3}
\end{figure}  

The SDP lower and upper bounds on the second-neighbour correlation $C(2)=\frac{1}{4}\langle \sigma_1^x \sigma_{3}^x \rangle$ are displayed in Table~\ref{tab:B2-C2}. For comparison, we also include the DMRG values and the widths of certified SDP intervals in the table.
Table~\ref{tab:B2-C2} shows that the new SDP relaxation substantially tightens the bounds on the second-neighbour correlation while consistently enclosing the DMRG values. For $J_2\leq0.3$, the new intervals are extremely narrow, with widths at most $6\times10^{-4}$. For $J_2\geq0.6$, their widths are reduced by approximately $38\%$--$55\%$ relative to the previous bounds. The sign change near the Majumdar--Ghosh point $J_2=0.5$ reflects the transition from positive to antiferromagnetic second-neighbour correlations; at this point, the exact DMRG value is zero and the old and new intervals have the same width. See also Figure~\ref{fig:4}.

\begin{table}[htb!]
\caption{Bounds on $C(2)$ for the $J_1$--$J_2$ Heisenberg chain ($N=40$).}
\label{tab:B2-C2}
\centering
\resizebox{\linewidth}{!}{
\begin{tabular}{c|c|c|c||c|c}
\Xhline{1pt}
$J_2$ & SDP Lower Bound & DMRG & SDP Upper Bound &Width Old&Width New\\
\Xhline{1pt}
0.1 & 0.0580278 & 0.0580709 & 0.0581402 &0.0020& 0.0001 \\
0.2 & 0.0542820 & 0.0543201 & 0.0543795 &0.0012& 0.0001 \\
0.241167 & 0.0522255 & 0.0522883 & 0.0523640 &0.0013& 0.0001 \\
0.3 & 0.0480941 & 0.0484617 & 0.0487100 &0.0033& 0.0006 \\
0.4 & 0.0310703 & 0.0347344 & 0.0377300 &0.0146& 0.0067 \\
0.5 & -0.0017568 & 0 & 0.0017330 &0.0035& 0.0035 \\
0.6 & -0.0463989 & -0.0381663 & -0.0322471 &0.0297& 0.0142 \\
0.7 & -0.0836426 & -0.0697204 & -0.0613618 &0.0423& 0.0223 \\
0.8 & -0.1085960 & -0.0929827 & -0.0819094 &0.0462& 0.0267 \\
0.9 & -0.1244451 & -0.1075861 & -0.0970552 &0.0444& 0.0274 \\
1.0 & -0.1332228 & -0.1208368 & -0.1085337 &0.0395& 0.0247 \\
1.5 & -0.1442046 & -0.1422496 & -0.1329467 &0.0195& 0.0113 \\
2.0 & -0.1465414 & -0.1457875 & -0.1420041 &0.0099& 0.0045 \\
\Xhline{1pt}
\end{tabular}}
\end{table}

\begin{figure}	
\centering
\begin{tikzpicture}
\footnotesize
\scalefont{0.8} 
\begin{axis}[
sharp plot, 
xmode=normal,
width=15cm, height=8cm,  
yticklabel ={\pgfmathprintnumber[fixed, fixed zerofill, precision=4]{\tick}},
xlabel = $J_2$,
ylabel = Bounds on $C(2)$,
xlabel near ticks,
ylabel near ticks,
legend style={at={(0.8,0.65)},anchor=south},
]

\addplot[semithick,mark=*,mark options={scale=0.6}, color=lightgreen] coordinates { 
     (0.1,     0.0580709)
     (0.2,     0.0543201)
     (0.24117, 0.0522883)
     (0.3,     0.0484617)
     (0.4,     0.0347344)
     (0.5,     0)
     (0.6,     -0.0381663)
     (0.7,     -0.0697204)
     (0.8,     -0.0929827)
     (0.9,     -0.1075861)
     (1.0,     -0.1208368)
     (1.5,     -0.1422496)
     (2.0,     -0.1457875)
  };
\addlegendentry{DMRG}

\addplot[semithick,mark=x,mark options={scale=0.6}, color=color2] coordinates { 
     (0.1,     0.0580278)
     (0.2,     0.0542820)
     (0.24117, 0.0522255)
     (0.3,     0.0480941)
     (0.4,     0.0310703)
     (0.5,     -0.0017568)
     (0.6,     -0.0463989)
     (0.7,     -0.0836426)
     (0.8,     -0.1085960)
     (0.9,     -0.1244451)
     (1.0,     -0.1332228)
     (1.5,     -0.1442046)
     (2.0,     -0.1465414)
  };
\addlegendentry{SDP Lower Bound (New)}

\addplot[semithick,mark=x,mark options={scale=0.6}, color=color1] coordinates { 
     (0.1,     0.0572121)
     (0.2,     0.0537717)
     (0.24117, 0.0515969)
     (0.3,     0.0463033)
     (0.4,     0.0258637)
     (0.5,     -0.0017547)
     (0.6,     -0.0561980)
     (0.7,     -0.0961199)
     (0.8,     -0.1204157)
     (0.9,     -0.1334790)
     (1.0,     -0.1393519)
     (1.5,     -0.1472941)
     (2.0,     -0.1480146)
  };
\addlegendentry{SDP Lower Bound (Old)}

\addplot[semithick,mark=+,mark options={scale=0.6}, color=color3] coordinates { 
     (0.1,     0.0581402)
     (0.2,     0.0543795)
     (0.24117, 0.0523640)
     (0.3,     0.0487100)
     (0.4,     0.0377300)
     (0.5,     0.0017330)
     (0.6,     -0.0322471)
     (0.7,     -0.0613618)
     (0.8,     -0.0819094)
     (0.9,     -0.0970552)
     (1.0,     -0.1085337)
     (1.5,     -0.1329467)
     (2.0,     -0.1420041)
  };
\addlegendentry{SDP Upper Bound (New)}

\addplot[semithick,mark=+,mark options={scale=0.6}, color=bordeaux] coordinates { 
     (0.1,     0.0592576)
     (0.2,     0.0549354)
     (0.24117, 0.0528797)
     (0.3,     0.0496505)
     (0.4,     0.0404804)
     (0.5,     0.0017307)
     (0.6,     -0.0265157)
     (0.7,     -0.0537843)
     (0.8,     -0.0742153)
     (0.9,     -0.0891219)
     (1.0,     -0.0998902)
     (1.5,     -0.1277965)
     (2.0,     -0.1381216)
  };
\addlegendentry{SDP Upper Bound (Old)}

\end{axis}
\end{tikzpicture}
\caption{Bounds on $C(2)$ for the $J_1$--$J_2$ Heisenberg chain ($N=40$).}
\label{fig:4}
\end{figure}  

The SDP lower and upper bounds on the spin-spin correlation $C(i)=\frac{1}{4}\langle \sigma_1^x \sigma_{i+1}^x \rangle$ with $J_2=0.2$ are reported in Table~\ref{tab:B2-spinspin-J202}. For comparison, we also include the DMRG values and the widths of certified SDP intervals in the table.
Table~\ref{tab:B2-spinspin-J202} shows that the new SDP intervals consistently contain the DMRG values and are substantially narrower than the previous bounds at every distance. The improvement is particularly pronounced at short distances: for $i\leq3$, the interval widths are reduced by more than $90\%$. Although the bounds gradually widen with increasing distance, the width at $i=20$ is still reduced from $0.0117$ to $0.0075$, or by approximately $36\%$. The results also accurately reproduce the alternating signs and decay of the antiferromagnetic spin correlations. See also Figure~\ref{fig:5}.

\begin{table}[htb!]
\caption{Bounds on the spin-spin correlation at distance $i$ for the $J_1$--$J_2$ Heisenberg chain ($J_2=0.2$ and $N=40$).}
\label{tab:B2-spinspin-J202}
\centering
\begin{tabular}{c|c|c|c||c|c}
\Xhline{1pt}
$i$ & SDP Lower Bound & DMRG& SDP Upper Bound &Width Old&Width New\\
\Xhline{1pt}
1 & -0.1471812 & -0.1471695 & -0.1471620 &0.0002& 0.0000 \\
2 & 0.0542820 & 0.0543201 & 0.0543794 &0.0012& 0.0001 \\
3 & -0.0415919 & -0.0415068 & -0.0414502 &0.0015& 0.0001 \\
4 & 0.0277486 & 0.0278817 & 0.0280607 &0.0018& 0.0003 \\
5 & -0.0253522 & -0.0250681 & -0.0248215 &0.0022& 0.0005 \\
6 & 0.0191478 & 0.0196057 & 0.0200717 &0.0027& 0.0009 \\
7 & -0.0190638 & -0.0183740 & -0.0176101 &0.0035& 0.0015 \\
8 & 0.0142895 & 0.0154256 & 0.0164151 &0.0042& 0.0021 \\
9 & -0.0160573 & -0.0148077 & -0.0133178 &0.0051& 0.0027 \\
10 & 0.0111842 & 0.0129733 & 0.0145166 &0.0058& 0.0033 \\
11 & -0.0144800 & -0.0126691 & -0.0105946 &0.0066& 0.0039 \\
12 & 0.0091031 & 0.0114280 & 0.0134886 &0.0072& 0.0044 \\
13 & -0.0136166 & -0.0113155 & -0.0087499 &0.0079& 0.0049 \\
14 & 0.0076319 & 0.0104312 & 0.0129356 &0.0085& 0.0053 \\
15 & -0.0131750 & -0.0104550 & -0.0074354 &0.0091& 0.0057 \\
16 & 0.0065791 & 0.0098061 & 0.0127040 &0.0097& 0.0061 \\
17 & -0.0130599 & -0.0099413 & -0.0065214 &0.0103& 0.0065 \\
18 & 0.0058487 & 0.0094607 & 0.0127624 &0.0108& 0.0069 \\
19 & -0.0132352 & -0.0097003 & -0.0059019 &0.0113& 0.0073 \\
20 & 0.0054422 & 0.0093500 & 0.0129749 &0.0117& 0.0075 \\
\Xhline{1pt}
\end{tabular}
\end{table}

\begin{figure}	
\centering
\begin{tikzpicture}
\footnotesize
\scalefont{0.8} 
\begin{axis}[
sharp plot, 
xmode=normal,
width=15cm, height=10cm,  
yticklabel ={\pgfmathprintnumber[fixed, fixed zerofill, precision=4]{\tick}},
xlabel = $i$,
ylabel = Bounds on $C(i)$,
xlabel near ticks,
ylabel near ticks,
legend style={at={(0.8,0.05)},anchor=south},
]

\addplot[semithick,mark=*,mark options={scale=0.6}, color=lightgreen] coordinates { 
     (1,     -0.1471695)
     (2,     0.0543201)
     (3,     -0.0415068)
     (4,     0.0278817)
     (5,     -0.0250681)
     (6,     0.0196057)
     (7,     -0.0183740)
     (8,     0.0154256)
     (9,     -0.0148077)
     (10,     0.0129733)
     (11,     -0.0126691)
     (12,     0.0114280)
     (13,     -0.0113155)
     (14,     0.0104312)
     (15,     -0.010455)
     (16,     0.0098061)
     (17,     -0.0099413)
     (18,     0.00946065)
     (19,     -0.0097003)
     (20,     0.0093500)
  };
\addlegendentry{DMRG}

\addplot[semithick,mark=x,mark options={scale=0.6}, color=color2] coordinates { 
     (1,     -0.1471812)
     (2,     0.0542820)
     (3,     -0.0415919)
     (4,     0.0277486)
     (5,     -0.0253522)
     (6,     0.0191478)
     (7,     -0.0190638)
     (8,     0.0142895)
     (9,     -0.0160573)
     (10,     0.0111842)
     (11,     -0.0144800)
     (12,     0.0091031)
     (13,     -0.0136166)
     (14,     0.0076319)
     (15,     -0.0131750)
     (16,     0.0065791)
     (17,     -0.0130599)
     (18,     0.0058487)
     (19,     -0.0132352)
     (20,     0.0054422)
  };
\addlegendentry{SDP Lower Bound (New)}

\addplot[semithick,mark=x,mark options={scale=0.6}, color=color1] coordinates { 
     (1,     -0.1472921)
     (2,     0.0537717)
     (3,     -0.0422374)
     (4,     0.0269796)
     (5,     -0.0260822)
     (6,     0.0182804)
     (7,     -0.0201472)
     (8,     0.0133911)
     (9,     -0.0174870)
     (10,     0.0101486)
     (11,     -0.0161036)
     (12,     0.0079224)
     (13,     -0.0154071)
     (14,     0.0062784)
     (15,     -0.0151364)
     (16,     0.0050437)
     (17,     -0.0151801)
     (18,     0.0041464)
     (19,     -0.0154778)
     (20,     0.0035877)
  };
\addlegendentry{SDP Lower Bound (Old)}

\addplot[semithick,mark=+,mark options={scale=0.6}, color=color3] coordinates { 
     (1,     -0.1471695)
     (2,     0.0543201)
     (3,     -0.0415068)
     (4,     0.0278817)
     (5,     -0.0250681)
     (6,     0.0196057)
     (7,     -0.0183740)
     (8,     0.0154256)
     (9,     -0.0148077)
     (10,     0.0129733)
     (11,     -0.0126691)
     (12,     0.0114280)
     (13,     -0.0113155)
     (14,     0.0104312)
     (15,     -0.010455)
     (16,     0.0098061)
     (17,     -0.0099413)
     (18,     0.00946065)
     (19,     -0.0097003)
     (20,     0.0093500)
  };
\addlegendentry{SDP Upper Bound (New)}

\addplot[semithick,mark=+,mark options={scale=0.6}, color=bordeaux] coordinates { 
     (1,     -0.1470603)
     (2,      0.0549354)
     (3,     -0.0407166)
     (4,     0.0287525)
     (5,     -0.0239108)
     (6,     0.0209771)
     (7,     -0.0166783)
     (8,     0.0176028)
     (9,     -0.0123538)
     (10,     0.0159793)
     (11,     -0.0094935)
     (12,     0.0151389)
     (13,     -0.0074851)
     (14,     0.0147637)
     (15,     -0.0059942)
     (16,     0.0147138)
     (17,     -0.0049027)
     (18,     0.0149118)
     (19,     -0.0041307)
     (20,     0.0152579)
  };
\addlegendentry{SDP Upper Bound (Old)}

\end{axis}
\end{tikzpicture}
\caption{Bounds on the spin-spin correlation at distance $i$ for the $J_1$--$J_2$ Heisenberg chain ($J_2=0.2$ and $N=40$).}
\label{fig:5}
\end{figure}  

Table~\ref{tab:B2-spinspin-J21} reports the SDP lower and upper bounds on the spin-spin correlation in the strongly frustrated regime $J_2=1$. It shows that every DMRG value lies within the corresponding certified SDP interval. Compared with the previous bounds in~\cite{wang2024certifying}, the new bounds are uniformly tighter over all distances, reducing the average interval width by approximately $25\%$. As expected from the predominantly local monomial basis, the certified intervals generally widen with distance, leaving room for further improvement through bases tailored to long-range correlations. See also Figure~\ref{fig:6}.

\begin{table}[htb!]
\caption{Bounds on the spin-spin correlation at distance $i$ for the Heisenberg chain with second-neighbour couplings ($J_2=1.0$ and $N=40$).}
    \label{tab:B2-spinspin-J21}
    \centering
    \begin{tabular}{c|c|c|c||c|c}
\Xhline{1pt}
$i$ & SDP Lower Bound &DMRG& SDP Upper Bound &Width Old&Width New\\
\Xhline{1pt}
1 & -0.0538357 & -0.0413538 & -0.0287978 &0.0395& 0.0250\\
2 & -0.1333906 & -0.1208368 & -0.1083527 &0.0395& 0.0250\\
3 & 0.0236042 & 0.0331171 & 0.0450558 &0.0354& 0.0215\\
4 & 0.0217878 & 0.0362300 & 0.0492959  &0.0478& 0.0275\\
5 & -0.0418949 & -0.0267034 & -0.0148445 &0.0535& 0.0271\\
6 & -0.0419023 & -0.0195505 & 0.0034863 &0.0700& 0.0454\\
7 & 0.0055856 & 0.0186073 & 0.0397842 &0.0556& 0.0342\\
8 & -0.0263595 & 0.0062481 & 0.0324218 &0.0810& 0.0588\\
9 & -0.0432032 & -0.0132155 & 0.0092146 &0.0779& 0.0524\\
10 & -0.0348900 & -0.0019804 & 0.0336909 &0.0876& 0.0686\\
11 & -0.0229635 & 0.0090596 & 0.0418082 &0.0831& 0.0648\\
12 & -0.0375145 & -0.0017641 & 0.0325153 & 0.0861 &0.0700\\
13 & -0.0431658 & -0.0059637 & 0.0315920 &0.0919& 0.0748\\
14 & -0.0360265 & 0.0027244 & 0.0335339 &0.0835& 0.0696\\
15 & -0.0380783 & 0.0038911 & 0.0364540 &0.0914& 0.0745\\
16 & -0.0426117 & -0.0039891 & 0.0346357 &0.0910& 0.0772\\
17 & -0.0334526 & -0.0020209 & 0.0309240 &0.0812& 0.0644\\
18 & -0.0437272 & 0.0039570 & 0.0428580 &0.1040& 0.0866\\
19 & -0.0180974 & 0.0007610 & 0.0162195 &0.0446& 0.0343\\
20 & -0.0545757 & -0.0044521 & 0.0449810 &0.1238& 0.0996\\
\Xhline{1pt}
    \end{tabular}
\end{table}

\begin{figure}	
\centering
\begin{tikzpicture}
\footnotesize
\scalefont{0.8} 
\begin{axis}[
sharp plot, 
xmode=normal,
width=15cm, height=10cm,  
yticklabel ={\pgfmathprintnumber[fixed, fixed zerofill, precision=4]{\tick}},
xlabel = $i$,
ylabel = Bounds on $C(i)$,
xlabel near ticks,
ylabel near ticks,
legend style={at={(0.8,0.05)},anchor=south},
]

\addplot[semithick,mark=*,mark options={scale=0.6}, color=lightgreen] coordinates { 
     (1,     -0.0413538)
     (2,     -0.1208368)
     (3,     0.0331171)
     (4,     0.0362300)
     (5,     -0.0267034)
     (6,     -0.0195505)
     (7,     0.0186073)
     (8,     0.0062481)
     (9,     -0.0132155)
     (10,     -0.0019804)
     (11,     0.0090596)
     (12,     -0.0017641)
     (13,     -0.0059637)
     (14,     0.0027244)
     (15,     0.0038911)
     (16,     -0.0039891)
     (17,     -0.0020209)
     (18,     0.0039570)
     (19,     0.0007610)
     (20,     -0.0044521)
  };
\addlegendentry{DMRG}

\addplot[semithick,mark=x,mark options={scale=0.6}, color=color2] coordinates { 
     (1,     -0.0538357)
     (2,     -0.1333906)
     (3,     0.0236042)
     (4,     0.0217878)
     (5,     -0.0418949)
     (6,     -0.0419023)
     (7,     0.0055856)
     (8,     -0.0263595)
     (9,     -0.0432032)
     (10,     -0.0348900)
     (11,     -0.0229635)
     (12,     -0.0375145)
     (13,     -0.0431658)
     (14,     -0.0360265)
     (15,     -0.0380783)
     (16,     -0.0426117)
     (17,     -0.0334526)
     (18,     -0.0437272)
     (19,     -0.0180974)
     (20,     -0.0545757)
  };
\addlegendentry{SDP Lower Bound (New)}

\addplot[semithick,mark=x,mark options={scale=0.6}, color=color1] coordinates { 
     (1,     -0.06230067)
     (2,     -0.13935193)
     (3,     0.01884823)
     (4,     0.00971664)
     (5,     -0.06031966)
     (6,     -0.05338124)
     (7,     -0.00390555)
     (8,     -0.04111449)
     (9,     -0.05681356)
     (10,     -0.04328585)
     (11,     -0.03371714)
     (12,     -0.04668599)
     (13,     -0.05242524)
     (14,     -0.04496886)
     (15,     -0.04945007)
     (16,     -0.05042092)
     (17,     -0.04231650)
     (18,     -0.05358065)
     (19,     -0.02405042)
     (20,     -0.06885791)
  };
\addlegendentry{SDP Lower Bound (Old)}

\addplot[semithick,mark=+,mark options={scale=0.6}, color=color3] coordinates { 
     (1,     -0.0287978)
     (2,     -0.1083527)
     (3,     0.0450558)
     (4,     0.0492959)
     (5,     -0.0148445)
     (6,     0.0034863)
     (7,     0.0397842)
     (8,     0.0324218)
     (9,     0.0092146)
     (10,     0.0336909)
     (11,     0.0418082)
     (12,     0.0325153)
     (13,     0.0315920)
     (14,     0.0335339)
     (15,     0.0364540)
     (16,     0.0346357)
     (17,     0.0309240)
     (18,     0.0428580)
     (19,     0.0162195)
     (20,     0.0449810)
  };
\addlegendentry{SDP Upper Bound (New)}

\addplot[semithick,mark=+,mark options={scale=0.6}, color=bordeaux] coordinates { 
     (1,     -0.02283913)
     (2,     -0.09989024)
     (3,     0.05421466)
     (4,     0.05751094)
     (5,     -0.00682832)
     (6,     0.01661007)
     (7,     0.05169353)
     (8,     0.03992840)
     (9,     0.02103763)
     (10,     0.04430916)
     (11,     0.04936674)
     (12,     0.03943488)
     (13,     0.03949383)
     (14,     0.03852069)
     (15,     0.04193050)
     (16,     0.04059603)
     (17,     0.03886868)
     (18,     0.05037295)
     (19,     0.02054356)
     (20,     0.05499198)
  };
\addlegendentry{SDP Upper Bound (Old)}

\end{axis}
\end{tikzpicture}
\caption{Bounds on the spin-spin correlation at distance $i$ for the $J_1$--$J_2$ Heisenberg chain ($J_2=1.0$ and $N=40$).}
\label{fig:6}
\end{figure}  


Here, we bound the structure factor $S(\pi,\pi)=\frac{1}{4N^2}\sum_{i,j=1}^N\sum_{a\in\{x,y,z\}}\langle\sigma_{i}^a\sigma_{j}^a\rangle \mathrm{e}^{\bi\pi(i-j)}$ for the $J_1$--$J_2$ Heisenberg chain with $N=40$ spins and present the results in Table~\ref{tab:structure factor}. For comparison, we also include the DMRG values in the table. 
Table~\ref{tab:structure factor} shows that the DMRG values are consistently enclosed by the certified SDP intervals. The structure factor $S(\pi,\pi)$ decreases markedly as $J_2$ increases, reflecting the suppression of antiferromagnetic correlations at momentum $\pi$ by frustration. At the Majumdar--Ghosh point $J_2=0.5$, the particularly narrow interval accurately captures the exact DMRG value $0.0375$. In the large-$J_2$ regime, $S(\pi,\pi)$ approaches zero, consistent with the emergence of two weakly coupled antiferromagnetic subchains whose dominant correlations occur near momentum $\pi/2$. See also Figure~\ref{fig:10}.

\begin{table}[htbp]\label{tab:structure factor}
\caption{Bounds on the structure factor $S(\pi,\pi)$ for the $J_1$--$J_2$ Heisenberg chain ($N=40$).}    
    \centering
\renewcommand\arraystretch{1.2}
\[
\begin{tabular}{c|c|c|c}
\Xhline{1pt}
$J_2$ & SDP Lower Bound & DMRG & SDP Upper Bound\\
\Xhline{1pt}
0.0 & 0.0985998 & 0.1031831& 0.1064600\\
0.1 & 0.0908153 & 0.0973667 & 0.1042138\\
0.2 & 0.0856754 & 0.0903022 & 0.0942894\\
0.241167 & 0.0811375 & 0.0867427 & 0.0910491\\
0.3 & 0.0690184 & 0.0804070 & 0.0885589\\
0.4 & 0.0403764 & 0.0608781 & 0.0798086\\
0.5 & 0.0324779 & 0.0375000 & 0.0424656\\
0.6 & 0.0190150 & 0.0266019 & 0.0313481\\
0.7 & 0.0111088 & 0.0184667 & 0.0254093\\
0.8 & 0.0066581 & 0.0135322 & 0.0214150\\
0.9 & 0.0041313 & 0.0102386 & 0.0185697\\
1.0 & 0.0032505 & 0.0072953 & 0.0155411\\
1.5 & 0.0007164 & 0.0017755 & 0.0088262\\
2.0 & 0.0003421 & 0.0006739 & 0.0045430\\
\Xhline{1pt}
\end{tabular}
    \]
\end{table}

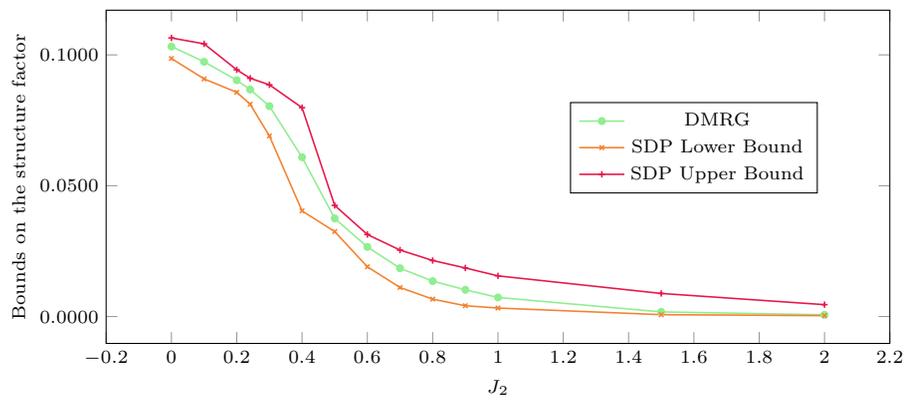
\begin{figure}	
\centering
\begin{tikzpicture}
\footnotesize
\scalefont{0.8} 
\begin{axis}[
sharp plot, 
xmode=normal,
width=15cm, height=8cm,  
yticklabel ={\pgfmathprintnumber[fixed, fixed zerofill, precision=4]{\tick}},
xlabel = $J_2$,
ylabel = Bounds on the structure factor,
xlabel near ticks,
ylabel near ticks,
legend style={at={(0.85,0.75)},anchor=south},
]

\addplot[semithick,mark=*,mark options={scale=0.6}, color=lightgreen] coordinates { 
     (0.0,     0.1031831)
     (0.1,     0.0973667)
     (0.2,     0.0903022)
     (0.24117, 0.0867427)
     (0.3,     0.0804070)
     (0.4,     0.0608781)
     (0.5,     0.0375000)
     (0.6,     0.0266019)
     (0.7,     0.0184667)
     (0.8,     0.0135322)
     (0.9,     0.0102386)
     (1.0,     0.0072953)
     (1.5,     0.0017755)
     (2.0,     0.0006739)
  };
\addlegendentry{DMRG}

\addplot[semithick,mark=x,mark options={scale=0.6}, color=color2] coordinates {
     (0.0,     0.0985998)
     (0.1,     0.0908153)
     (0.2,     0.0856754)
     (0.24117, 0.0811375)
     (0.3,     0.0690184)
     (0.4,     0.0403764)
     (0.5,     0.0324779)
     (0.6,     0.0190150)
     (0.7,     0.0111088)
     (0.8,     0.0066581)
     (0.9,     0.0041313)
     (1.0,     0.0032505)
     (1.5,     0.0007164)
     (2.0,     0.0003421)
  };
\addlegendentry{SDP Lower Bound}

\addplot[semithick,mark=+,mark options={scale=0.6}, color=color3] coordinates { 
     (0.0,     0.1064600)
     (0.1,     0.1042138)
     (0.2,     0.0942894)
     (0.24117, 0.0910491)
     (0.3,     0.0885589)
     (0.4,     0.0798086)
     (0.5,     0.0424656)
     (0.6,     0.0313481)
     (0.7,     0.0254093)
     (0.8,     0.0214150)
     (0.9,     0.0185697)
     (1.0,     0.0155411)
     (1.5,     0.0088262)
     (2.0,     0.0045430)
  };
\addlegendentry{SDP Upper Bound}

\end{axis}
\end{tikzpicture}
\caption{Bounds on the structure factor $S(\pi,\pi)$ for the $J_1$--$J_2$ Heisenberg chain ($N=40$).}
\label{fig:10}
\end{figure}  


\subsection{The 2D case}
Next, we turn to the more challenging 2D case. 
The SDP lower bounds on the ground-state energy per spin for the square-lattice Heisenberg model \eqref{model3} are displayed in Table~\ref{tab:B3-Energies}. For comparison, we also report the QMC values~\cite{sandvik1997} and the SDP lower bounds obtained in~\cite{wang2024certifying}; for $L=4,6$, the reference values are obtained by exact diagonalization. The relative SDP gap is normalized by the QMC values.
Table~\ref{tab:B3-Energies} shows that the new SDP relaxation substantially improves the ground-state energy bounds for every system size for which previous results are available. In particular, it is essentially exact for $L=4$, while for $L=6,8,10$ the relative gaps are reduced from $0.63\%$, $0.92\%$, $1.26\%$ to $0.30\%$, $0.43\%$, $0.48\%$, respectively. Moreover, the new method scales to lattices as large as $16\times16$, with the relative gap remaining below $0.7\%$ throughout. See also Figure~\ref{fig:7}.

\begin{table}[htbp]\label{tab:B3-Energies}
\caption{Bounds on the ground-state energy per spin of the square-lattice Heisenberg model with $L\times L$ spins.}
\centering
\begin{tabular}{c|c|cc|cc}
\Xhline{1pt}
\multirow{2}{*}{$L$}&\multirow{2}{*}{QMC}&\multicolumn{2}{c|}{SDP Old}&\multicolumn{2}{c}{SDP New}\\
&&value&gap&value&gap\\
\Xhline{1pt}
    4 &-0.701780&-0.703051&0.18\%&-0.701783&0.00\%\\
    6 &-0.678872&-0.683172&0.63\%&-0.680886&0.30\%\\
    8 &-0.673487&-0.679671&0.92\%&-0.676370&0.43\%\\
    10 &-0.671549&-0.680031&1.26\%&-0.674768&0.48\%\\
    12 &-0.670685&&&-0.674200&0.52\%\\
    14 &-0.670222&&&-0.674548&0.65\%\\
    16 &-0.669976&&&-0.674580&0.69\%\\
\Xhline{1pt}
\end{tabular}
\end{table}

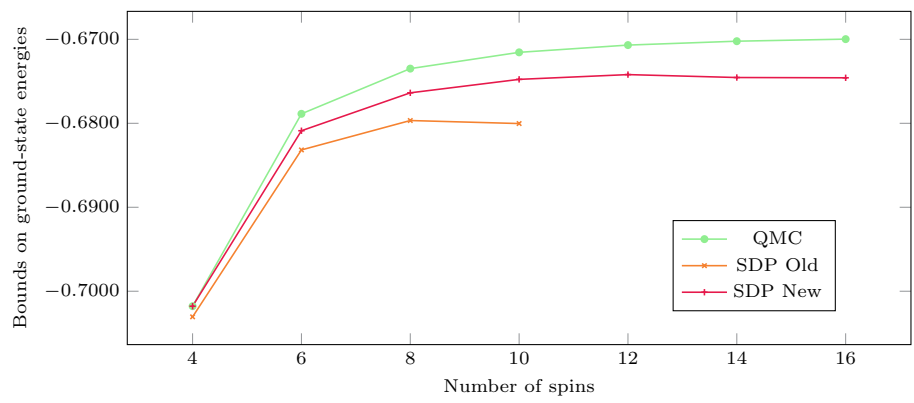
\begin{figure}	
\centering
\begin{tikzpicture}
\footnotesize
\scalefont{0.8} 
\begin{axis}[
sharp plot, 
xmode=normal,
width=15cm, height=8cm,  
yticklabel ={\pgfmathprintnumber[fixed, fixed zerofill, precision=4]{\tick}},
xlabel = Number of spins,
ylabel = Bounds on ground-state energies,
xlabel near ticks,
ylabel near ticks,
legend style={at={(0.85,0.1)},anchor=south},
]

\addplot[semithick,mark=*,mark options={scale=0.6}, color=lightgreen] coordinates { 
     (4,     -0.701780)
     (6,     -0.678872)
     (8,     -0.673487)
     (10,     -0.671549)
     (12,     -0.670685)
     (14,    -0.670222)
     (16,    -0.669976)
  };
\addlegendentry{QMC}

\addplot[semithick,mark=x,mark options={scale=0.6}, color=color2] coordinates { 
      (4,     -0.703050)
     (6,     -0.683171)
     (8,     -0.679670)
     (10,     -0.680030)
  };
\addlegendentry{SDP Old}

\addplot[semithick,mark=+,mark options={scale=0.6}, color=color3] coordinates { 
      (4,     -0.701783)
     (6,     -0.680886)
     (8,     -0.676370)
     (10,     -0.674768)
     (12,     -0.674200)
     (14,    -0.674548)
     (16,    -0.674580)
  };
\addlegendentry{SDP New}

\end{axis}
\end{tikzpicture}
\caption{Bounds on the ground-state energy of the square-lattice Heisenberg model with $L\times L$ spins.}
\label{fig:7}
\end{figure}  

Besides, Table~\ref{tab:B3-CL2} displays the lower and upper SDP bounds on the maximum-distance spin correlation $C(L/2,L/2)=\frac{1}{4}\langle \sigma_{(1,1)}^x \sigma_{(\frac{L}{2}+1,\frac{L}{2}+1)}^x \rangle$ for the square-lattice Heisenberg model \eqref{model3}. For comparison, we also report the QMC values~\cite{sandvik1997} and the widths of certified SDP intervals; for $L=4,6$, the reference values are obtained by exact diagonalization. 
Table~\ref{tab:B3-CL2} shows that the new SDP intervals consistently enclose the reference values and substantially improve upon the previous bounds. For $L=4$, the correlation is determined with essentially exact precision, while for $L=6$ and $L=8$ the interval widths are reduced by approximately $51\%$ and $55\%$, respectively. The intervals nevertheless widen as $L$ increases, and their lower bounds become negative for $L\geq12$, reflecting the difficulty of certifying this highly nonlocal correlation using a predominantly local monomial basis. 
One could expect that a different choice of the monomial basis may lead to better results for $C(L/2,L/2)$, a possibility that deserves further investigation. See also Figure~\ref{fig:8}.

\begin{table}[htb!]
\caption{Bounds on the maximum-distance spin correlation for the square-lattice Heisenberg model.}
    \label{tab:B3-CL2}
    \centering
    \begin{tabular}{c|c|c|c||c|c}
\Xhline{1pt}
$L$ & SDP Lower Bound &QMC& SDP Upper Bound&Width Old&Width New\\
\Xhline{1pt}
  4 & 0.059869 & 0.059872 & 0.059917 &0.0129& 0.0000 \\
6 & 0.039376 & 0.050856 & 0.057434 &0.0369& 0.0181 \\
8 & 0.025126 & 0.045867 & 0.055599 &0.0681& 0.0305 \\
10 & 0.014249 & 0.042851 & 0.054560 && 0.0403 \\
12 & -0.005382 & 0.040873 & 0.060823 && 0.0662 \\
14 & -0.021380 & 0.03945 & 0.063610 && 0.0850 \\
16 & -0.035612 & 0.03839 & 0.065971 && 0.1016 \\
\Xhline{1pt}
    \end{tabular}
\end{table}

\begin{figure}
\centering
\begin{tikzpicture}
\footnotesize
\scalefont{0.8} 
\begin{axis}[
sharp plot, 
xmode=normal,
width=15cm, height=8cm,  
ymin=-0.1, ymax=0.1,
yticklabel ={\pgfmathprintnumber[fixed, fixed zerofill, precision=4]{\tick}},
xlabel = $L$,
ylabel = Bounds on the spin correlations at maximum distance,
xlabel near ticks,
ylabel near ticks,
legend style={at={(0.2,0.05)},anchor=south},
]

\addplot[semithick,mark=*,mark options={scale=0.6}, color=lightgreen] coordinates { 
     (4,     0.059872)
     (6,     0.050856)
     (8,     0.045867)
     (10,    0.042851)
     (12,    0.040873)
     (14,    0.03945)
     (16,    0.03839)
  };
\addlegendentry{QMC}

\addplot[semithick,mark=x,mark options={scale=0.6}, color=color2] coordinates { 
     (4,     0.059869)
     (6,     0.039376)
     (8,     0.025126)
     (10,    0.014249)
     (12,    -0.005382)
     (14,    -0.021380)
     (16,    -0.035612)
  };
\addlegendentry{SDP Lower Bound (New)}

\addplot[semithick,mark=x,mark options={scale=0.6}, color=color1] coordinates { 
     (4,     0.052277)
     (6,     0.026268)
     (8,     0.002000)
  };
\addlegendentry{SDP Lower Bound (Old)}

\addplot[semithick,mark=+,mark options={scale=0.6}, color=color3] coordinates { 
     (4,     0.059917)
     (6,     0.057434)
     (8,     0.055599)
     (10,    0.054560)
     (12,    0.060823)
     (14,    0.063610)
     (16,    0.065971)
  };
\addlegendentry{SDP Upper Bound (New)}

\addplot[semithick,mark=+,mark options={scale=0.6}, color=bordeaux] coordinates { 
     (4,     0.065193)
     (6,     0.063146)
     (8,     0.070060)
  };
\addlegendentry{SDP Upper Bound (Old)}

\end{axis}
\end{tikzpicture}
\caption{Bounds on the maximum-distance spin correlation for the square-lattice Heisenberg model.}
\label{fig:8}
\end{figure}

Finally, consider the $10\times10$ square-lattice $J_1$--$J_2$ Heisenberg model \eqref{model4}. Table~\ref{tab:B4-L10} displays the SDP lower bounds on the ground-state energy per spin, together with the DMRG upper bounds~\cite{gongetal2014}, the NNQS upper bounds~\cite{chooetal2019}, and the SDP lower bounds obtained in~\cite{wang2024certifying}. The relative SDP gap is normalized by the NNQS bound or, when available, the DMRG bound.
Table~\ref{tab:B4-L10} shows that the new SDP relaxation consistently provides tighter lower bounds for the ground-state energy of the $10\times10$ $J_1$--$J_2$ Heisenberg model. The relative gaps remain small over the entire range of $J_2$, demonstrating that the proposed method retains good accuracy across different magnetic phases. See also Figure~\ref{fig:9}.

\begin{table}[htb!]
\caption{Bounds on the ground-state energy per spin for the square-lattice $J_1$--$J_2$ Heisenberg model ($L=10$).}
    \label{tab:B4-L10}
    \centering
    \begin{tabular}{c|c|c|cc|cc}
\Xhline{1pt}
\multirow{2}{*}{$J_2$}&\multirow{2}{*}{DMRG}&\multirow{2}{*}{NNQS}&\multicolumn{2}{c|}{SDP Old}&\multicolumn{2}{c}{SDP New}\\
&&&value&gap&value&gap\\
\Xhline{1pt}
    0.2      & &$-0.59275$ &-0.603083&1.74\%& $-0.599399$ &1.12\%\\
    0.4      & $-0.5253$ & $-0.52371$ &-0.537471&2.32\%& $-0.534018$& 1.66\%\\
    0.45      & $-0.5110$ & $-0.50905$ &-0.524642&2.66\%& $-0.521091$ &1.98\%\\
    0.5      & $-0.4988$ & $-0.49516$ &-0.514628&3.17\%& $-0.510813$ &2.41\%\\
    0.55    & $-0.4880$ & $-0.48277$ &-0.509309&4.36\%& $-0.504854$ &3.46\%\\
    0.6     && $-0.47604$ &-0.511361&7.42\%&  $-0.505984$ &6.29\%\\
    0.8     & & $-0.57383$ &-0.599453&4.46\%& $-0.593420$ 
    &3.41\%\\
    1.0     & & $-0.69636$ &-0.724752&4.08\%& $-0.717637$  &3.06\%\\
\Xhline{1pt}
\end{tabular}
\end{table}

\begin{figure}
\centering
\begin{tikzpicture}
\footnotesize
\scalefont{0.8} 
\begin{axis}[
sharp plot, 
xmode=normal,
width=15cm, height=8cm,  
yticklabel ={\pgfmathprintnumber[fixed, fixed zerofill, precision=4]{\tick}},
xlabel = $J_2$,
ylabel = Bounds on ground-state energies,
xlabel near ticks,
ylabel near ticks,
legend style={at={(0.15,0.05)},anchor=south},
]

\addplot[semithick,mark=*,mark options={scale=0.6}, color=lightgreen] coordinates { 
     (0.4,     -0.5253)
     (0.45,    -0.5110)
     (0.5,     -0.4988)
     (0.55,    -0.4880)
  };
\addlegendentry{DMRG}

\addplot[semithick,mark=*,mark options={scale=0.6}, color=darkblue] coordinates { 
     (0.2,     -0.59275)
     (0.4,     -0.52371)
     (0.45,    -0.50905)
     (0.5,     -0.49516)
     (0.55,    -0.48277)
     (0.6,     -0.47604)
     (0.8,     -0.57383)
     (1.0,     -0.69636)
  };
\addlegendentry{NNQS}

\addplot[semithick,mark=x,mark options={scale=0.6}, color=color2] coordinates { 
      (0.2,     -0.603083)
     (0.4,     -0.537471)
     (0.45,     -0.524642)
     (0.5,     -0.514628)
     (0.55,     -0.509309)
     (0.6,    -0.511361)
     (0.8,    -0.599453)
     (1.0,    -0.724752)
  };
\addlegendentry{SDP Old}

\addplot[semithick,mark=+,mark options={scale=0.6}, color=color3] coordinates { 
      (0.2,     -0.599399)
     (0.4,     -0.534018)
     (0.45,     -0.521091)
     (0.5,     -0.510813)
     (0.55,     -0.504854)
     (0.6,    -0.505984)
     (0.8,    -0.593420)
     (1.0,    -0.717637)
  };
\addlegendentry{SDP New}

\end{axis}
\end{tikzpicture}
\caption{Bounds on the ground-state energy for the square-lattice $J_1$--$J_2$ Heisenberg model ($L=10$).}
\label{fig:9}
\end{figure}
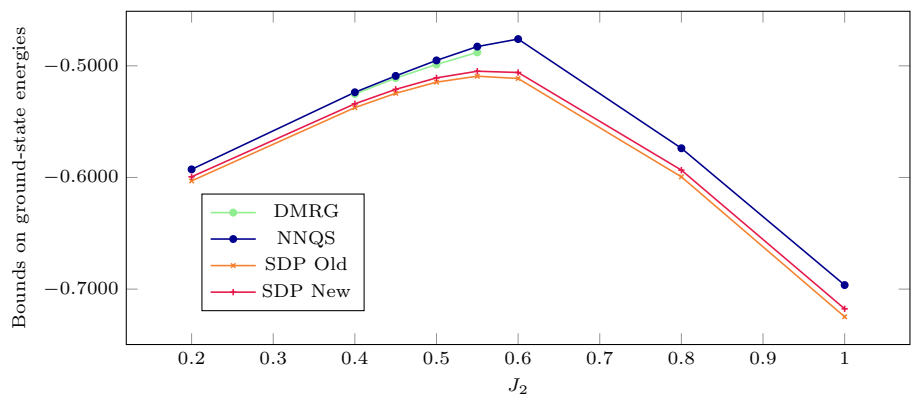  

\begin{remark}
For the $J_1$--$J_2$ Heisenberg chain with $0.1\le J_2\le0.9$, we empirically observed that imposing the PSD state optimality condition leads to some numerical issue when solving the corresponding SDP.\footnote{At an approximate ground state, the matrix in the PSD state optimality condition typically has many null or nearly null directions generated by stationarity and symmetry, which might yield an ill-conditioning SDP.} We thus remove the PSD state optimality condition from the SDP in this case.
Moreover, for the square-lattice $J_1$--$J_2$ Heisenberg model, we empirically observed that imposing either the PSD or the linear state optimality conditions leads to some numerical issue when solving the corresponding SDP. We thus omit both types of state optimality conditions.
\end{remark}

\section{Conclusions and future work}\label{sec:conclusions}
In this work, we have advanced the state of the art of the SDP relaxation method for certifying ground-state properties of quantum spin systems to unprecedented system sizes and accuracies, thus demonstrating its potential as a powerful tool for studying many-body physics. In particular, by extensively taking advantage of various structures of the system, we were able to get energy bounds for the square-lattice Heisenberg model of size up to $16 \times 16$. There are several directions for improving the method in future work:

$\bullet$ The DMRG is the prime tool for the study of 1D quantum many-body systems, featuring the coarse-graining maps. As shown in \cite{kull2024lower}, it is possible to combine the relaxation method with the DMRG to benefit from the coarse-graining maps.

$\bullet$ For Heisenberg models, the SU(2)-symmetry offers the moment matrix a block-diagonal structure in the corresponding symmetry-adapted basis. Incorporating this property into our method probably leads to a further reduction on the SDP size.

$\bullet$ Choosing which monomials to form the monomial basis can strongly affect the tightness of SDP relaxations. Ref.~\cite{MLbounds} proposed a deep-reinforcement-learning approach for this task. A more systematical study of this issue will be pursued in the future.

Moreover, it is also interesting to test the method developed here on systems that are notoriously difficult for variational calculations (see Ref.~\cite{wu2023variational}).

\section*{Acknowledgments}
This work was jointly funded by National Key R\&D Program of China under grant No.~2023YFA1009401, Natural Science Foundation of China under grant No.~12571333, Fundació Cellex, Fundació Mir-Puig, Generalitat de Catalunya (CERCA program), the Government of Spain (Severo Ochoa CEX2019-000910-S, QEC4QEA PCI2025-163167 and FUNQIP), the European Union (PASQuanS2.1 101113690, QEC4QEA 101194322, Quantera Veriqtas and COMPUTE 101017733), the ERC AdG CERQUTE, the AXA Chair in Quantum Information Science, and the ANR grant T-ERC QNET (ANR-24-ERCS-0008).

\bibliographystyle{siamplain}
\bibliography{references}

\newpage

\end{document}